\newcommand{\xmark}{\ding{55}}
\newtheorem{problem}{Problem}
\newtheorem{theorem}{Theorem}
\newtheorem{lemma}{Lemma}
\newtheorem{corollary}{Corollary}
\newtheorem{definition}{Definition}
\newtheorem{proposition}{Proposition}
\newcommand{\QMA}{\mathsf{QMA}}
\newcommand{\QCMA}{\mathsf{QCMA}}
\newcommand{\poly}{\mathrm{poly}}
\DeclareRobustCommand{\stirling}{\genfrac\{\}{0pt}{}}
\begin{document}

\title{Provable and Verifiable Quantum Advantage in Sample Complexity}

\date{January 30, 2026}

\author{Marcello Benedetti}
\email{marcello.benedetti@quantinuum.com}
\affiliation{Quantinuum, London, United Kingdom}

\author{Harry Buhrman}
\email{harry.buhrman@quantinuum.com}
\affiliation{Quantinuum, London, United Kingdom}
\affiliation{QuSoft, Amsterdam, The Netherlands}
\affiliation{University of Amsterdam, Amsterdam, The Netherlands}

\author{Jordi Weggemans}
\email{jrw@cwi.nl}
\affiliation{QuSoft, Amsterdam, The Netherlands}
\affiliation{CWI, Amsterdam, The Netherlands}

\begin{abstract}
Consider a fixed universe of $N=2^n$ elements and the uniform distribution over elements of some subset of size $K$. Given samples from this distribution, the task of complement sampling is to provide a sample from the complementary subset. We give a simple quantum algorithm that uses only a single quantum sample---a single copy of the uniform superposition over elements of the subset. When $K=N/2$, we show that the quantum algorithm succeeds with probability $1$, whereas any classical algorithm that succeeds with bounded probability of error requires a number of samples of the order of $N$. This shows that in a sample-to-sample setting, quantum computation can achieve the largest possible separation over classical computation. We show that the same bound can be lifted to prove average-case hardness, paving the way for demonstrations on noisy intermediate-scale quantum (NISQ) computers. It follows that under the assumption of the existence of one-way functions, complement sampling gives provable, verifiable and NISQable quantum advantage in a sample complexity setting.
\end{abstract}

\maketitle

\paragraph*{Introduction ---}
In 2019, Google claimed to have demonstrated quantum advantage for the first time, that is, their Sycamore processor performed a computational task in the order of minutes that Google estimated would take the world’s most powerful classical computer approximately 10,000 years to complete~\cite{arute2019quantum}. Their experiment was based on random circuit sampling where, given a classical description of a quantum circuit, the task is to sample in the computational basis according to the probability distribution associated with the output state. Google's achievement started a flurry of research on both ends of the spectrum: new advantage claims were made based on other sampling experiments~\cite{zhong2020quantum,madsen2022quantum,zhu2022quantum}, and new classical simulation techniques~\cite{huang2020classical,barak2021spoofing,pan2022solving,oh2023spoofing,gao2024limitations} were designed to efficiently simulate the experiments on a classical computer, challenging whether advantage had been truly achieved.

For a convincing demonstration of quantum advantage, the task at hand should ideally satisfy the following three criteria: (i) it can be solved efficiently in a near-term quantum experiment; (ii) it is provably classically hard to solve, i.e., takes superpolynomial time as the system size scales; (iii) the solution can be efficiently verified with a classical computer with minimal trust in the experiment.

For random circuit sampling, the main problem is that the verification seems unavoidably tied to the underlying classical hardness. Consequently, it seems to inherently require exponential time to classically verify the output. A recent proposal to address this challenge involves the study of random peaked circuits~\cite{aaronson2024verifiable}; however, the current understanding of these circuits is insufficient to determine their viability as a quantum advantage experiment. There exist many other proposals, like Shor's algorithm~\cite{shor1994algorithms}, Yamakawa and Zhandry search~\cite{yamakawa2024verifiable}, variational quantum eigensolvers~\cite{Peruzzo_2014}, algorithms for verifying quantumness~\cite{brakerski2021cryptographic}, and more, that all seem to satisfy at most two of the above three points.

However, quantum resources offer more than just potential computational speedups, where ultimately time is the resource that one wants to minimize.  For example, it is known (or in same cases strongly conjectured) that superpolynomial advantages exist in communication complexity~\cite{buhrman2001quantum,gilboa2023exponential}, sample complexity~\cite{bshouty1995learning,grilo2019learning} and space complexity~\cite{gavinsky2007exponential,kallaugher2024exponential}. All these advantages stem from the fact that quantum information processing not only provides potential benefits in computation, but also in the way data can be stored, communicated and accessed in a quantum setting.

In this Letter, we present a new approach that combines the two ideas that quantum computers are inherently samplers, and that they can leverage quantum resources in ways classical computers cannot for their classical counterparts. We show that in a \textit{sample-to-sample} setting, quantum computing can achieve the largest possible advantage. Proofs and technical discussions are included in Supplemental Material (SM).

\vspace{.2cm}

\paragraph*{Quantum advantage in complement sampling ---}
Let $\omega = \{0,1\}^n$ be a set of $N=2^n$ elements. We will focus on the following problem.

\begin{problem}[Complement Sampling] Given (quantum) sampling access to the uniform distribution over a subset $S \subset \omega$ with cardinality $K$, output an element $y \in \bar{S}$ where $\bar{S}$ is the complement of $S$.
\end{problem}

Complement sampling can be viewed as a sample-to-sample problem where, given samples from some set $A$ according to a distribution $\mu_A$, one has to output something from a related set $B$, according to a distribution $\mu_B$.
This is different than, for example, most problems in classical~\cite{Haussler_1992, Hanneke_2016} and quantum~\cite{Aaronson_2007, arunachalam2017guest, Coopmans_2024} machine learning that use the number of samples as a complexity measure. For a concrete example, \cite{Arunachalam_2020} studies the sample complexity of the quantum coupon collector, where the task is to fully identify a subset $A$ given quantum sampling access to its elements. 
In contrast, complement sampling does not require learning anything about the subset $A$.

In this work, we study the power of quantum samples and quantum computing over classical samples in solving complement sampling. Given a probability distribution $D : S  \subseteq \omega\rightarrow [0,1]$, a quantum sample from $D$ is defined as having access to a single copy of the state~\cite{bshouty1995learning,arunachalam2017guest}
\begin{align*}
    \ket{S_D} = \sum_{x \in S} \sqrt{D(x)}  \ket{x}.
\end{align*}
When $D$ is the uniform distribution, we simply write $\ket{S}$. Measuring copies of $\ket{S_D}$ in the computational basis is identical to classical sampling from the distribution $D(x)$. Thus there would be no advantage to a classical algorithm in having access to the quantum samples if all the classical algorithm can do is computational basis measurements.

From a purely information-theoretic perspective, it is not immediately clear that quantum samples should provide any advantage to a quantum algorithm either. Since the required output for complement sampling is a classical $n$-bit string, it is natural to compare the amount of classical information received per sample. Holevo's theorem guarantees that the maximum information content per sample is the same for quantum and classical samples: both yield at most $n$ bits of classical information~\cite{holevo1973bounds}. Nevertheless, we prove that quantum samples can offer a dramatic advantage over classical samples in complement sampling: our main result is a characterization of the quantum and classical sample complexity in solving this task with certain success probabilities. 

\begin{theorem}[Informal, from Theorems 5 and 6 in SM] Let $1 \leq K \leq N-1$ be the cardinality of a subset $S \subset \omega$. Then there is a polynomial-time quantum algorithm that uses only $1$ quantum sample and solves complement sampling with probability 
\begin{align*}
    \frac{\min\{K,N-K\} }{\max\{K,N-K\}}.
\end{align*}
Let $\delta \in [0,\frac{1}{2}]$. Then any classical algorithm that solves complement sampling with success probability $\geq \frac{1}{2} + \delta$ needs at least
\begin{align*}
    N-\frac{2 (N-K)}{2 \delta + 1},
\end{align*}
distinct classical samples.
\label{thm:main_result_intro}
\end{theorem}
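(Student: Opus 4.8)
The plan is to prove the two halves separately. For the quantum upper bound I would work inside the two-dimensional subspace spanned by the single quantum sample $\ket{S}=\frac{1}{\sqrt{K}}\sum_{x\in S}\ket{x}$ and the complement state $\ket{\bar S}=\frac{1}{\sqrt{N-K}}\sum_{x\in\bar S}\ket{x}$. The only $S$-independent resource living in this plane is the uniform state $\ket{\omega}=H^{\otimes n}\ket{0^n}$, with $\braket{\omega}{S}=\sqrt{K/N}$ and $\braket{\omega}{\bar S}=\sqrt{(N-K)/N}$. First I would record the identity for the reflection $R_\omega=H^{\otimes n}(2\ket{0^n}\!\bra{0^n}-I)H^{\otimes n}$, namely $R_\omega\ket{S}=\frac{2K-N}{N}\ket{S}+\frac{2\sqrt{K(N-K)}}{N}\ket{\bar S}$; at $K=N/2$ the $\ket{S}$ component vanishes, so $R_\omega\ket{S}=\ket{\bar S}$ exactly and a computational-basis measurement returns a uniform element of $\bar S$ with probability $1$. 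For general $K$ I would replace the bare reflection by a heralded variant: adjoin one ancilla and apply the $S$-independent, permutation-covariant unitary $V$ that acts as one $2\times 2$ rotation on the $\ket{\omega}$-component of the register and a second rotation on its orthogonal complement, with the two parameters tied so that the $\ket{1}$-branch of the ancilla is proportional to $\ket{\bar S}$. Expanding $\ket{S}$ in the $\{\ket{\omega},\ket{\omega}^{\perp}\}$ basis and collecting the heralded amplitude, the success probability is $\frac{K}{N-K}\,|a|^2$ for a free parameter obeying $|a|\le\min\{1,(N-K)/K\}$ (the bound coming from unitarity); optimizing yields exactly $\frac{\min\{K,N-K\}}{\max\{K,N-K\}}$, and conditioned on the herald the register is precisely $\ket{\bar S}$.

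For the classical lower bound I would invoke Yao's principle and analyze deterministic algorithms against the uniform prior in which $S$ is a uniformly random $K$-subset of $\omega$. The central observation is that the only information a stream of i.i.d.\ uniform samples from $S$ conveys about $\bar S$ is the \emph{set} $T\subseteq S$ of distinct elements seen: conditioned on $T$, the multiplicities and arrival order form a multinomial supported on $T$ and are therefore independent of $\bar S$, which is disjoint from $T$. I would then establish that the posterior is exchangeable on the unseen universe: given $\{T\subseteq S,\ |T|=m\}$, the subset $S$ is uniform among $K$-subsets containing $T$, so $S\setminus T$ is a uniformly random $(K-m)$-subset of the $N-m$ unseen elements, and every unseen element is equally likely, with probability $\frac{N-K}{N-m}$, to lie in $\bar S$.

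From exchangeability the optimal output rule is immediate: any element of $T$ succeeds with probability $0$, while every unseen element succeeds with the common probability $\frac{N-K}{N-m}$, so the best algorithm using $m$ distinct samples succeeds with probability exactly $\frac{N-K}{N-m}$. Imposing $\frac{N-K}{N-m}\ge\frac12+\delta$ and solving for $m$ gives $m\ge N-\frac{2(N-K)}{2\delta+1}$; contrapositively, fewer distinct samples force the success probability strictly below $\frac12+\delta$, which is the claimed bound.

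The routine parts are the two linear-algebra evaluations in the quantum plane and the final algebraic rearrangement. The step I expect to carry the real weight is the classical reduction: making rigorous that adaptivity and internal randomness buy nothing, i.e.\ that the distinct-sample set is a sufficient statistic and that the exchangeable posterior leaves no output strategy better than a blind unseen guess. I would handle adaptivity by noting that each query distribution depends only on the transcript, whose $\bar S$-relevant content is precisely $T$, and randomness by Yao's minimax, so the average-case bound against the uniform prior transfers to the worst case while simultaneously supplying the average-case hardness advertised for the NISQ demonstration.
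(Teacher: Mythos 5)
Your proposal is correct and in substance identical to the paper's own proofs: your heralded block unitary acting with amplitude $a$ on the $\ket{\omega}$-component and a tied amplitude on its orthogonal complement is exactly the paper's zero-error swapper, which postselects the operator $qU+(-1)^b(1-q)\mathbb{I}$ with $U=2\dyad{+^n}-\mathbb{I}$ (realized with one ancilla via $W(q)$, controlled-$U$, and $Z^b$), and the same unitarity constraint you invoke yields the optimized probability $\min\{K,N-K\}/\max\{K,N-K\}$ with the zero-error guarantee. Your classical argument --- Yao's principle plus the observation that, conditioned on the set of $m$ distinct observed elements, the posterior leaves every unseen element in $\bar S$ with probability $\frac{N-K}{N-m}$ so no output rule can beat this --- is precisely the paper's Lemma~2 and Theorem~6, with the only cosmetic difference that the paper works in an index-query oracle model and you instead argue sufficiency of the distinct-sample set directly in the sampling model.
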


The classical lower bound is exactly matched by a simple randomized algorithm: it outputs a uniformly random element from the set of all possible elements, excluding those observed in the classical samples. The quantum algorithm, which we describe later, solves complement sampling by doing something even stronger than what is stated in Theorem~\ref{thm:main_result_intro}: it (approximately)
\begin{equation*}
    \ket{S} = \frac{1}{\sqrt{K}} \sum_{x \in S} \ket{x} \; \longrightarrow \; \ket{\bar{S}} = \frac{1}{\sqrt{N - K}} \sum_{x \notin S} \ket{x} ,
\end{equation*}
from which a sample from $\bar{S}$ is simply obtained by performing a measurement in the computational basis.

Hereafter, we use Landau symbols $\mathcal{O}$ and $\Omega$ for asymptotic upper and lower bounds, respectively. For cases where polylogarithmic terms are omitted, we use $\tilde{\mathcal{O}}$ and $\tilde{\Omega}$ instead. Theorem~\ref{thm:main_result_intro} shows that for $K=N/2$, the classical sample complexity is $\Omega(N)$ for any $\delta = \Omega(1)$ whilst the quantum algorithm succeeds with probability $1$ using only a single quantum sample. From a theoretical perspective, this is a remarkable feat: it shows that in a sample-to-sample setting, quantum computation can achieve the largest possible separation from classical computation, namely a gap of $1$ versus a linear (in $N$) number of samples. This contrasts with, for example, the case of black-box query complexity, as Aaronson and Ambainis~\cite{aaronson2015forrelation} showed that no partial Boolean function has quantum query complexity $1$ while requiring a linear number of randomized queries~\footnote{They proved that the separation can be at most $1$ versus $\tilde{\Omega}(\sqrt{N})$, which is achieved by the Forrelation problem. For a constant number of queries, one can achieve $\mathcal{O}_\epsilon(1)$ versus $\Omega(N^{1-\epsilon})$ for any $\epsilon >0$~\cite{bansal2021k}. Here $\mathcal{O}_{\epsilon}(f(n))$ means $\mathcal{O}(f(n))$ where the hidden constant is allowed to depend on $\epsilon$, which is treated as a fixed parameter.}

Using Kolmogorov complexity we also show an explicit circuit lower bound on any classical sampler that uses only a polynomial number of samples and returns samples uniformly at random from the complement (which the quantum algorithm allows one to do).

\begin{theorem}[Informal, from Theorem 8 in SM] Let $K=N/2$ and $N=2^n$. For $n \geq 1$ sufficiently large, any sampler that takes as input $\poly(n)$ classical samples from $S$ and outputs samples from $\bar{S}$ uniformly at random must have circuit complexity at least $\tilde{\Omega}(N)$.
\end{theorem}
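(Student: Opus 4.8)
The plan is to turn a small sampler into a short description of a random (hence incompressible) subset, contradicting a counting bound. First I would fix a \emph{hard} instance: since there are $\binom{N}{N/2}$ subsets of size $K=N/2$, a standard counting argument shows that all but a vanishing fraction of them satisfy $K(S \mid n) \geq \log_2 \binom{N}{N/2} - O(1) = N - O(\log N)$, where $K(\cdot)$ denotes (conditional) Kolmogorov complexity and the last equality is Stirling. I fix one such incompressible $S$; note that $K(\bar{S} \mid n) = K(S \mid n)$ up to $O(1)$, since $\bar{S} = \omega \setminus S$ is computable from $S$ and $n$.

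Next, suppose toward a contradiction that some sampler, realized by a circuit $C$ of size $s$, takes $m = \poly(n)$ classical samples from the uniform distribution on $S$ and outputs an element of $\bar{S}$ distributed exactly uniformly. The key observation is that the sampler \emph{reveals all of} $\bar{S}$: fix any admissible input, i.e. a list $x_1,\dots,x_m \in S$ on which $C$ is correct (the existence of which I return to below). Running $C$ on $(x_1,\dots,x_m)$ over \emph{all} settings of its internal random tape and collecting the set of distinct outputs reconstructs the full support of the output distribution, which is exactly $\bar{S}$ because the output is uniform over $\bar{S}$. Taking the complement recovers $S$.

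This yields a short program for $S$: a description of the circuit $C$, the list of $m$ input strings, and a fixed $O(1)$-size routine performing the enumeration-and-complement above. A size-$s$ circuit over a fixed fan-in-two basis is describable in $O(s \log s)$ bits, and the $m$ input strings cost $m \cdot n = \poly(n)$ bits, so
\begin{equation*}
K(S \mid n) \leq O(s \log s) + \poly(n) + O(1).
\end{equation*}
Combining with $K(S \mid n) \geq N - O(\log N)$ gives $s \log s = \Omega\big(N - \poly(n)\big) = \Omega(N)$, since $\poly(n) = \poly(\log N) = o(N)$. We may assume $s \leq N$ (otherwise the bound is immediate), whence $\log s \leq n$ and therefore $s = \Omega(N/\log N) = \tilde{\Omega}(N)$, as claimed.

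The main obstacle I anticipate is not the compression step itself but the robustness of support recovery. The argument above uses exact uniformity so that the observed support is precisely $\bar{S}$; if one only assumes the sampler is $\varepsilon$-close to uniform (the physically relevant regime for a NISQ demonstration), the recovered support may omit low-probability elements of $\bar{S}$ or include stray ones, and one must instead reconstruct $\bar{S}$ from empirical frequencies via a concentration/union-bound argument, tracking how $\varepsilon$ and the sampling overhead propagate into the final bound. A secondary point to check is the bookkeeping of description lengths---in particular that an admissible input on which $C$ is exactly correct genuinely exists (by averaging, as soon as correctness holds for a positive fraction of inputs) and that writing it down costs only $\poly(n)$ rather than $\Omega(N)$ bits---but this is routine once the exact-uniformity case is in place.
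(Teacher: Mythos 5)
Your proposal is correct and follows essentially the same route as the paper's proof of Theorem 8: fix a Kolmogorov-incompressible instance, observe that a correct uniform sampler together with its $\poly(n)$-bit input list constitutes a description of $\bar{S}$ (by enumerating the support over the random tape), and convert the resulting $\Omega(N)$-bit lower bound on description length into an $\tilde{\Omega}(N)$ circuit bound via the $\mathcal{O}(s \log s)$ encoding of size-$s$ circuits. The only cosmetic difference is how incompressibility is instantiated---you use the counting bound $K(S \mid n) \geq \log_2 \binom{N}{N/2} - \mathcal{O}(1)$ over all subsets, whereas the paper embeds a Kolmogorov-random string $z \in \{0,1\}^{N/2}$ into the labels of $\bar{S}$---and both yield the same conclusion.
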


The Kolmogorov complexity argument is particularly insightful of why complement sampling works in a quantum setting, as it highlights the key difference between classical and quantum samples. Classical samples can be cloned and reused indefinitely. For example, they can be used to construct a circuit that describes $S$ and, in turns, a sampler for $\bar{S}$. In comparison, our quantum algorithm \emph{destroys} the quantum sample in the process of providing a sample from the complement. This means that the quantum algorithm needs a new input for every generated output. 

\vspace{.2cm}

\paragraph*{The quantum algorithm ---}
Our initial inspiration comes from the construction by Aaronson, Atia and Susskind~\cite{aaronson2020hardness}. The authors show how several notions of complexity on quantum states are related in intricate ways. Consider two arbitrary orthogonal states $\ket{a}$ and $\ket{b}$ spanning a two-dimensional space. The conjugate (or Fourier) basis within this subspace is given by the equal superpositions $\ket{\phi^+} = \left(\ket{a}  + \ket{b}\right)/\sqrt{2}$ and $\ket{\phi^-} = \left(\ket{a}  - \ket{b}\right)/\sqrt{2}$. 
Their main result is that a circuit which (approximately) swaps $\ket{a}$ to $\ket{b}$ and back can be used to build another circuit which (approximately) distinguishes between $\ket{\phi^+}$ and $\ket{\phi^-}$. Vice versa, a distinguisher for $\ket{\phi^+}$ and $\ket{\phi^-}$ can be used to make a swapper for $\ket{a}$ and $\ket{b}$. This implies that the \emph{swapping complexity} of two orthogonal states is related to the \emph{distinguishing complexity} of the conjugate states. In particular, an efficient implementation of one implies an efficient implementation of the other.

How can this be used for the complement sampling problem? A Boolean function $f$ on the domain $\omega$ defines the phase state 
\begin{align*}
     \ket{y_{f}} = \frac{1}{\sqrt{N}} \sum_{x \in \omega} (-1)^{f(x)} \ket{x} .
\end{align*}
Consider the case of subsets of cardinality $K=N/2$. It can be observed that any subset state and its complementary state can be expressed using phase states
\begin{align*}
    \ket{S} &= \frac{1}{\sqrt{2}} \left( \ket{y_{f_\mathrm{con}}} + \ket{y_{f_\mathrm{bal}}} \right) ,\\ 
    \ket{\bar{S}} &= \frac{1}{\sqrt{2}} \left( \ket{y_{f_\mathrm{con}}} - \ket{y_{f_\mathrm{bal}}} \right),
\end{align*}
where $f_\mathrm{con}$ is a constant function (it maps all inputs $x \in \omega$ to $0$) and $f_\mathrm{bal}$ is a balanced function (half of the inputs in $\omega$ get mapped to $0$ and the other half get mapped to $1$). As the two states are orthogonal, the conjugate states are
\begin{align*}
    \ket{y_{f_\mathrm{con}}} &= \frac{1}{\sqrt{2}} \left( \ket{S} + \ket{\bar{S}} \right), \\ 
    \ket{y_{f_\mathrm{bal}}} &= \frac{1}{\sqrt{2}} \left( \ket{S} - \ket{\bar{S}} \right).
\end{align*}
But these are just the postquery states of the Deutsch-Jozsa algorithm~\cite{deutsch1992rapid}, and therefore can be efficiently distinguished with probability $1$. Figure~\ref{fig:circuits}(a) illustrates the distinguisher circuit. Using the Aaronson, Atia and Susskind construction, the Deutsch-Jozsa distinguisher can be turned into a swapper for $\ket{S}$ and $\ket{\bar{S}}$. The swapper is illustrated in Fig.~\ref{fig:circuits}(b), and derived in Sec.~I of SM. We can now argue that no perfect swapper for $\ket{S}$ and $\ket{\bar{S}}$ exists for values $K \neq N/2$. Indeed, only for $K=N/2$ we have that any two different subsets $S_1$, $S_2$ have perfectly distinguishable orthogonal conjugate states (see Proposition 2 in SM). If a perfect swapper would exist for $K \neq N/2$, one could violate the distinguishability limits given by the Helstrom bound~\cite{helstrom1969quantum}.

\begin{figure}
    \centering
    \includegraphics[width=\linewidth]{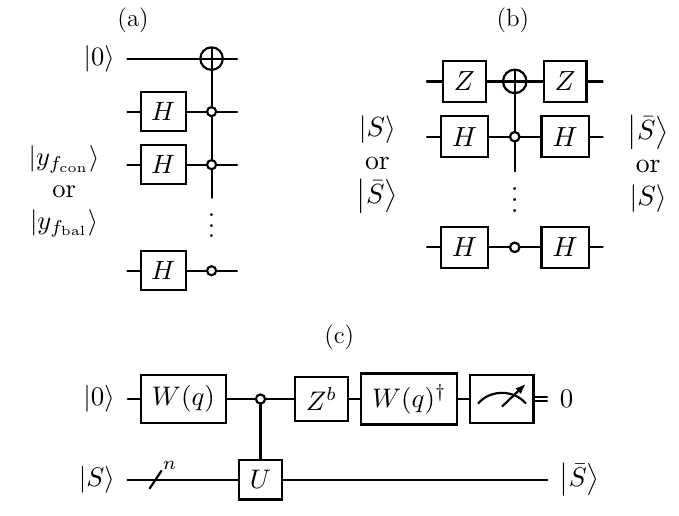}
    \caption{Distinguisher and swappers for complement sampling. (a) Our distinguisher can be thought of as the Deutsch-Jozsa algorithm followed by a copy of the measurement outcome into an ancilla qubit. (b) The complement swapper can be thought of as the Grover diffusion operator $U = 2 \dyad{+^n} - \mathbb{I}$ up to a global phase. (c) The zero-error swapper uses the controlled-$U$, the gate $W(q) = e^{i \arccos(\sqrt{q}) Y}$, and an ancilla qubit that upon measurement outcome 0 indicates correctness. The parameters are set to $b = 0$, $q = \frac{1}{2(1 - K/N)}$ if $K < \frac{N}{2}$, and to $b = 1$, $q = \frac{N}{2K}$ if $K \geq \frac{N}{2}$.}
\label{fig:circuits}
\end{figure}

By inspecting the swapper circuit based on the Deutsch-Jozsa distinguisher, we find that, restricted to the register where $\ket{S}$ is held, it implements a reflection around the uniform superposition of basis states. Up to a global phase, this is the Grover diffusion operator $U = 2 \dyad{+^n} - \mathbb{I}$. With this new insight we generalize the algorithm to $K \neq N/2$ in two different ways. The first, which we simply call \emph{complement swapper}, directly applies $U$ to any $\ket{S}$. For values of $K$ a constant fraction away from $N/2$ the resulting state has  constant overlap with the complement $\ket{\bar{S}}$. The second algorithm, which we call \emph{zero-error swapper}, uses an additional ancilla as a flag qubit and a controlled application of $U$. This is illustrated in Fig.~\ref{fig:circuits}(c).

Figure~\ref{fig:hitting_rate} compares the performance of four algorithms as a function of $\beta \in [-1/2, 1/2]$, which we define as the deviation from the ideal subset size ratio $K/N = 1/2 + \beta$. The complement swapper outputs complementary samples with probability $1 - 4\beta^2$ (solid red line), but does not provide information on the correctness of the sample. The zero-error swapper outputs complementary samples with probability $\frac{1 - 2|\beta|}{1 + 2|\beta|}$ (dashed blue line), with the guarantee that the output is correct (i.e., it is a \emph{Las Vegas} algorithm). Theorem 5 in SM shows that this is the optimal success probability for any zero-error algorithm when $K \geq N/2$, as well as when no extra auxiliary qubits are allowed (except for the flag qubit) and $K < N/2$. In our proof we construct the hardest instances of $S$ in the sense of maximizing the violation of the conservation of inner products under unitary transformations, which puts an upper bound on the achievable success probability. The quantum algorithm for the coupon collector~\cite{Arunachalam_2020} can also be used for complement sampling (purple dash-dotted line), but it achieves half the probability of success of our complement swapper. Our improvement arises from the coherent reflection around the uniform superposition, whereas the quantum coupon collector performs a measurement $\{\ketbra{+^n},\mathbb{I}-\ketbra{+^n}\}$ on $\ket{S}$, followed by a measurement in the computational basis. The authors~\cite{Arunachalam_2020} additionally consider a stronger input model which gives access to the state preparation circuit for $\ket{S}$. We do not compare against this version of their algorithm because complement sampling does not allow access to the state preparation circuit. Finally, the success probability of the optimal classical algorithm based on random guessing is $\frac{1/2 - \beta}{1 - 1/N} \approx \frac{1}{2} - \beta$ (green dotted line), which is large for negative deviations.

\begin{figure}
    \centering
    \includegraphics[width=\linewidth]{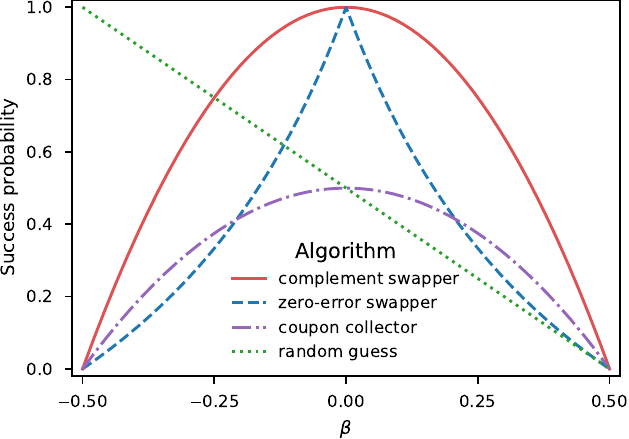}
    \caption{Comparison of complement sampling algorithms given a single sample as input. The vertical axis is the probability of obtaining a sample from the complementary set. The horizontal axis is the deviation from the ideal subset size ratio $K/N = 1/2 + \beta$. When $\beta=0$ the quantum algorithms based on swappers are exact.}
    \label{fig:hitting_rate}
\end{figure}

\vspace{.2cm}

\paragraph*{Towards experimental demonstrations ---}
So far we have only discussed worst-case classical hardness results, which are problematic if one also wants to instantiate the problem in an actual experiment. Even worse, one might argue that these hard instances hide all their complexity in the state $\ket{S}$, a comment that we justify in Sec.~III of SM by proving that most states of the form $\ket{S}$ with $K=N/2$ have exponential quantum circuit complexity~\footnote{This circuit lower bound was to be expected under the assumption that $\QMA \neq \QCMA$, because if all subset states could be (approximately) constructed with a polynomially-sized circuit then we would have that $\QCMA= \QMA$ by the result of~\cite{grilo2015qma}.}. To show that complement sampling is indeed capable of demonstrating quantum advantage, we show that the problem remains hard for most choices of $S$, as long as there is no particular structure to them. We focus on the case where $K=N/2$. Consider the family of subsets $\mathcal{S} = \{ S^k\}_k$ where each subset is given by $S^k= \{y : y= P^k (x) \text{ and the first bit of } x\in \omega \text{ is } 0 \}$ for some strong pseudorandom permutation $P^k$ with a key $k \in \{0,1\}^{\lambda}$, and $\lambda \in \mathbb{N}$ is the security parameter. We prove the following result.

\begin{theorem}[Informal, from Theorem 7 in SM] Let $K=N/2$ and $N=2^n$. Let $A$ be a polynomial time algorithm that has sampling access to $S$. Under the assumption of the existence of one-way functions, for all such $A$ it must hold that
\begin{align*}
    \Pr \left[A \textup{ outputs a } y \in \bar{S}\right] < \frac{1}{2} + \frac{1}{\poly(n)}.
\end{align*}
\label{thm:intro_sPRP}
\end{theorem}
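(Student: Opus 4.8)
The plan is to prove this average-case statement by a reduction from the security of the strong pseudorandom permutation (sPRP). Concretely, I would assume toward a contradiction that there exists a polynomial-time algorithm $A$ which, given $\poly(n)$ samples from $S^k$ for a uniformly random key $k$, outputs an element of $\overline{S^k}$ with probability at least $\frac{1}{2} + \frac{1}{\poly(n)}$. From $A$ I would build a distinguisher $D$ that breaks the sPRP, contradicting its security. Since the existence of one-way functions implies pseudorandom generators, pseudorandom functions, and in turn strong pseudorandom permutations (via the Luby--Rackoff / Feistel construction), this chain suffices to establish the theorem.

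The distinguisher $D$ is given oracle access to a permutation and its inverse, where the permutation is either $P^k$ for a random key $k$ or a truly random permutation $\pi$. To feed $A$, the distinguisher generates each sample by drawing a uniformly random $x \in \omega$ whose first bit is $0$ and querying the forward oracle; when the oracle is $P^k$ these are exactly uniform samples from $S^k$, and when the oracle is $\pi$ they are uniform samples from the uniformly random subset $S = \{\pi(x) : \text{first bit of } x \text{ is } 0\}$ of size $N/2$. After $A$ returns an output $y$, the distinguisher queries the inverse oracle to obtain $x' = \text{oracle}^{-1}(y)$ and checks whether the first bit of $x'$ equals $1$, which holds exactly when $y$ lies in the complement; it outputs $1$ in that case and $0$ otherwise. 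The need to verify membership via the inverse oracle is precisely why the strong (rather than ordinary) PRP assumption is required.

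The two halves of the analysis then proceed as follows. When the oracle is $P^k$, the probability that $D$ outputs $1$ equals the success probability of $A$ on the sPRP family, which by assumption is at least $\frac{1}{2} + \frac{1}{\poly(n)}$. When the oracle is the random permutation $\pi$, the subset $S$ is uniformly random of size $N/2$, and I would bound the success probability of any algorithm seeing $m = \poly(n)$ samples: conditioned on the (at most $m$ distinct) observed elements, all of which are known to lie in $S$, the posterior distribution over the membership of any unobserved element is exchangeable, so each unseen element lies in the complement with probability $\frac{N/2}{N - m}$, while any already-observed element lies in the complement with probability $0$. Hence the best possible success probability in this case is at most $\frac{N/2}{N-m} = \frac{1}{2(1 - m/N)} \le \frac{1}{2} + \text{negl}(n)$, since $m/N = \poly(n)/2^n$ is negligible. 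Subtracting, the distinguishing advantage of $D$ is at least $\frac{1}{\poly(n)} - \text{negl}(n)$, which is non-negligible, contradicting the security of the sPRP.

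The main obstacle I anticipate is the clean rigorous treatment of the random-permutation case: one must argue that an arbitrary polynomial-time algorithm cannot exploit the $m$ i.i.d.\ samples to beat the exchangeable posterior---in particular that outputting an already-observed element is never helpful, and that no subtle correlation among the samples can push the success probability above $\frac{1}{2} + \text{negl}(n)$---so that it genuinely collapses to essentially $\frac{1}{2}$. A secondary technical point is fixing the relationship between the security parameter $\lambda$ and the instance size $n$, so that $P^k$ is a secure sPRP on $n$-bit strings against $\poly(n)$-time adversaries while the resulting distinguishing advantage remains non-negligible in the relevant parameter.
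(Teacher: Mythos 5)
Your proposal is correct and follows essentially the same route as the paper: a reduction that turns $A$ into an sPRP distinguisher which generates samples via forward queries on first-bit-$0$ inputs and verifies $A$'s output with the inverse oracle, combined with an information-theoretic bound of $\frac{N-K}{N-q} \le \frac{1}{2} + \textup{negl}(n)$ in the truly random case. The obstacle you flag is exactly what the paper settles with its Bayes-rule lemma (conditioned on the distinct observed elements, the posterior over complements is uniform) together with its Yao-based query lower bound and average-case corollary instantiated at $\delta = 1/\sqrt{N}$, so your direct posterior argument is the same key step, just stated without the worst-case detour.
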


Equipped with the above results, we propose an experiment to demonstrate the advantage of quantum over classical resources in a \emph{provable}, \emph{verifiable} and \emph{NISQable} manner.
For a total of $r$ rounds, a player and referee play the following interactive game: (1) The referee picks a pseudorandom permutation $P$ with inverse $P^{-1}$ and defines $S = \{y : y= P(x)$ and the first bit of $x\in \omega$ is $0\}$. (2) The (quantum) player asks for $j$ (quantum) samples corresponding to the uniform distribution over elements in $S$. A quantum player can perform coherent operations on the quantum samples. The player sends back a candidate element $\hat{y}$ supposedly from $\bar{S}$. (3) The referee verifies that $\hat{y} \in \bar{S}$ by checking if $\hat{x} = P^{-1}(\hat{y})$ has the first bit equal to $1$.

A sketch of the proposed experiment is provided in Fig.~\ref{fig:CS_summarized}. The best strategy for the classical player is to output any sample different from the observed samples (we prove this in Sec.~II of SM). For $j=1$, this random guessing strategy has success probability of $\tfrac{2^{n-1}}{2^n-1}$, quickly approaching $1/2$ for large $n$. By repeating the experiment a total of $r$ times, each time with a different pseudorandom permutation, the probability of succeeding in all rounds is approximately $1/2^r$. On the other hand, the quantum player can use our algorithm to obtain $\ket{\bar{S}}$ from $\ket{S}$. This strategy succeeds with probability $1$ at each of the $r$ rounds. In reality, the quantum player only needs to succeed with probability $\geq \frac{1}{2} + 1/\poly(n)$ at each round in order to beat the classical player over $r = \poly(n)$ rounds. Thus, the player can tolerate a small rate of noise in the quantum computation. For example, in the simplistic noise model where each gate fails with probability $\lambda$, the overall circuit fidelity is of the order of $(1-\lambda)^m$ where $m$ is the number of gates. As we will see, our quantum circuit only needs $\mathcal{O}(n)$ gates and $\mathcal{O}(\log n)$ circuit depth; the bottleneck being the $n$-qubit Toffoli gate, which can be implemented in $\mathcal{O}(\log n)$ circuit depth with $1$ additional ancilla using the construction from~\cite{nie2024quantum}. 

The preparation of the state $\ket{S}$ by the referee is computationally more expensive as it requires the implementation of pseudorandom permutations. As a near-term toy example, one could implement permutations using the Simplified Advanced Encryption Standard (S-AES) protocol. S-AES uses a block size of $16$ bits (as opposed to $128$ bits for AES) and a key size of $16$ bits as well ($128$, $192$ or $256$ for AES). A system size of $16$ can display features of quantum advantage: from Theorem~\ref{thm:main_result_intro} with, e.g.,  $N=2^{16}, K=2^{15}$ and $\delta=1/6$, any classical algorithm solving complement sampling requires $2^{14} = 16384$ distinct samples. A quantum circuit implementation of S-AES based on~\cite{wang2022quantum} requires only $48$ qubits, $168$ Toffoli gates, $364$ controlled NOT gates, and $75$ NOT gates. Beyond $n=16$ one could resort to \emph{approximate} pseudorandom permutations built from layers of random reversible 3-bit gates. Exponentially small error can be achieved in depth linear in $n$ when using one-dimensional nearest-neighbor layers, and in depth $\sqrt{n}$ when using two-dimensional layers~\cite{gay2025pseudorandomness}. When noise affects the computation, the referee is allowed to use quantum error detection codes~\cite{knill2004postselected} before sending the state $\ket{S}$ to the player. By discarding circuit runs where an error is detected, the referee is expected to increase the fidelity of the state preparation. This comes at the cost of introducing more qubits, gates, and runs, but near-term protocols have shown promising results with low overheads~\cite{self2024protecting}.

We expect that in the first complement sampling experiments all computations from both referee and player will be performed on the same device. Smart choices of families of subsets, and the use of circuit optimizations and error detection, may lead to a NISQable advantage experiment. 

\begin{figure}
    \centering
    \includegraphics[width=\linewidth]{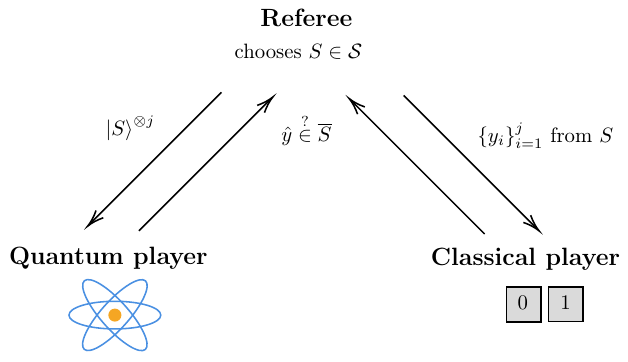}
    \caption{A single round of the complement sampling game with $j$ input samples. The referee picks subset $S$ from a family of subsets of fixed cardinality $\mathcal{S}$. The quantum player gets access to $j$ copies of the state $\ket{S}$, and the classical player gets access to $j$ elements sampled according to the uniform distribution over $S$. Giving the classical player access to measurements of $\ket{S}$ in the computational basis would yield the same input model. Giving the quantum player access to classical samples does not yield any advantage over the classical player. Each player sends back a candidate element $\hat{y}$ and the referee checks whether it belongs to the complementary subset $\bar{S}$.}
    \label{fig:CS_summarized}
\end{figure}

\vspace{.2cm}

\paragraph*{Discussion and future work ---}
In this work we have shown that in a sample-to-sample setting quantum algorithms provide the largest possible advantage over their classical counterparts. The classical hardness of the problem is very robust: even instances that are generated using strong pseudorandom permutations must be classically hard under the assumption of the existence of one-way functions. On the other hand, the quantum advantage vanishes if classical samples are provided as input instead of the quantum samples. This indicates that the present approach does not provide a purely computational quantum advantage, where both the input and output to a problem are classical.

Our work opens up a new direction in demonstrating an advantage of quantum over classical resources. Our back-of-the-envelope resource estimation shows that a proof-of-concept demonstration of complement sampling is NISQable, but further circuit optimizations are likely needed to solve instances that are truly classically intractable. Progress on this matter was recently reported in Ref.~\cite{benedetti2025unconditional}, where a refined version of the complement sampling game was successfully executed on quantum hardware.

Another topic for future work is the search for other applications of complement sampling. In particular, in the field of cryptography, we believe that there exist applications that can utilize the power of quantum computing for complement sampling.

\vspace{.2cm}

\paragraph*{Acknowledgments ---}
We thank Florian Curchod and Pranav Kalidindi for providing feedback on an earlier version of this paper. J.W. was supported by the Dutch Ministry of Economic Affairs and Climate Policy (EZK), as part of the Quantum Delta NL program.

\vspace{.2cm}

\let\oldaddcontentsline\addcontentsline
\renewcommand{\addcontentsline}[3]{}
\bibliography{bibliography}
\let\addcontentsline\oldaddcontentsline

\clearpage
\onecolumngrid
\begin{center}
	\textbf{\large{Supplemental Material for}}\\
    \textbf{\large{``Provable and Verifiable Quantum Advantage in Sample Complexity''}}\\
    \vspace{.5cm}
    Marcello Benedetti,\textsuperscript{1} Harry Buhrman,\textsuperscript{1, 2, 3} and Jordi Weggemans\textsuperscript{2, 4}\\
    \small{\textit{$^1$Quantinuum, London, United Kingdom\\
    $^2$QuSoft, Amsterdam, The Netherlands\\
    $^3$University of Amsterdam, Amsterdam, The Netherlands\\
    $^4$CWI, Amsterdam, The Netherlands}}
\end{center}

\tableofcontents

\vspace{1cm}

\section{Complement sampling using quantum samples}
\label{sec:cs_quantum}

We consider the following sampling problem. For some $n \in \mathbb{N}$, let $S \subset \omega:= \{0,1\}^n $ be a non-empty subset of cardinality $K$ out of a set of $N := 2^n$ elements. Let $\bar{S} = \omega \setminus S $ be the complement set with cardinality $N-K$. Given access to a quantum sample $\ket{S}$ of the uniform distribution of the elements in $S$, the task of complement sampling is to return \emph{any} element from $\bar{S}$. 

In this section we discuss solutions to this problem where a quantum computer swaps the input state as 
\begin{equation*}
    \ket{S} = \frac{1}{\sqrt{K}} \sum_{x \in S} \ket{x} \qquad \longrightarrow \qquad \ket{\bar{S}} = \frac{1}{\sqrt{N - K}} \sum_{x \notin S} \ket{x} .
\end{equation*}
Then, a measurement in the computational basis yields a uniformly random sample from $\bar{S}$.

Hereafter, we use Landau symbols $\mathcal{O}$ and $\Omega$ for asymptotic upper and lower bounds, respectively. For cases where polylogarithmic terms are omitted, we use $\tilde{\mathcal{O}}$ and $\tilde{\Omega}$ instead. For $N$ fixed, we write $\mathcal{S}_K$ for the family of all subsets $S \subset \omega$ with cardinality $K$.

\subsection{Aaronson, Atia and Susskind's construction}
\label{ss:AAS}

We will start by reviewing Aaronson, Atia and Susskind's construction~\cite{aaronson2020hardness}, which formed our basis to arrive at our solution for the quantum sample variant to complement sampling. For most results in this work, except for~\cref{sec:quantum_cc_and_d}, we found alternative proofs and easier constructions without relying on this construction. However, we think that poofs based on this construction provide the best intuition for why complement sampling can be perfectly solved on a quantum computer when $K=N/2$.

We start by recalling the four notions of complexities on quantum states from~\cite{aaronson2020hardness}: relative complexity, circuit complexity, swap complexity and distinguishability complexity.

\begin{definition}[Relative complexity]
    The relative complexity $\mathcal{C}_\epsilon (\ket{a},\ket{b})$ of two $n$-qubit pure quantum states $\ket{a}$,$\ket{b}$ is defined as the minimal number of gates in a circuit $C$ such that 
    \begin{align*}
        \abs{ \bra{b} \bra{0 \dots 0} C \ket{a} \ket{0 \dots 0}} \geq 1-\epsilon.
    \end{align*}
\end{definition}

\begin{definition}[Circuit complexity]
    The circuit complexity $\mathcal{C}_\epsilon (\ket{a})$ of a $n$-qubit pure quantum state $\ket{a}$ is defined as the relative complexity with the $n$-qubit all zero state $\ket{0 \dots 0}$, i.e.,
    \begin{align*}
        \mathcal{C}_\epsilon (\ket{a}) =  \mathcal{C}_\epsilon (\ket{0 \dots 0},\ket{a}).
    \end{align*}
\end{definition}

\begin{definition}[Swap complexity]
    The swap complexity $\mathcal{S}_\epsilon(\ket{a},\ket{b})$ of two $n$-qubit pure quantum states $\ket{a}$,$\ket{b}$ is defined as the minimal number of gates in a circuit $C$ such that 
    \begin{align*}
        \frac{1}{2} \big| \bra{a} \bra{0 \dots 0} C \ket{b} \ket{0 \dots 0} + \bra{b} \bra{0 \dots 0} C \ket{a} \ket{0 \dots 0} \big| \geq 1-\epsilon .
    \end{align*}
    \label{def:swap_complexity}
\end{definition}

\noindent It can be verified that $\mathcal{S}_\epsilon \geq \mathcal{C}_\epsilon$ holds for any $0 \leq \epsilon \leq 1$~\cite{aaronson2020hardness}.

\begin{definition}[Distinguishability complexity]
    The distinguishability complexity $\mathcal{D}_\epsilon(\ket{a},\ket{b})$ of two $n$-qubit pure quantum states $\ket{a}$,$\ket{b}$ is defined as the minimal number of gates in a circuit $C$ such that 
    \begin{align*}
        \abs{\norm{ \Pi_1 C \ket{a} \ket{0 \dots 0} }^2_2 -\norm{ \Pi_1 C \ket{b} \ket{0 \dots 0}}^2_2} \geq 1- \epsilon,
    \end{align*}
    where $\Pi_1 = \dyad{1} \otimes \mathbb{I}$.
    \label{def:dist_compl}
\end{definition}
The key result from~\cite{aaronson2020hardness} is an equivalence (in terms of the order of circuit complexity) between a perfect swapper of two orthogonal states, and a perfect distinguisher for the corresponding conjugate states. 

\begin{lemma}[Adapted from~\cite{aaronson2020hardness}, Theorem 2] 
    Let $0 \leq \epsilon < 1$ and $n \in \mathbb{N}$. Let $\ket{a}$, $\ket{b}$ be orthogonal $n$-qubit states and let $\ket{\phi^+} = \frac{1}{\sqrt{2}}(\ket{a}+\ket{b})$ and $\ket{\phi^-} = \frac{1}{\sqrt{2}}(\ket{a}-\ket{b})$. The following holds:
    \item If $\mathcal{S}_\epsilon (\ket{a},\ket{b}) = T_1$, then we have that $\mathcal{D}_\epsilon (\ket{\phi^+},\ket{\phi^-}) = \mathcal{O}(T_1)$.
    \item If $\mathcal{D}_\epsilon (\ket{\phi^+},\ket{\phi^-}) = T_2$, then we have that $ \mathcal{S}_\epsilon (\ket{a},\ket{b})= \mathcal{O}(T_2)$.
    \label{lem:S_versus_D}
\end{lemma}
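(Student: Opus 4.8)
The plan is to exploit the single structural fact that makes the two complexities interchangeable: inside the two-dimensional span of $\ket{a}$ and $\ket{b}$, a circuit that swaps $\ket{a}\leftrightarrow\ket{b}$ acts on the conjugate states as $\ket{\phi^+}\mapsto\ket{\phi^+}$ and $\ket{\phi^-}\mapsto-\ket{\phi^-}$; that is, it is exactly the reflection whose $\pm1$ eigenspaces are spanned by $\ket{\phi^\pm}$. Reading off the sign of this eigenvalue is precisely distinguishing $\ket{\phi^+}$ from $\ket{\phi^-}$, and conversely a device that detects which conjugate state it holds can be used to imprint the relative $-1$ phase that turns $\ket{a}$ into $\ket{b}$. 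Both directions therefore reduce to converting a ``phase'' into a ``measurement'' and back, which is the content of the Hadamard test and of the phase-kickback-plus-uncompute trick.

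For the first direction, assume a circuit $C$ on system-plus-ancilla with swap value $\geq 1-\epsilon$, so that $C\ket{a}\ket{0\dots0}$ is close to $\ket{b}\ket{0\dots0}$ and vice versa, whence $C\ket{\phi^+}\ket{0\dots0}\approx\ket{\phi^+}\ket{0\dots0}$ while $C\ket{\phi^-}\ket{0\dots0}\approx-\ket{\phi^-}\ket{0\dots0}$. I would prepend a control qubit in the state $\ket{+}$ (to be identified with the first qubit of \cref{def:dist_compl}), apply a controlled-$C$, and finish with a Hadamard on the control. In the ideal case the control collapses to $\ket{0}$ on input $\ket{\phi^+}$ and to $\ket{1}$ on input $\ket{\phi^-}$, so measuring it distinguishes the two states; the controlled $C$ costs a constant factor per gate plus two Hadamards, giving $\mathcal{D}_\epsilon(\ket{\phi^+},\ket{\phi^-})=\mathcal{O}(T_1)$.

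For the converse, assume a distinguisher $C$ with $\big|\,\norm{\Pi_1 C\ket{\phi^+}\ket{0\dots0}}_2^2-\norm{\Pi_1 C\ket{\phi^-}\ket{0\dots0}}_2^2\,\big|\geq 1-\epsilon$; after fixing the sign we may assume $C$ sends most of $\ket{\phi^+}\ket{0\dots0}$ into the first-qubit-$\ket{1}$ subspace and most of $\ket{\phi^-}\ket{0\dots0}$ into the first-qubit-$\ket{0}$ subspace. I would then build the swapper
\begin{equation*}
    C' = C^\dagger\,Z_1\,C ,
\end{equation*}
where $Z_1$ is a Pauli-$Z$ on the first qubit. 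Writing $C\ket{\phi^\pm}\ket{0\dots0}=\ket{1}\ket{u_\pm}+\ket{0}\ket{w_\pm}$, a direct computation gives $\bra{\phi^\pm}\bra{0\dots0}C'\ket{\phi^\pm}\ket{0\dots0}=\norm{w_\pm}_2^2-\norm{u_\pm}_2^2$, which is $\approx-1$ for the $+$ state and $\approx+1$ for the $-$ state; hence $C'$ is approximately the reflection fixing $\ket{\phi^-}$ and flipping the sign of $\ket{\phi^+}$, and this reflection maps $\ket{a}\ket{0\dots0}\mapsto-\ket{b}\ket{0\dots0}$ and $\ket{b}\ket{0\dots0}\mapsto-\ket{a}\ket{0\dots0}$, the common global phase cancelling inside the modulus of \cref{def:swap_complexity}. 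Since $C'$ uses $2T_2+\mathcal{O}(1)$ gates, $\mathcal{S}_\epsilon(\ket{a},\ket{b})=\mathcal{O}(T_2)$.

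The main obstacle, in both directions, is honest error bookkeeping in the presence of the ancilla ``garbage'' registers $\ket{u_\pm},\ket{w_\pm}$ and of uncontrolled phases. Concretely, the distinguisher guarantee forces $\norm{w_+}_2^2\leq\epsilon$ and $\norm{u_-}_2^2\leq\epsilon$, and I would expand the off-diagonal swap amplitude $\bra{a}\bra{0\dots0}C'\ket{b}\ket{0\dots0}$ into its four $\ket{\phi^\pm}$ contributions and bound the cross terms $\langle u_+|u_-\rangle$ and $\langle w_+|w_-\rangle$ by Cauchy--Schwarz; this yields a swap value $1-\mathcal{O}(\sqrt{\epsilon})$, which is exactly $1$ in the $\epsilon=0$ regime that the paper actually uses for $K=N/2$. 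I would also verify that the swap condition, being the modulus of a sum of two amplitudes sharing a phase, is phase-robust, and that taking controlled and adjoint versions of $C$ inflates the gate count only by a constant factor, so the $\mathcal{O}(\cdot)$ claims are justified.
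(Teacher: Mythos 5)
Your constructions are exactly the ones the paper itself relies on: the paper does not reprove this lemma but defers to Aaronson--Atia--Susskind and displays precisely your two circuits in \cref{fig:aas_construction} (a controlled swapper inside a Hadamard test for the distinguisher, and the $A$--$Z$--$A^\dagger$ sandwich for the swapper), so your route is the intended one. Your second direction is essentially correct, and in fact you can do better than the advertised $1-\mathcal{O}(\sqrt{\epsilon})$: expanding $\ket{a}=\frac{1}{\sqrt{2}}(\ket{\phi^+}+\ket{\phi^-})$ and $\ket{b}=\frac{1}{\sqrt{2}}(\ket{\phi^+}-\ket{\phi^-})$ in the symmetrized quantity of \cref{def:swap_complexity} makes the cross terms cancel \emph{identically}, leaving the swap value equal to $\frac{1}{2}\abs{\bra{\phi^+}\bra{0\dots0}C'\ket{\phi^+}\ket{0\dots0}-\bra{\phi^-}\bra{0\dots0}C'\ket{\phi^-}\ket{0\dots0}} = \frac{1}{2}\abs{(\norm{w_+}^2-\norm{u_+}^2)-(\norm{w_-}^2-\norm{u_-}^2)} \geq 1-2\epsilon$, with no Cauchy--Schwarz needed. (Even so, note that neither $1-2\epsilon$ nor $1-\mathcal{O}(\sqrt{\epsilon})$ literally gives $\mathcal{S}_\epsilon = \mathcal{O}(T_2)$ at the \emph{same} $\epsilon$; the paper's ``adapted'' statement is loose on this point, and only the $\epsilon=0$ case and order-of-magnitude bounds are used downstream, so this slack is harmless.)

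The genuine gap is in your first direction. \Cref{def:swap_complexity} takes the \emph{modulus} of $\alpha+\beta$ with $\alpha = \bra{a}\bra{0\dots0}C\ket{b}\ket{0\dots0}$ and $\beta = \bra{b}\bra{0\dots0}C\ket{a}\ket{0\dots0}$, so a perfect swapper may act as $C\ket{a}\ket{0\dots0} = i\ket{b}\ket{0\dots0}$ and $C\ket{b}\ket{0\dots0} = i\ket{a}\ket{0\dots0}$. The controlled-$C$ converts this common phase into a \emph{relative} phase, so the eigenvalues seen by the Hadamard test are $\pm i$ rather than $\pm 1$, and the control qubit outputs $0$ with probability $\frac{1}{2}$ on both $\ket{\phi^+}$ and $\ket{\phi^-}$: bias zero, not $1-\epsilon$. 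Your claim that the control ``collapses to $\ket{0}$/$\ket{1}$'' silently assumes the common phase is trivial, and your phase-robustness remark addresses only the second direction (where the global $-1$ indeed cancels inside the modulus). The fix is cheap but must be stated: since $\theta = \arg(\alpha+\beta)$ is a fixed property of the given circuit $C$, insert the phase gate $\mathrm{diag}(1, e^{-i\theta})$ on the control before the final Hadamard. Using the identity $\bra{\phi^+}\bra{0\dots0}C\ket{\phi^+}\ket{0\dots0} - \bra{\phi^-}\bra{0\dots0}C\ket{\phi^-}\ket{0\dots0} = \alpha + \beta$, the difference of acceptance probabilities becomes $\frac{1}{2}\Re\!\left[e^{-i\theta}(\alpha+\beta)\right] = \frac{1}{2}\abs{\alpha+\beta} \geq 1-\epsilon$, which recovers the same-$\epsilon$ claim of the lemma with only $\mathcal{O}(1)$ extra gates.
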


\begin{figure}
    \centering
    \begin{tabular}{ccc}
    (a) Distinguisher & \qquad & (b) Swapper \\
    \\
    \begin{quantikz}[wire types={q,q,q,n,q}, column sep=10pt, row sep={20pt,between origins}]
        \lstick[1]{$\ket{0}$} & \gate{H} & \ctrl{1} & \gate{H} & \meter \qw \rstick[1]{$+/-$} \\
        \lstick[4, brackets=none]{$\ket{\phi^+}/\ket{\phi^-}$} & \qw & \gate[wires=4][20pt]{U} & \qw & \qw \\
         & \qw & & \qw & \qw \\
         & \vdots & & \vdots &  \\
         & \qw & & \qw & \qw 
    \end{quantikz}
    &
    \qquad
    &
    \begin{quantikz}[wire types={q,q,n,q},column sep=10pt, row sep={20pt,between origins}]
        \lstick[4, brackets=none]{$\ket{a}/\ket{b}$} & \qw & \gate[wires=4][20pt]{A} & \gate{Z} & \gate[wires=4][20pt]{A^\dag} & \qw & \rstick[4, brackets=none]{$\ket{b}/\ket{a}$} \\
         & & & & & & \\
         & \vdots & & \vdots & & \vdots & \\
         & & & & & &
    \end{quantikz}
    \end{tabular}
    \caption{Quantum circuits for Aaronson, Atia and Susskind’s construction~\cite{aaronson2020hardness}. (a) Circuit distinguishing $\ket{\phi^+}$ from $\ket{\phi^-}$ using a unitary $U$ that swaps $\ket{a}$ and $\ket{b}$. (b) Circuit swapping $\ket{a}$ and $\ket{b}$ using a unitary $A$ that distinguishes $\ket{\phi^+}$ and $\ket{\phi^-}$.}
    \label{fig:aas_construction}
\end{figure}

\noindent An intuition for this Lemma is given in~\cref{fig:aas_construction} where the distinguisher and swapper circuits are defined in terms of each other. This indicates that the number of gates in these circuits should be of the same order.

The following claim is also given in~\cite[Corollary 1]{aaronson2020hardness}, but we include it to specify the values of $\epsilon$, as this will be needed in~\cref{sec:quantum_cc_and_d}.

\begin{proposition}
    Let $\mathcal{G}$ be a finite, self-inverse gate set. Let $\ket{a},\ket{b}$ be two orthogonal $n$-qubit quantum states. If $
    \mathcal{D}_{4\epsilon} (\ket{a},\ket{b}) = T$, then $
    \mathcal{C}_{\epsilon} (\ket{a}) \geq T $ and $ \mathcal{C}_{\epsilon} (\ket{b}) \geq T$.
\label{prop:circuit_complexity_from_D}
\end{proposition}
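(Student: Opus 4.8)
The plan is to prove the equivalent pair of inequalities $\mathcal{D}_{4\epsilon}(\ket{a},\ket{b}) \leq \mathcal{C}_{\epsilon}(\ket{a})$ and $\mathcal{D}_{4\epsilon}(\ket{a},\ket{b}) \leq \mathcal{C}_{\epsilon}(\ket{b})$, since combining either one with the hypothesis $\mathcal{D}_{4\epsilon}(\ket{a},\ket{b}) = T$ immediately yields the claimed bounds $\mathcal{C}_{\epsilon}(\ket{a}) \geq T$ and $\mathcal{C}_{\epsilon}(\ket{b}) \geq T$. To establish the first inequality I would take an \emph{optimal} preparation circuit $A$ for $\ket{a}$, i.e. a circuit over $\mathcal{G}$ with $\mathcal{C}_{\epsilon}(\ket{a})$ gates satisfying $\abs{\bra{a}\bra{0\dots0} A \ket{0\dots0}\ket{0\dots0}} \geq 1-\epsilon$, and convert it into a distinguisher for $\ket{a}$ and $\ket{b}$ that uses no more gates. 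The candidate distinguisher is simply $C = A^{\dagger}$ followed by a test of whether the full register has returned to the all-zero state. Because $\mathcal{D}_{4\epsilon}$ is the minimum over all distinguishers, exhibiting this one of size $\mathcal{C}_{\epsilon}(\ket{a})$ proves the inequality.

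The key estimates come from inverting $A$ and invoking orthogonality. Writing $A\ket{0\dots0} = c\,\ket{a}\ket{0\dots0} + \ket{r}$ with $\ket{r}$ orthogonal to $\ket{a}\ket{0\dots0}$, the preparation guarantee gives $\abs{c} \geq 1-\epsilon$ and hence $\norm{\ket{r}}^2 = 1 - \abs{c}^2 \leq 2\epsilon$. Running $A^{\dagger}$ on the two inputs and projecting onto $\ket{0\dots0}$, the acceptance probabilities are $P_a = \abs{\bra{a}\bra{0\dots0} A \ket{0\dots0}\ket{0\dots0}}^2 = \abs{c}^2 \geq (1-\epsilon)^2$ and $P_b = \abs{\bra{b}\bra{0\dots0} A \ket{0\dots0}\ket{0\dots0}}^2$. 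Expanding $P_b$ against the decomposition of $A\ket{0\dots0}$, the term proportional to $\braket{b}{a}$ vanishes by orthogonality, leaving only the overlap with $\ket{r}$, so $P_b \leq \norm{\ket{r}}^2 \leq 2\epsilon$. The distinguishing gap is therefore $\abs{P_a - P_b} \geq (1-\epsilon)^2 - 2\epsilon = 1 - 4\epsilon + \epsilon^2 \geq 1-4\epsilon$, which is precisely the $4\epsilon$-distinguishing condition and explains the factor of four appearing in the statement.

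Two points require care, and the second is the main obstacle. First, the gate count: since $\mathcal{G}$ is self-inverse, $A^{\dagger}$ is obtained by reversing the gate sequence of $A$ and replacing each gate by its inverse, which again lies in $\mathcal{G}$, so $A^{\dagger}$ is a legal circuit over $\mathcal{G}$ with \emph{exactly} the same number of gates as $A$. This is the sole role of the self-inverse hypothesis, and it is what yields the clean bound $\mathcal{D}_{4\epsilon} \leq \mathcal{C}_{\epsilon}$ with no additive overhead. Second, the measurement I described is the projection onto the all-zero state $\dyad{0\dots0}$ of the entire register, whereas Definition~\ref{def:dist_compl} charges the single-qubit projector $\Pi_1 = \dyad{1}\otimes\mathbb{I}$; reconciling the two without spending extra gates is the delicate step. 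I would resolve it exactly as in Aaronson, Atia and Susskind~\cite{aaronson2020hardness}, where the distinguishing event is precisely the return of the system-plus-ancilla register to $\ket{0\dots0}$, so this all-zero test is the measurement intended by the distinguishability complexity and no further gates are incurred. The symmetric construction, starting instead from an optimal preparation circuit for $\ket{b}$, gives $\mathcal{D}_{4\epsilon}(\ket{a},\ket{b}) \leq \mathcal{C}_{\epsilon}(\ket{b})$ and completes the argument.
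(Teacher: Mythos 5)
Your proposal is correct and takes essentially the same route as the paper: the paper's proof (phrased as a contradiction rather than the direct inequality $\mathcal{D}_{4\epsilon} \leq \mathcal{C}_{\epsilon}$) likewise takes a preparation circuit $C$ for $\ket{a}$, runs $C^\dagger$ on the unknown input, tests for the all-zero outcome, bounds the bias by $(1-\epsilon)^2 - \left(1-(1-\epsilon)^2\right) \geq 1-4\epsilon$, and uses self-inverseness of $\mathcal{G}$ only to equate the gate counts of $C$ and $C^\dagger$. The $\Pi_1$-versus-all-zero-measurement subtlety you flag is in fact glossed over in the paper's own proof, so your explicit appeal to the convention of Aaronson, Atia and Susskind is, if anything, the more careful treatment.
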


\begin{proof}
We prove this by contradiction. Suppose, without loss of generality, that $ \mathcal{C}_\epsilon (\ket{a}) < T$. Then, there exists a quantum circuit $C$ with gate complexity strictly less than $T$ such that  
\begin{align*}
    \left| \bra{a} C \ket{0 \dots 0} \right| \geq 1 - \epsilon.
\end{align*}
Defining $\ket{o} = C \ket{0 \dots 0}$, we can write $\ket{o}$ as  
\begin{align*}
    \ket{o} = \sqrt{1-\alpha} e^{i \theta_1} \ket{a} + \sqrt{\alpha} e^{i \theta_2} \ket{a^\perp}
\end{align*}
for some $\theta_1, \theta_2 \in [0,2\pi]$ and $\alpha \in [0,1]$. The assumption $\left| \bra{a} \ket{o} \right| \geq 1-\epsilon$ implies that $\sqrt{1-\alpha} \geq 1-\epsilon$.
Now, consider the circuit that receives an unknown state $\ket{z}$, where $ z \in \{a,b\} $, applies $C^\dagger$, and measures in the computational basis. The probability of obtaining the all-zero outcome is then given by $\left| \bra{z} \ket{o} \right|^2$.
For $\ket{z} = \ket{a}$, we have  
\begin{align*}
    \left| \bra{a} \ket{o} \right|^2 \geq (1-\epsilon)^2.
\end{align*}
For $\ket{z} = \ket{b}$, orthogonality implies that  
\begin{align*}
    \left| \bra{b} \ket{o} \right|^2 = 1 - \left| \bra{a} \ket{o} \right|^2 \leq 1 - (1-\epsilon)^2.
\end{align*}
Thus, the bias of this measurement-based distinguisher is  
\begin{align*}
    \left| \left| \bra{a} \ket{o} \right|^2 - \left| \bra{b} \ket{o} \right|^2 \right|  
    \geq (1-\epsilon)^2 - (1 - (1-\epsilon)^2) = 1 - 4\epsilon + 2\epsilon^2 \geq 1 - 4\epsilon.
\end{align*}
Since $C^\dagger$ has the same gate complexity as $C$, this gives a distinguisher with gate complexity $< T$ and bias at least $1 - 4\epsilon$, contradicting the assumption that $\mathcal{D}_{4\epsilon} (\ket{a},\ket{b}) = T$. The same argument applies if $\mathcal{C}_\epsilon (\ket{b}) < T$ or both $\mathcal{C}_\epsilon (\ket{a})$ and $\mathcal{C}_\epsilon (\ket{b})$ are smaller than $T$. Thus, we conclude that $
\mathcal{C}_\epsilon (\ket{a}) \geq T $ and  $\mathcal{C}_\epsilon (\ket{b}) \geq T$ must hold.
\end{proof}

\subsection{A perfect swapper for subsets of cardinality \texorpdfstring{$K=N/2$}{K=N/2}}

Let us now show that the task of complement sampling, as stated in the beginning of~\cref{sec:cs_quantum}, can be perfectly solved using only a single quantum sample in the case where $K=N/2$.

\begin{theorem}[Quantum complement swapper] 
    Consider a subset $S \in \mathcal{S}_K$ with $K=N/2$. Then there exists a polynomial-time quantum algorithm which prepares the state $\ket{\bar{S}}$ from $\ket{S}$ and vice versa.
\label{thm:perfect_quantum_cs}
\end{theorem}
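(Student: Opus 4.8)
The plan is to exhibit an explicit unitary that swaps $\ket{S}$ and $\ket{\bar{S}}$ exactly when $K=N/2$, namely the Grover diffusion operator $U = 2\dyad{+^n} - \mathbb{I}$, and to verify both that it performs the swap and that it is polynomial-time implementable. The more conceptual route is to run the Aaronson, Atia and Susskind machinery of \cref{lem:S_versus_D} in the easy direction: build a perfect distinguisher for the conjugate states and convert it into a swapper. I would present the direct computation as the main argument and keep the distinguisher route as the intuition behind it.

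For the direct argument, first note that $\ket{S}$ and $\ket{\bar{S}}$ are orthogonal, being supported on the disjoint sets $S$ and $\bar{S}$. Next I would decompose the uniform superposition inside the two-dimensional span of these states. Splitting the sum $\ket{+^n} = \frac{1}{\sqrt{N}}\sum_{x\in\omega}\ket{x}$ over $S$ and $\bar{S}$ gives
\begin{align*}
    \ket{+^n} = \sqrt{\tfrac{K}{N}}\,\ket{S} + \sqrt{\tfrac{N-K}{N}}\,\ket{\bar{S}},
\end{align*}
so $\braket{+^n}{S} = \sqrt{K/N}$ and $\braket{+^n}{\bar{S}} = \sqrt{(N-K)/N}$. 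The crucial feature of $K=N/2$ is that these two overlaps coincide and equal $1/\sqrt{2}$, which is precisely the symmetry that will make the reflection act as a swap.

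The key step is then the one-line computation
\begin{align*}
    U\ket{S} = 2\ket{+^n}\braket{+^n}{S} - \ket{S} = \sqrt{2}\,\ket{+^n} - \ket{S} = \big(\ket{S}+\ket{\bar{S}}\big) - \ket{S} = \ket{\bar{S}},
\end{align*}
using $\braket{+^n}{S}=1/\sqrt{2}$ and $\sqrt{2}\,\ket{+^n}=\ket{S}+\ket{\bar{S}}$ at $K=N/2$. Since $U=U^\dagger$ and $U^2=\mathbb{I}$, applying $U$ again (or repeating the computation with $\braket{+^n}{\bar{S}}=1/\sqrt{2}$) yields $U\ket{\bar{S}}=\ket{S}$, so one and the same circuit swaps in both directions. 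For implementability I would write $U = H^{\otimes n}\big(2\dyad{0^n}-\mathbb{I}\big)H^{\otimes n}$; the inner reflection about $\ket{0^n}$ is a multiply-controlled phase flip realizable with $\mathcal{O}(n)$ elementary gates and one ancilla, so $U$ runs in polynomial time. This also confirms $\mathcal{S}_0(\ket{S},\ket{\bar{S}})=\mathcal{O}(n)$, matching what \cref{lem:S_versus_D} predicts from the $\mathcal{O}(n)$-gate Deutsch-Jozsa distinguisher for the conjugate states $\ket{y_{f_\mathrm{con}}}=\ket{+^n}$ and $\ket{y_{f_\mathrm{bal}}}$.

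I do not expect a serious obstacle: the statement reduces to the short calculation above. The one point that demands care, and the reason the theorem is restricted to $K=N/2$, is the \emph{exactness}. A reflection fixing $\ket{+^n}$ and negating everything orthogonal to it maps the $\{\ket{S},\ket{\bar{S}}\}$ plane to itself, but it swaps the two states exactly only when $\ket{+^n}$ bisects the angle between them, i.e. when $\braket{+^n}{S}=\braket{+^n}{\bar{S}}$. For $K\neq N/2$ these overlaps differ and $U\ket{S}$ retains a nonzero $\ket{S}$ component, so the swap becomes merely approximate; I would flag that the impossibility of an exact swap in that regime is treated separately via the Helstrom bound, fully consistent with this calculation.
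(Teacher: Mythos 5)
Your proposal is correct, but it proves the theorem by a different mechanism than the paper does. The paper's proof constructs a perfect \emph{distinguisher} for the conjugate states $\ket{y_{f_\mathrm{con}}}$ and $\ket{y_{f_\mathrm{bal}}}$ (Deutsch--Jozsa followed by copying the outcome to an ancilla) and then invokes the Aaronson--Atia--Susskind equivalence (\cref{lem:S_versus_D}) to convert it into a swapper; only afterwards, as a remark outside the proof, does it simplify the resulting circuit, identify it as the Grover diffusion operator $U = 2\dyad{+^n}-\mathbb{I}$, and verify $U\ket{S} = 2\sqrt{\tfrac{K}{N}\left(1-\tfrac{K}{N}\right)}\ket{\bar{S}} + \left(2\tfrac{K}{N}-1\right)\ket{S}$, of which your one-line computation is the $K=N/2$ specialization. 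You invert this order: the direct verification of $U$ \emph{is} your proof, with the AAS route demoted to intuition. Your version is more elementary and self-contained --- it needs neither \cref{lem:S_versus_D} nor the circuit-simplification claim (asserted but not spelled out in the paper's proof) that the ancilla idles in $\ket{0}$ --- and since $\ket{S}$, $\ket{\bar{S}}$ have real nonnegative amplitudes your calculation yields the swap with no phase subtlety, while the explicit decomposition $U = H^{\otimes n}\left(2\dyad{0^n}-\mathbb{I}\right)H^{\otimes n}$ settles the polynomial-time claim concretely (the paper's own circuit even carries a harmless global phase of $-1$). Your closing bisection remark, that exactness holds precisely when $\braket{+^n}{S}=\braket{+^n}{\bar{S}}$, correctly anticipates the paper's Helstrom-based impossibility result for $K\neq N/2$ in \cref{sec:impossibility}. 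What the paper's route buys instead is the conceptual origin of the algorithm --- the duality between swap complexity and distinguishing complexity of conjugate states --- which motivates the impossibility proof and the later complexity lower bounds in \cref{sec:quantum_cc_and_d}; indeed the paper itself notes that easier direct proofs exist and that the AAS construction is retained chiefly for the intuition it provides.
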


\begin{proof}
Any function $f : \omega \rightarrow \{0,1\}$ induces a quantum phase state of the form 
\begin{align*}
    \ket{y_{f}} = \frac{1}{\sqrt{N}} \sum_{x \in \omega} (-1)^{f(x)} \ket{x} .
\end{align*}
In particular, the constant function $f_\mathrm{con}(x) = 0$ yields the state $\ket{y_{f_\mathrm{con}}} = \frac{1}{\sqrt{N}} \sum_{x \in \{0,1\}^n} \ket{x} .$
For every $S$ and $\bar{S}$ with $K = N/2$ there exists a balanced function $f_\mathrm{bal}$ such that $f_\mathrm{bal}(x) = 0$ if $x \in S$ and $f_\mathrm{bal}(x) = 1$ otherwise. Then we can write 
\begin{align*}
    \ket{S} = \frac{1}{\sqrt{2}} \left( \ket{y_{f_\mathrm{con}}} + \ket{y_{f_\mathrm{bal}}} \right) , \qquad 
    \ket{\bar{S}} = \frac{1}{\sqrt{2}} \left( \ket{y_{f_\mathrm{con}}} - \ket{y_{f_\mathrm{bal}}} \right) .
\end{align*}
It is well known that the Deutsch–Jozsa algorithm~\cite{deutsch1992rapid} can perfectly distinguish the two phase states induced by constant and balanced functions. The Deutsch–Jozsa algorithm consists of a short quantum circuit $V$ that applies $H^{\otimes n}$ followed by the two-outcome measurement operator $\Lambda = \{\dyad{0 \dots 0}, I -\dyad{0 \dots 0} \}$. We copy the measurement outcome to an ancilla qubit using a multi-controlled NOT gate, where we control on the main register being in the zero state. This circuit is illustrated in~\cref{fig:dist_and_swap}~(a). Then, by~\cref{lem:S_versus_D}, there exists another quantum circuit $V'$ which only uses two applications of $V$ to swap $\ket{S}$ and $\ket{\bar{S}}$. This circuit is illustrated in~\cref{fig:dist_and_swap}~(b) where we ignore the ancilla qubit since, after circuit simplifications, it can be shown to idle in the $\ket{0}$ state throughout. Since $V'$ consists of two applications of $V$, the multi-controlled NOT gate and two $Z$ gates, it can be executed in time polynomial in $n$.
\end{proof}

Let us take a closer look at what the circuit used in the proof of~\cref{thm:perfect_quantum_cs} actually does, fixing the first qubit (so in the register which contains the qubit to which the outcome of $\Lambda$ is `copied') to be in $\ket{0}$ initially. Let the second $n$-qubit register be in an arbitrary state $\ket{\psi}$. A quick derivation shows that when the first qubit is fixed to be in $\ket{0}$ the circuit implements the mapping 
\begin{align*}
    \ket{0} \ket{\psi} \mapsto -\ket{0} U \ket{\psi},
\end{align*}
where $U = 2\dyad{+^n} - \mathbb{I}$ is the Grover diffusion operator~\cite{grover1996fast}. Noting that $\ket{+^n} = \frac{1}{\sqrt{N}} \left(\sqrt{N - K} \ket{\bar{S}} + \sqrt{K} \ket{S} \right)$, it can be verified that
\begin{align*}
    U \ket{S} &= 2\sqrt{\frac{K}{N}\left(1-\frac{K}{N}\right)} \ket{\bar{S}} + \left(2\frac{K}{N} - 1\right) \ket{S}.
\end{align*}
Thus, the Grover diffusion operator is an imperfect swapper $U \ket{S} = \sqrt{1- \epsilon} \ket{\bar{S}} + \sqrt{\epsilon} \ket{S}$ with error $\epsilon = \left(2\frac{K}{N} - 1\right)^2$. We see that zero error is achieved when $S$ contains exactly half of the elements, $K = N/2$.

\begin{figure}
\centering
\begin{tabular}{ccc}
(a) Distinguisher & \qquad & (b) Swapper \\
\\
\begin{quantikz}[wire types={q,q,q,n,q}, column sep=10pt, row sep={20pt,between origins}]
    \lstick[1]{$\ket{0}$} &  & \targ{}  & \\
    \lstick[4, brackets=none]{$\ket{y_{f_\mathrm{con}}} / \ket{y_{f_\mathrm{bal}}}$} & \gate{H} & \octrl{-1}  & \\
     & \gate{H} & \octrl{1}\wire[u]{q} &\\
     &          & \vdots               &\\
     & \gate{H} & \octrl{0}            &
\end{quantikz}
&
\qquad \qquad
&
\begin{quantikz}[wire types={q,q,n,q},column sep=10pt, row sep={20pt,between origins}]
    \lstick[4, brackets=none]{$\ket{S}/\ket{\bar{S}}$} & \gate{Z} & \targ{} & \gate{Z} & \rstick[4, brackets=none]{$\ket{\bar{S}}/\ket{S}$} \\
      &  \gate{H} & \octrl{1}\wire[u]{q} & \gate{H} & \qw \\
      &           & \vdots               &          &     \\
      &  \gate{H} & \octrl{0}            & \gate{H} & \qw
\end{quantikz}
\end{tabular}
\caption{Distinguisher and swapper for complement sampling. (a) Our distinguisher circuit can be thought of as the Deutsch-Jozsa algorithm followed by a copy of the measurement outcome into an ancilla qubit. (b) The swapper circuit is obtained by plugging our distinguisher circuit into~\cref{fig:aas_construction}~(b) and simplifying gates. The simplification leads to the ancilla qubit being idle, so we do not include it in this circuit diagram. Our swapper circuit corresponds to the Grover diffusion operator up to a global phase of $-1$. }
\label{fig:dist_and_swap}
\end{figure}

\subsection{Impossibility of a perfect swapper for \texorpdfstring{$K \neq N/2$}{K≠N/2}}
\label{sec:impossibility}

Consider again some fixed $N=2^n$ with $n \in  \mathbb{N}$. Let $K$ be the cardinality, and let $\mathcal{S}_K = \{ S : |S| = K\}$ be the family of all subsets $S \subset \omega$ that have cardinality $K$. We now prove that there exists no perfect swapper for any other case then $K = N/2$, given the condition that all auxiliary qubits must return to their initial state. In~\cref{sec:zero_error} we will give a simpler proof. However, the proof given here uses the same construction used in the previous subsections, providing more intuition into the problem at hand.

\begin{proposition} 
    Let $S \in \mathcal{S}_K$. Then for any $K \neq N/2$, there does not exist a quantum circuit which perfectly maps $\ket{S}$ to $\ket{\bar{S}}$ under the condition that all auxiliary qubits return to the $\ket{0}$ state.
\end{proposition}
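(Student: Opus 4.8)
The plan is to read the proposition as the statement that no \emph{single} quantum circuit $C$ --- one fixed independently of the subset --- can implement $\ket{S}\ket{0\dots0}\mapsto\ket{\bar S}\ket{0\dots0}$ simultaneously for every $S\in\mathcal{S}_K$, since a complement sampler is handed an unknown $S$ and must succeed on all of them. (For a single known $S$ a unitary rotating within $\mathrm{span}\{\ket{S},\ket{\bar S}\}$ trivially exists, so the content lies entirely in the uniformity over $\mathcal{S}_K$.) The only properties of $C$ I will use are that it is a unitary and that the auxiliary register returns to $\ket{0\dots0}$ on every valid input. Unitaries preserve inner products, and the ancilla-return condition is exactly what lets this constraint factor cleanly onto the $n$-qubit register.

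Concretely, first I would fix two subsets $S_1,S_2\in\mathcal{S}_K$ and apply inner-product preservation to the inputs $\ket{S_1}\ket{0\dots0}$ and $\ket{S_2}\ket{0\dots0}$. Because $C$ sends these to $\ket{\bar S_1}\ket{0\dots0}$ and $\ket{\bar S_2}\ket{0\dots0}$, and the ancilla factors out as $\braket{0\dots0}{0\dots0}=1$, I obtain
\begin{align*}
    \braket{S_1}{S_2} = \braket{\bar S_1}{\bar S_2}.
\end{align*}
This is precisely where the forward-only hypothesis already suffices: I never need $C$ to act correctly on any $\ket{\bar S}$, only that the forward images of two subset states carry the same overlap as the inputs.

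Next I would evaluate both sides in terms of the intersection size $m:=\abs{S_1\cap S_2}$. A direct count gives $\braket{S_1}{S_2}=m/K$, and using $\abs{\bar S_1\cap\bar S_2}=N-\abs{S_1\cup S_2}=N-2K+m$, also $\braket{\bar S_1}{\bar S_2}=(N-2K+m)/(N-K)$. Equating the two and clearing denominators collapses the condition to
\begin{align*}
    (m-K)(N-2K)=0,
\end{align*}
so it can hold only if $K=N/2$ or $m=K$; and $m=K$ forces $S_1=S_2$, since both sets have cardinality $K$.

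Finally, to close the argument for $K\neq N/2$, I would exhibit a violating pair. Since $1\le K\le N-1$ there exist at least two distinct subsets of size $K$, and taking $S_2$ to be $S_1$ with a single element swapped out for one of its complement yields $m=K-1\neq K$, while $N-2K\neq0$ by assumption; hence $\braket{S_1}{S_2}\neq\braket{\bar S_1}{\bar S_2}$. This contradicts the inner-product identity forced by unitarity, so no such $C$ exists. The one step that needs genuine care is the framing in the first paragraph --- pinning down that the impossibility concerns uniformity over $\mathcal{S}_K$ and checking that the ancilla-return condition is exactly what decouples the auxiliary register in the two-input inner-product constraint; everything thereafter is a short count.
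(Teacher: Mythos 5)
Your proof is correct, but it takes a genuinely different route from the paper's proof of this proposition. The paper argues through the Aaronson--Atia--Susskind equivalence: assuming a perfect swapper (with $\epsilon=0$ in the swap-complexity sense), it invokes \cref{lem:S_versus_D} to obtain a perfect distinguisher for the conjugate states, computes the overlap $\abs{\braket{\phi^+_1}{\phi^-_2}}^2$ as a function of $N$, $K$ and $x=\abs{S_1\cap S_2}$, maximizes it over subset pairs, and shows it is strictly positive whenever $K\neq N/2$ --- so perfect discrimination would violate the Helstrom bound. You instead apply unitarity directly to the subset states: the ancilla-return condition lets the inner product factor, giving $\braket{S_1}{S_2}=\braket{\bar S_1}{\bar S_2}$, and the clean identity $(m-K)(N-2K)=0$ together with the pair $m=K-1$ finishes the argument. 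This is essentially the ``arguably easier proof'' the paper itself advertises and later realizes in its optimality argument (\cref{prop:optimality}, specialized to $\epsilon=0$), but your version is leaner still: it treats $K<N/2$ and $K>N/2$ symmetrically in one shot (whereas the paper's $\epsilon>0$ argument needs the adjoint-circuit trick for $K<N/2$), and it assumes only the forward map $\ket{S}\mapsto\ket{\bar S}$, not the two-way swapper that the swap-complexity definition and \cref{lem:S_versus_D} require. What the paper's detour buys is intuition --- the connection to the Deutsch--Jozsa distinguisher --- plus a quantitative overlap formula that foreshadows the imperfect-swapper analysis; what yours buys is brevity, elementarity, and a weaker hypothesis. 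One cosmetic point: a unitary could achieve the map only up to an $S$-dependent global phase, $C\ket{S}\ket{0\dots0}=e^{i\theta_S}\ket{\bar S}\ket{0\dots0}$ (the paper's \cref{prop:optimality} explicitly allows this), so you should equate \emph{absolute values} of inner products. Since subset states have non-negative real amplitudes, both $\braket{S_1}{S_2}=m/K$ and $\braket{\bar S_1}{\bar S_2}=(N-2K+m)/(N-K)$ are non-negative reals, so your equation and conclusion are unaffected --- but the step deserves a sentence.
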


\begin{proof}
We will prove our claim using a reductio ad absurdum: suppose that there exists a circuit $C$ which only depends on the cardinality of $S$ (and of course $N$), and can swap any $\ket{S}$ to $\ket{\bar{S}}$ and vice versa (having $\epsilon=0$ as per~\cref{def:swap_complexity}). Again, define the states $\ket{S}$ and $\ket{\bar{S}}$ as 
\begin{align*}
    \ket{S} = \frac{1}{\sqrt{K}} \sum_{x \in S} \ket{x}, \qquad \ket{\bar{S}} = \frac{1}{\sqrt{N-K}} \sum_{x \in \bar{S}} \ket{x},
\end{align*}
where $\bar{S} = \omega\setminus S$. As before, define $\ket{\phi^+} := \frac{1}{\sqrt{2}} \left(\ket{S} + \ket{\bar{S}}\right) $ and $\ket{\phi^-} := \frac{1}{\sqrt{2}} \left(\ket{S} - \ket{\bar{S}}\right)$. Then by~\cref{lem:S_versus_D}, there exists another circuit $C'$ such that for all sets $S \in \mathcal{S}_K$
\begin{align*}
    \abs{\norm{ \Pi_1 C' \ket{\phi^+} \ket{0 \dots 0} }^2_2 -\norm{ \Pi_1 C' \ket{\phi^-} \ket{0 \dots 0}}^2_2} =1,
\end{align*}
which implies that either of the following holds:
\begin{align}
    \norm{ \Pi_1 C' \ket{\phi^+} \ket{0 \dots 0} }^2_2 = 1 \text{ (resp. $0$)} \qquad \text{and} \qquad \norm{ \Pi_1 C' \ket{\phi^-} \ket{0 \dots 0} }^2_2 = 0 \text{ (resp. $1$)}. 
    \label{eq:opt_sp_dist}
\end{align}
In the following, we assume that the first case holds (the argument for the second case in parentheses is identical). 
Now consider two subsets $S_1, S_2 \in \mathcal{S}_K$ and let again  $\ket{\phi^+_i} := \frac{1}{\sqrt{2}} \left(\ket{S_i} + \ket{\bar{S_i}}\right) $ and $\ket{\phi^-} := \frac{1}{\sqrt{2}} \left(\ket{S_i} - \ket{\bar{S_i}}\right)$ for $i \in \{1,2\}$.~\cref{eq:opt_sp_dist} tells us that
\begin{align}
    \norm{ \Pi_1 C' \ket{\phi^+_1} \ket{0 \dots 0} }^2_2 = 1 \qquad \text{and} \qquad \norm{ \Pi_1 C' \ket{\phi^-_2} \ket{0 \dots 0} }^2_2 = 0,
    \label{eq:perfect_disc}
\end{align}
which implies that $\ket{\phi^+_1}$ and $\ket{\phi^-_2}$ can be perfectly discriminated. Writing out the states in full, we have
\begin{align*}
    \ket{\phi^+_1} = \frac{1}{\sqrt{2}} \left( \frac{1}{\sqrt{K}} \sum_{x \in S_1} \ket{x}  + \frac{1}{\sqrt{N-K}} \sum_{x \in \bar{S}_1} \ket{x}   \right) ,
\end{align*}
and
\begin{align*}
    \ket{\phi^-_2} = \frac{1}{\sqrt{2}} \left( \frac{1}{\sqrt{K}} \sum_{x \in S_2} \ket{x}  - \frac{1}{\sqrt{N-K}} \sum_{x \in \bar{S}_2} \ket{x}   \right) .
\end{align*}
We have that the fidelity between $\ket{\phi^+_1}$ and $\ket{\phi^-_2}$ can be computed as
\begin{align*}
    \abs{\bra{\phi^+_1}\ket{\phi^-_2}}^2 &= \frac{1}{4} \abs{ \frac{|S_1 \cap S_2|}{K}  - \frac{|S_1 \cap \bar{S}_2|}{\sqrt{K(N-K)}}   +  \frac{|\bar{S}_1 \cap S_2|}{\sqrt{K(N-K)}}  - \frac{|\bar{S}_1 \cap \bar{S}_2|}{N-K}  }^2\\
    &= \frac{1}{4} \abs{ \frac{|A_1|}{K}  - \frac{|A_2|}{\sqrt{K(N-K)}}   +  \frac{|A_3|}{\sqrt{K(N-K)}}  - \frac{|A_4|}{N-K}  }^2 ,
\end{align*}
where  $A_1 := S_1 \cap S_2, A_2:= S_1 \cap \bar{S}_2, A_3 := \bar{S}_1 \cap S_2 \text{ and } A_4 := \bar{S}_1 \cap \bar{S}_2$.
A simple Venn diagram shows that $\bigcup_j A_j = \omega$ and $\bigcap_j A_j = \emptyset$. Now set $|A_1| = x$, where $x$ should satisfy $\max \{2K - N,0\} \leq x \leq K$. We can then express the sizes of the other sets $A_2$, $A_3$ and $A_4$ as functions of $N$, $K$ and $x$. We have $|A_2| = |S_1 \cap \bar{S}_2| = |S_1 \setminus (S_2 \cap S_1)| = K-x $. A similar argument gives $|A_3| = K-x $. Using the principle of inclusion and exclusion, we have $|S_1 \cup S_2| = |S_1| + |S_2| - A_1 = 2K - x$. Hence, $A_4 = N - |S_1 \cup S_2| = N - 2K + x$. Our expression for the overlap thus becomes
\begin{align}
    \abs{\bra{\phi^+_1}\ket{\phi^-_2}}^2 = \frac{1}{4} \abs{ \frac{x}{K} - \frac{N - 2K + x}{N-K}  }^2 =
    \frac{1}{4} \left(\frac{(2K -N)(K-x)}{K(N-K)} \right)^2 .
\label{eq:overlap_function}
\end{align}
Note that~\cref{eq:overlap_function} becomes $0$ when $K = N/2$ (independent of $x$), as expected. For a fixed $1 \leq K \leq N-1$, we now want to maximize the overlap as a function of $x$, i.e., solve
\begin{align*}
    F_{\text{max}} = \max_{S_1,S_2}  \abs{\bra{\phi^+_1}\ket{\phi^-_2}}^2 =  \max_{\max \{0,2K-N\}  \leq x \leq K} \; \frac{1}{4} \left(\frac{(2K-N)(K-x)}{K(N-K)} \right)^2 .
\end{align*}
The double derivative test gives
\begin{align*}
    \frac{\partial^2}{\partial x^2} \frac{1}{4} \left(\frac{(2K-N)(K-x)}{K(N-K)} \right)^2 =   \frac{(2K-N)^2}{2 (K(N-K))^2} \geq 0,
\end{align*}
which implies that the function is convex in $x$. Therefore, the maximum value is obtained at the extremal points of the interval. For simplicity, assume for now that $K \leq N/2$ so that $ \max \{0,2K-N\} = 0$. For $x=K$, the overlap becomes $0$ (this is because $S_1 = S_2$ and thus $\ket{\phi^{+}_1}$ is orthogonal to $\ket{\phi^{-}_2}$), but for $x=0$ we find that the overlap becomes $\frac{1}{4} \left( \frac{N-2K}{K-N}\right)^2 $.
Note that this expression is valid ($\leq 1$) for $K \leq 3N/4,$ which is true by assumption. For $K>N/2$ the two extremal points are $x=K$, which again gives overlap $0$, and $x=2K-N$, which gives overlap $\frac{1}{4}\left(2 -\frac{N}{K}\right)^2$. The latter is valid ($\leq 1$) for $K\geq N/4$ which is again true by assumption. Combining both, we find that the maximum overlap as a function of $N$ and $K$ can be
\begin{align*}
    \begin{cases}
        \frac{1}{4} \left( \frac{N-2K}{K-N}\right)^2 \qquad &\text{for } 1 \leq  K \leq N/2, \\
        \frac{1}{4}\left(2 -\frac{N}{K}\right)^2 \qquad &\text{for }  N/2 < K \leq N-1.
    \end{cases}
\end{align*}
This is symmetric around $K= N/2$, as substituting $K = N-K'$ with $1 \leq K' \leq N-1$ in $\frac{1}{4}\left(2 -\frac{N}{K}\right)^2$ gives back $ \frac{1}{4} \left( \frac{N-2K'}{K'-N}\right)^2$, which means that our assumption that $K \leq N/2$ can be made without loss of generality. Therefore, for any $K \neq N/2$ we have that $\ket{\phi^{+}_1}$ and $\ket{\phi^-_{2}}$ will not be perfectly orthogonal. However, \cref{eq:perfect_disc} tells us that we can discriminate these two non-orthogonal states perfectly, which is impossible by the Helstrom bound~\cite{helstrom1969quantum}. Hence, there cannot be a perfect swapper for any other case than $K=N/2$. 
\end{proof}

The intuition of why a perfect swapper is only possible when $K=N/2$ follows from the fact that it can be used to construct a distinguisher for the conjugate states, which are generally non-orthogonal unless $K=N/2$.

\subsection{Probabilistic zero-error algorithm for any subset}
\label{sec:zero_erro_alg}

Now that we know a perfect swapper is not possible for $K \neq N/2$, we want to create the ``next best thing'': a swapper that sometimes fails, but only knowingly so. In the following, we show that one can trade success probability to achieve this zero-error requirement. An algorithm satisfying this property is known as a \emph{Las Vegas algorithm}. The idea is to rebalance the Grover diffusion operator $U = 2\dyad{+^n} - \mathbb{I}$ by adding a term proportional to $\mathbb{I}$. To this end, we use one auxiliary qubit and the parameterised gate
\begin{equation}
    W(q) = e^{i \arccos(\sqrt{q}) Y} = \begin{pmatrix} \sqrt{q} & -\sqrt{1-q}\\ \sqrt{1-q} & \sqrt{q} \end{pmatrix} ,
\label{eq:W_gate}
\end{equation}
to construct the following circuit:
\begin{figure}[h!]
    \centering
    \begin{quantikz}[wire types={q,q}]
    \lstick[1]{$\ket{0}$} & \gate{W(q)} & \octrl{1} & \gate{Z^b}  & \gate{W(q)^\dag} & \meter \qw & \rstick[1]{$0$}\setwiretype{c} \\
    \lstick[1]{$\ket{S}$} & \qw \qwbundle{n} & \gate{U} & \qw & \qw & \qw & \qw \rstick[1]{$\ket{\bar{S}}$}
    \end{quantikz}
    \caption{Probabilistic zero-error algorithm to swap $\ket{S}$ of cardinality $K$ to $\ket{\bar{S}}$ of cardinality $N-K$.}
    \label{fig:zero_error_circuit}
\end{figure}

\noindent Here we have that $U$ is controlled on the zero state of the auxiliary qubit, denoted by $C_U$, and $Z$ is the $Z$-gate with some power $b \in \{0,1\}$. The role of $b$ is to allow for both positive and negative deviations from the optimal case of $K=N/2$. 

\begin{theorem}[Zero-error swapper]
Let $S \in \mathcal{S}_K$. Then for any $1 \leq K \leq N-1$, there exists a zero-error (Las Vegas) quantum algorithm that maps $\ket{S}$ to $\ket{\bar{S}}$, with probability
\begin{align*}
    \frac{\min\{K,N-K\} }{\max\{K,N-K\}}.
\end{align*}
\label{thm:zero_error_swapper}
\end{theorem}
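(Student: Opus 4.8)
The plan is to analyze the circuit of~\cref{fig:zero_error_circuit} directly, tracking the joint state of the flag qubit and the $n$-qubit register through each gate while working in the two-dimensional subspace spanned by $\ket{S}$ and $\ket{\bar{S}}$. The only nontrivial ingredient is the action of the controlled diffusion operator, for which I would reuse the identity established just after~\cref{thm:perfect_quantum_cs}, namely
\begin{align*}
    U\ket{S} = \left(2\tfrac{K}{N}-1\right)\ket{S} + 2\sqrt{\tfrac{K}{N}\left(1-\tfrac{K}{N}\right)}\,\ket{\bar{S}} .
\end{align*}
Writing $k := K/N$ and starting from $\ket{0}\ket{S}$, the gate $W(q)$ prepares the flag in $\sqrt{q}\ket{0}+\sqrt{1-q}\ket{1}$, and the subsequent $C_U$ applies $U$ only on the $\ket{0}$ branch, giving $\sqrt{q}\,\ket{0}\,U\ket{S} + \sqrt{1-q}\,\ket{1}\ket{S}$.

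Next, $Z^b$ attaches a factor $(-1)^b$ to the $\ket{1}$ branch, and $W(q)^\dagger$ recombines the two branches. Collecting the terms in which the flag returns to $\ket{0}$, I would obtain the (unnormalized) register state
\begin{align*}
    \big[\,q(2k-1) + (-1)^b(1-q)\,\big]\ket{S} + 2q\sqrt{k(1-k)}\,\ket{\bar{S}} .
\end{align*}
This display already exhibits the whole mechanism: the $\ket{\bar{S}}$ amplitude is untouched by the interference, whereas the unwanted $\ket{S}$ amplitude acquires a contribution from the idle $\ket{1}$ branch whose sign is dictated by $b$.

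The zero-error requirement is then simply that the $\ket{S}$ coefficient vanishes. For $K<N/2$ (so $2k-1<0$) I would take $b=0$ and solve $q(2k-1)+(1-q)=0$, giving $q=\tfrac{1}{2(1-k)}=\tfrac{1}{2(1-K/N)}$; for $K\geq N/2$ (so $2k-1\geq 0$) I would take $b=1$ and solve $q(2k-1)-(1-q)=0$, giving $q=\tfrac{1}{2k}=\tfrac{N}{2K}$. These are exactly the parameter choices listed beneath~\cref{fig:zero_error_circuit}, and with them the register, conditioned on the flag reading $0$, holds $\ket{\bar{S}}$ exactly---establishing the Las Vegas (zero-error) property. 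The success probability is the squared norm of the surviving branch, $4q^2k(1-k)$, which evaluates to $\tfrac{k}{1-k}=\tfrac{K}{N-K}$ in the first case and to $\tfrac{1-k}{k}=\tfrac{N-K}{K}$ in the second. Both equal $\tfrac{\min\{K,N-K\}}{\max\{K,N-K\}}$, as claimed.

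The computation itself is routine; the one point that requires care, and the only real conceptual content, is the case distinction around $K=N/2$. Since $U\ket{S}$ overshoots toward $\ket{S}$ with coefficient $2k-1$ whose sign flips at $K=N/2$, no single fixed interference pattern can cancel it on both sides---which is precisely the role of the phase $Z^b$, allowing the correcting contribution from the $\ket{1}$ branch to enter with either sign. I would close by noting that the two formulas for $q$ are exchanged under $k\mapsto 1-k$, mirroring the $S\leftrightarrow\bar{S}$ symmetry, and that the resulting probability is consistent with the optimality claim referenced in the main text.
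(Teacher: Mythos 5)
Your proposal is correct and follows essentially the same route as the paper's proof: track the state through the circuit of~\cref{fig:zero_error_circuit}, obtain the $\ket{0}$-branch amplitude $\left(qU + (-1)^b(1-q)\mathbb{I}\right)\ket{S}$, force the $\ket{S}$ coefficient $q(2K/N-1)+(-1)^b(1-q)$ to vanish, and read off the success probability $4q^2\frac{K(N-K)}{N^2}=\frac{\min\{K,N-K\}}{\max\{K,N-K\}}$, with identical choices of $b$ and $q$. The only point handled slightly differently is the case split: the paper verifies it by checking when the squared norm is at most $1$ (equivalently, when $q\in[0,1]$ so that $W(q)$ is a genuine unitary), whereas you argue via the sign of $2K/N-1$; your argument is sound, and indeed both choices give $q\in[\tfrac{1}{2},1]$ in their respective regimes, but it would be worth stating explicitly that $q\leq 1$ is what makes $W(q)$ well defined.
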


\begin{proof}
We will show that for suitable choices of $q$ and $b$, the quantum circuit in~\cref{fig:zero_error_circuit} achieves the desired transformation and corresponding success probability. The state just before the measurement of the auxiliary qubit can be written as
\[
     W(q)^\dag Z^b C_U W(q) \ket{0} \ket{S} = \ket{0} \left(q U + (-1)^b (1-q) \mathbb{I}\right) \ket{S} + \ket{1}\ket{g}
\]
with $\ket{g}$ some state we do not care about. Using that $U = 2\dyad{+^n} - \mathbb{I}$ gives
\[
   q U + (-1)^b (1-q) \mathbb{I} = 2q \dyad{+^n} + \left( (-1)^b (1 - q) - q \right) \mathbb{I}.
\]
We will use
\[
    \dyad{+^n} \ket{S} = \frac{K}{N} \ket{S} + \sqrt{\frac{K(N-K)}{N^2}} \ket{\bar{S}},
\]
to obtain
\begin{align}
    \left(q U + (-1)^b (1-q) \mathbb{I}\right)\ket{S} =\left( 2q \frac{K}{N} + (-1)^b (1 - q) - q \right) \ket{S}
    + 2q \sqrt{\frac{K(N-K)}{N^2}} \ket{\bar{S}}.
    \label{eq:comp_S_set_zero}
\end{align}
To ensure that $\left(q U + (-1)^b (1-q)\mathbb{I}\right)\ket{S}$ is proportional to $\ket{\bar{S}}$, we set the coefficient of $\ket{S}$ to zero:
\begin{equation*}
    2q \frac{K}{N} + (-1)^b (1 - q) - q = 0.
\end{equation*}
This gives two valid solutions:
\begin{itemize}
    \item If $b = 0$, the equation becomes $2q \frac{K}{N} + 1 - 2q = 0$, which gives
    \[
        q = \frac{1}{2(1 - K/N)}.
    \]
    \item If $b = 1$, the equation becomes $2q \frac{K}{N} - 1 = 0$, which gives
    \[
        q = \frac{N}{2K}.
    \]
\end{itemize}
It remains to check when these choices yield a valid success probability (or equivalently, for what values of $q$ the operator $W(q)$ is indeed unitary), i.e., when 
\[
\norm{\left(q U + (-1)^b (1-q)\mathbb{I}\right)\ket{S}}^2 \leq 1
\]
Using that in~\cref{eq:comp_S_set_zero} we made the amplitude on $\ket{S}$ zero and the $\ket{\bar{S}}$-part only depends implicitly on $b$ via $q$, we obtain
\[
    \left\|\left(q U + (-1)^b (1-q)\mathbb{I}\right)\ket{S} \right\|^2 = 4q^2 \frac{K(N-K)}{N^2}.
\]
We find that for the value of $q$ corresponding to $b = 0$, the norm squared is $\frac{K}{N - K}$, which is at most $1$ when $K \leq N/2$. When $b = 1$, we obtain $\frac{N - K}{K}$, which is at most $1$ when $K \geq N/2$. To summarise, we set
\begin{equation*}
    \begin{cases}
    b = 0 , \; q = \frac{1}{2(1 - K/N)} &\text{ if } K < \frac{N}{2}, \\
    b = 1 , \; q = \frac{N}{2K} &\text{ if } K \geq \frac{N}{2},
    \end{cases}
\end{equation*}
in~\cref{fig:zero_error_circuit}. The overall success probability is given by the postselection success probabilities in both cases (the norms squared), which can be captured in a single expression as
\begin{align*}
\frac{\min\{K,N-K\} }{\max\{K,N-K\}}.
\end{align*}
\end{proof}

\subsection{Zero error: optimality, auxiliary qubits and multiple samples}
\label{sec:zero_error}

In this section we will show that the algorithm from~\cref{sec:zero_erro_alg} is in fact optimal when considering its restricted setting: we assume that there is some designated flag qubit, which when measured to be in the $\ket{0}$ state indicates that the state has been successfully created to ensure zero error (putting the flag being in $\ket{0}$ is without loss of generality).  Except for this flag qubit, we require that the circuit uses no other extra auxiliary qubits. We prove the following proposition.

\begin{proposition} 
    For all $1 \leq K \leq N - 1$, under the zero-error condition and the use of only one extra ancilla (used as a flag to indicate that the operation was successful),~\cref{thm:zero_error_swapper} is optimal in terms of achievable success probability for swapping $\ket{S}$ to $\ket{\bar{S}}$.
\label{prop:optimality}
\end{proposition}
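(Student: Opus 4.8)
The plan is to reduce the question to a statement about contractions on the $N$-dimensional register, for which the single-ancilla restriction is exactly what makes the bound tight. Any zero-error swapper of the allowed form is a unitary $C$ acting on the flag qubit together with the $n$-qubit register, applied to $\ket{0}\ket{S}$, after which the flag is measured. Post-selecting the flag on $\ket{0}$ picks out the corner block $M = (\bra{0}\otimes\mathbb{I})\,C\,(\ket{0}\otimes\mathbb{I})$, an operator on $\mathbb{C}^N$. The zero-error condition forces the successful output to be proportional to $\ket{\bar S}$, so $M\ket{S} = c_S\ket{\bar S}$ with $c_S=\sqrt{p_S}\,e^{i\theta_S}$ and $p_S$ the success probability on $S$; since $M$ is a corner of a unitary it satisfies $\lVert M\rVert \le 1$. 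Conversely, every contraction on $\mathbb{C}^N$ dilates to a unitary using a single extra qubit, so the achievable success probabilities are exactly those realized by contractions $M$ with $M\ket{S}\propto\ket{\bar S}$. I would stress that it is precisely the ``one ancilla'' hypothesis that confines the failure branch to an $N$-dimensional space, equivalently $\lVert M\rVert\le 1$ on $\mathbb{C}^N$; this is the ingredient that a pairwise inner-product (Cauchy--Schwarz) argument misses and that forces the collective bound below.

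Next I would exploit the permutation symmetry of the problem. For any permutation $\pi$ of $\omega$ with permutation unitary $P_\pi$, the operator $P_\pi M P_\pi^\dagger$ is again a contraction satisfying the same hypotheses (it maps $\ket{S}$ to a multiple of $\ket{\bar S}$), so by convexity of the unit ball the average $\tilde M = \frac{1}{N!}\sum_\pi P_\pi M P_\pi^\dagger$ is a contraction lying in the commutant of the symmetric group, hence $\tilde M = \alpha\,\mathbb{I} + \beta\,\dyad{+^n}$. Writing $\ket{+^n} = \sqrt{K/N}\,\ket{S} + \sqrt{(N-K)/N}\,\ket{\bar S}$ and demanding $\tilde M\ket{S}\propto\ket{\bar S}$ forces the $\ket{S}$-coefficient to vanish, i.e.\ $\alpha = -\beta K/N$, after which the $\ket{\bar S}$-amplitude equals $\beta\sqrt{K(N-K)}/N$. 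The two eigenvalues of $\tilde M$ are $\alpha$ (with multiplicity $N-1$) and $\alpha+\beta$; substituting $\alpha=-\beta K/N$ gives magnitudes $\sqrt{p}\,\sqrt{K/(N-K)}$ and $\sqrt{p}\,\sqrt{(N-K)/K}$, where $\sqrt p$ denotes the symmetrized $\ket{\bar S}$-amplitude. Imposing $\lVert\tilde M\rVert\le 1$ on the larger of the two then yields exactly $p \le \min\{K,N-K\}/\max\{K,N-K\}$. Tightness is immediate: the success-branch operator of the algorithm in \cref{thm:zero_error_swapper} is precisely of the form $\alpha\mathbb{I}+\beta\dyad{+^n}$ and saturates this eigenvalue bound, so the bound is optimal.

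The main obstacle is to carry the symmetrization through for \emph{arbitrary} success amplitudes rather than the constant, phase-free case sketched above. Two issues arise: the probabilities $p_S$ need not be equal across subsets, and the amplitudes $c_S$ may carry $S$-dependent phases that could cancel under averaging and weaken the bound. For the first I would use convexity: the symmetrized amplitude is the average of $\sqrt{p_S}$ over $\mathcal{S}_K$, so the eigenvalue bound controls $\big(\tfrac{1}{|\mathcal{S}_K|}\sum_S\sqrt{p_S}\big)^2$ and hence, by Jensen, the worst-case $\min_S p_S$. The phases are the genuinely delicate point, since naive averaging of $M$ only bounds $\lvert\mathrm{avg}_S\,c_S\rvert$. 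The clean way around this is to phrase the constraint as Gram-matrix domination, namely that $\big[\,\overline{c_{S_1}}c_{S_2}\langle\bar S_1|\bar S_2\rangle\,\big]\preceq\big[\,\langle S_1|S_2\rangle\,\big]$, and to argue that this positive-semidefinite ordering is extremal at trivial phases because any phase pattern merely rotates amplitude off the binding uniform direction $\ket{+^n}$; establishing this extremality rigorously (e.g.\ by an averaging or SDP-duality argument that respects the real, permutation-symmetric optimum) is the step requiring the most care. With it in hand, the bound $p\le\min\{K,N-K\}/\max\{K,N-K\}$ holds in full generality and matches \cref{thm:zero_error_swapper}.
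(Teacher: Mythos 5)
Your reduction to the corner block $M=(\bra{0}\otimes\mathbb{I})\,C\,(\ket{0}\otimes\mathbb{I})$ is sound, and it is a genuinely different and in some ways sharper framing than the paper's: the equivalence ``single-ancilla zero-error swapper $\Leftrightarrow$ contraction $M$ on $\mathbb{C}^N$ with $M\ket{S}=c_S\ket{\bar S}$'' is correct (Gram consistency makes the dilation well-defined despite the linear dependences among subset states), and your symmetrization and eigenvalue computation are correct as far as they go. But the proof has the genuine gap that you yourself flag and do not close: the symmetrized operator $\tilde M$ maps $\ket{S}$ to $\bigl(\frac{1}{\abs{\mathcal{S}_K}}\sum_{S'}c_{S'}\bigr)\ket{\bar S}$, so the eigenvalue bound controls only $\abs{\mathrm{avg}_{S'}\,c_{S'}}$ — \emph{not} $\mathrm{avg}_{S'}\sqrt{p_{S'}}$, as your second paragraph asserts. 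With $S$-dependent phases (which the proposition must allow; the paper's own proof explicitly carries factors $e^{i\theta_j}$), the average can be tiny or zero while every $p_S=1$ a priori, so nothing follows about $\min_S p_S$. The proposed repair — Gram domination $\bigl[\overline{c_{S_1}}c_{S_2}\langle\bar S_1|\bar S_2\rangle\bigr]\preceq\bigl[\langle S_1|S_2\rangle\bigr]$ plus ``extremality at trivial phases'' — is stated, not proved, and it is not a routine step: conjugating by $\mathrm{diag}(e^{i\theta_S})$ rotates the left-hand matrix but not the right-hand one, so phase patterns cannot in general be stripped from a PSD-domination constraint. As written, your argument establishes optimality only for swappers whose amplitudes $c_S$ share a common phase. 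One concrete route to completion inside your framework: pass to $M\otimes\overline{M}$, which maps $\ket{S}\ket{S}\mapsto p_S\,\ket{\bar S}\ket{\bar S}$ and kills the phases outright; the price is that the commutant of the diagonal permutation action on $(\mathbb{C}^N)^{\otimes 2}$ is larger than $\mathrm{span}\{\mathbb{I},\dyad{+^n}\}$, so the norm analysis must be redone there.

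For comparison, the paper sidesteps phases entirely with a two-input argument: it takes $S_1=[K]$ and $S_2=\omega\setminus[N-K]$, so that for $K\geq N/2$ the complements are disjoint, and in the conserved inner product $\frac{2K-N}{K}=\abs{(1-\epsilon)e^{i(\theta_2-\theta_1)}\langle\bar S_1|\bar S_2\rangle+\epsilon\langle g_1|g_2\rangle}$ the phase-carrying term vanishes, giving $1-\epsilon\leq\frac{N-K}{K}$; for $K\leq N/2$ it runs the same two-state argument on $C_K^\dagger$, viewed as a swapper from $\bar S$ (cardinality $N-K\geq N/2$) back to $S$ — the place where the single-ancilla hypothesis enters, since the flag-$0$ success branch carries no junk register. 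It is worth noting that a naive \emph{forward} two-state argument for $K<N/2$ (disjoint $S_1,S_2$, so $\langle S_1|S_2\rangle=0$) only yields the weaker bound $p\leq\frac{N-K}{2N-3K}$, which is exactly why the paper needs the inverse trick and why your global, all-inputs symmetrization would be an attractive alternative if completed: it extracts the tight constant in one pass without inverting the circuit. But until the phase step is supplied, the proposal does not prove the proposition.
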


\begin{proof}
For cardinality $K$, let $C_K$ be an arbitrary circuit which probabilistically swaps $\ket{S}$ for subsets $S \in \mathcal{S}_K$ to $\ket{\bar{S}}$. For full generality, we allow the preparation of $\ket{\bar{S}}$ up to an arbitrary global phase factor $e^{i \theta}$, $\theta \in [0, 2\pi]$. Let $0\leq \epsilon \leq 1$. To achieve success probability $1-\epsilon$, $C_K$ should be able to perform a mapping of the form
\begin{align}
    C_K \ket{0} \ket{S_j}  = e^{i\theta_j} \sqrt{1-\epsilon}  \ket{0} \ket{\bar{S}_j} + \sqrt{\epsilon} \ket{1}\ket{g_j}
\label{eq:generic_swapped_state}
\end{align}
for some arbitrary subset $S_j \in \mathcal{S}_K$, phase $\theta_j \in [0,2\pi]$, and some garbage state $\ket{g_j}$ (which can absorb its own phase factor). Now take two subsets $S_1$ and $S_2$ defined as $S_1 = [K]$ and $S_2 = \omega\setminus[N-K]$ where, with a slight abuse of notation, we write $[K]$ to indicate the $K$ first strings in $\omega$ under lexicographical ordering. We distinguish two cases:
\begin{itemize}
\item $K \geq N/2$. In this case, we have that the inner product between input states $\ket{S_1}$ and $\ket{S_2}$ is given by (recall that each $\ket{S_j}$ only has positive amplitudes on its computational basis states)
\begin{align*}
    \bra{S_1}\ket{S_2}= \frac{2K-N}{K} .
\end{align*}
The overlap between output states is given by
\begin{align*}
     \abs{\bra{0}\bra{S_1} C_K^\dagger C_K \ket{0}\ket{S_2}} = \abs{(1-\epsilon) e^{i(\theta_{2}-\theta_{1})} \bra{\bar{S}_1}\ket{\bar{S}_2} + \epsilon \bra{g_1}\ket{g_2}}.
\end{align*}
By our construction of the subsets we have that $\bra{\bar{S}_1}\ket{\bar{S}_2} = 0$. Since the overlap must be conserved under unitary transformations, we must have
\begin{align*}
    \frac{2K-N}{K}  = \epsilon \abs{\bra{g_1}\ket{g_2}}.
\end{align*}
To find an upper bound on $1-\epsilon$, we can solve the following optimization problem:
\begin{align*}
    \max_{\epsilon, \abs{\bra{g_1}\ket{g_2}}}\quad & 1 - \epsilon \\
    \text{s.t.} \quad & \frac{2K - N}{K} = \epsilon \abs{\bra{g_1}\ket{g_2}}, \\
    & \epsilon \in [0,1], \\
    & \abs{\bra{g_1}\ket{g_2}} \in [0,1].
\end{align*}
This gives us 
\begin{align*}
    1-\epsilon \leq 1-\frac{2K-N}{K} = \frac{N - K}{K}.
\end{align*}

\item $K \leq N/2$. Let us first make the observation that $C_K^{\dagger}$ gives us an approximate swapper to go from $\ket{\bar{S}}$ to $\ket{S}$, where the cardinality of $\bar{S}$ is $K' = N-K$ with $K' \geq N/2$. In particular
\begin{align*}
    \abs{ \bra{0} \bra{S_j}  C_K^\dagger \ket{0}\ket{\bar{S}_j} } = \abs{ \bra{0}\bra{\bar{S}_j} C_K \ket{0}\ket{S_j}} =  \sqrt{1-\epsilon} ,
\end{align*}
by ~\cref{eq:generic_swapped_state}.
This means we can write the action of $C^\dagger_K$ on $\ket{0}\ket{\bar{S}_j}$ to be of the form
\begin{align*}
    C_K^\dagger \ket{0} \ket{\bar{S}_j}  = e^{i \bar{\theta}_j} \sqrt{1-\epsilon}\ket{0} \ket{S_j} + \sqrt{\epsilon} \ket{1}\ket{\bar{g}_j},
\end{align*}
again for some garbage state $\ket{\bar{g}_j}$ and some phase $\bar{\theta}_j \in [0, 2\pi]$. 
We can now follow the same argument as the previous case, but looking at the inner product between the complement states. By our construction of the subsets we have
\begin{align*}
    \bra{\bar{S}_1} \ket{\bar{S}_2} = \frac{2K' -N}{K'} = \frac{N-2K}{N-K} . 
\end{align*}
But this must be equal to  
\begin{align*}
    \abs{\bra{0}\bra{\bar{S}_1} C_K C_K^\dagger \ket{0}\ket{\bar{S}_2}} =     \abs{(1-\epsilon) e^{i(\bar{\theta}_{2}-\bar{\theta}_{1})} \bra{S_1}\ket{S_2} + \epsilon \bra{\bar{g}_1}\ket{\bar{g}_2}},
\end{align*}
where $\bra{S_1} \ket{S_2} = 0$.  Maximizing $1 -\epsilon$ as done in the previous case, but
using $K'$ instead of $K$, we find
\begin{align*}
    1-\epsilon \leq \frac{N}{K'} - 1 = \frac{K}{N-K}.
\end{align*}
\end{itemize}
The optimal probability of the two cases can be summarized as $\frac{\min\{K,N-K\} }{\max\{K,N-K\}}$, showing that under these conditions the algorithm from~\cref{sec:zero_erro_alg} is optimal.
\end{proof}

Note that the above proof is an arguably easier proof of the result in~\cref{sec:impossibility}, as it only relies on the conservation of the absolute value of the inner product under unitary transformations. There are several more observations we can make in relation to the proof of~\cref{prop:optimality}.

\paragraph{Adding extra auxiliary qubits.}
The proof of~\cref{prop:optimality} shows that, under the condition of having only one additional ancilla to work as flag qubit, the algorithm of~\cref{sec:zero_erro_alg} achieves the optimal success probability. What happens when extra auxiliary qubits are allowed (initialized in $\ket{0 \dots 0}$), but we still require a single flag qubit to make the algorithm zero-error (yet still probabilistic)? 

In the $K \geq N/2$ case, it is easy to show that the above proof still holds. For simplicity, we remove the relaxation that the state can be prepared up to an arbitrary global phase, as it does not change the argument. We can write the action of the swapping circuit in this case as
\begin{align*}
    C_K \ket{0} \ket{S_j} \ket{0 \dots 0}  = \sqrt{1-\epsilon}\ket{0} \ket{\bar{S}_j} \ket{h_j} + \sqrt{\epsilon} \ket{1}\ket{G_j},
\end{align*}
for some garbage states $\ket{h_j}$ and $\ket{G_j}$, so we still have that the inner product of the output state after applying $C_K$ to two input subset states $\ket{S_1}$ and $\ket{S_2}$ is given by
\begin{align*}
    (1-\epsilon) \bra{\bar{S}_1}\ket{\bar{S}_2}\bra{h_1}\ket{h_2} + \epsilon \bra{G_1}\ket{G_2} = \epsilon \bra{G_1}\ket{G_2},
\end{align*}
again using that $\bra{\bar{S}_1}\ket{\bar{S}_2} = 0$ when $K \geq N/2$.

However, the argument for $K\leq N/2$ no longer works since $\ket{h_1}$ and $\ket{h_2}$ might be two completely different states. This means that to swap backwards using $C_K^\dagger$ one might have to use a different initial state for each different input $\bar{S}$.

An easy example that contradicts the bound in the case when extra auxiliary qubits can be used is the case of $K=1$ (where the bound gives a maximum success probability very close to $0$). An algorithm that only depends on $K$ and perfectly makes $\ket{\bar{S}}$ is easily given: (i) read out the only element $x \in S$, (ii) create the uniform superposition, (iii) use an additional flag qubit to mark the state $\ket{S}$, which can be done by applying a marking operation that marks all basis states conditioned on not being equal to $\ket{x}$, and (iv) use exact Grover's algorithm to create $\ket{\bar{S}}$~\cite{zalka1999grover}. This algorithm clearly needs additional auxiliary qubits, as it cannot both have a uniform superposition on $n$ qubits and a flag controlled on $\ket{x}$ (which also needs to be stored in a $n$-qubit register). Note that in this case the problem is also classically trivial, because after observing a single sample $x$ one has perfectly learned $\bar{S} = \omega \setminus \{ x\}$.

\paragraph{Multiple copies and no additional auxiliary qubits.}
The above argument also shows that a joint measurement on multiple copies does not help if one does not have additional auxiliary qubits. That is, suppose one again requires a probabilistic zero-error algorithm that is able to perform the mapping
\begin{align*}
    C_K \ket{0} \ket{S_j}^{\otimes k}  = \sqrt{1-\epsilon}\ket{0} \ket{\bar{S}_j} \ket{h_j} + \sqrt{\epsilon} \ket{1}\ket{G_j},
\end{align*}
using $k$ copies of the input state. Considering $K\geq N/2$, the same argument as before gives us a maximum success probability of 
\begin{align*}
    1-\epsilon \leq 1- \left(\frac{2K-N}{K} \right)^k,
\end{align*}
which is the same success probability one would achieve by repeating our algorithm $k$ times, resetting the flag qubit to $\ket{0}$ whenever $\ket{1}$ is measured and the protocol failed. We currently do not know whether a joint measurement on multiple copies and the use of extra auxiliary qubits would yield an algorithm with higher success probability.

\section{Classical samples: lower and upper bounds}
\label{sec:classical}

\subsection{Index query model} 
\label{sec:worst_case_query}

We will start by adopting a slightly more powerful classical setting, as it allows for easier proofs and it will be useful later when we construct $S$ from pseudorandom permutations, which naturally assumes a query complexity setting. Instead of assuming that the algorithm is given samples from $S$ at random, we assume that it has access to an oracle for which it can query elements of $S$ (assuming arbitrary unknown ordering on each of the possible input sets $S$). That is, it has access to an oracle $O_{\textup{index}}$ such that $O_{\textup{index}}(i) = y$ with $y$ the $i$th element from $S$. Clearly, this is a stronger access model, as the player can generate the random samples by querying $O_{\textup{index}}$ for randomly generated indices.

Intuitively, one would immediately expect that complement sampling with classical samples is intractable: when there is no structure to the set $S$ (and therefore no structure to $\bar{S}$), the best strategy seems to be to just gather (or in this case, query) as many distinct samples from $S$ and just output anything which is not in the set of collected samples. We will now formalise this intuition and show that it is indeed the correct way to look at the problem. We will use Yao's minimax principle~\cite{yao1977probabilistic} to prove our lower bound. For a fixed input size $N$ and cardinality $K$, let $F(S,y)$ be a function describing the relation for a subset $S$ from a subset family $\mathcal{S}_{K}$ and candidate element $y$ such that
\begin{align*}
    F(S,y) = \begin{cases}
        1 \text{ if } y \in \bar{S}\\
        0 \text{ otherwise}.
    \end{cases}
\end{align*}
Let $A$ be the set of all possible deterministic algorithms allowed to make $T$ queries to the index oracle for $S$.  Write $P(a,S) = F(S,a(S))$, which means that $P = 1$ if and only if algorithm $a \in A$ on input $S$ outputs any $y \in \bar{S}$. One can view $P(a,S)$ as a $|A| \times |\mathcal{S}_{K}|$- matrix where the rows label the different choices of deterministic query algorithms, the columns label the different instances of $S$, and the corresponding entry is $1$ only if the algorithm provides an output satisfying the relation. By the minimax theorem, we have
\begin{align*}
    \min_{\mu} \max_{a} e_a^{\intercal} P \mu = \max_{\rho} \min_{S} \rho^{\intercal} P e_S,
\end{align*}
where $e_a$ (resp.~$e_S$) is a $|A|$-dimensional (resp. $|\mathcal{S}_{K}|$-dimensional) unit vector, and $\rho$ (resp.~$\mu$) is a $|A|$-dimensional (resp.~$|\mathcal{S}_{K}|$-dimensional) vector of non-negative reals that sum to one. Since $\mu$ is a probability distribution, we have that the left-hand side describes the probability that the best deterministic $T$-query algorithm is correct on the hardest distribution over inputs $\mu$. On the right-hand side, $\rho^{\intercal} P e_S $ is the success probability on input $S$ achieved by the randomised algorithm given by probability distribution $\rho$ over deterministic algorithms. Hence, the right-hand side gives the highest worst-case success probability achievable by randomised $T$-query algorithms. Since this holds for all $T$, we have that 
\begin{align*}
    R_{\epsilon} (F) = \max_{\mu} D_{\epsilon}^{\mu} (F),
\end{align*}
where $ R_{\epsilon}$ is the worst-case randomized query complexity, achieving success probability $1-\epsilon$ on all inputs, and  $D_{\epsilon}^{\mu}$ the deterministic query complexity on succeeding on a $(1-\epsilon)$-fraction over all inputs weighted by $\mu$. Note that we now have a maximization over input distributions $\mu$, since we have moved from expressing the smallest success probability given a fixed maximum number of queries $T$ to the largest minimum number of queries $T$ given a fixed fraction of inputs on which we have to be correct. For any fixed candidate distribution $\mu$, we always have
\begin{align*}
    R_{\epsilon} (F) \geq D_{\epsilon}^{\mu} (F).
\end{align*}

We first prove the following lemma, showing that when starting with the uniform distribution over all families of subsets $\mathcal{S}_{K}$ (and thus also a uniform distribution over complementary subsets $\bar{S}$), learning entries of $S$ will leave the resulting conditional distribution over the family of complementary subsets that do not include these elements still uniform.

\begin{lemma}
    Let $\mu_{\mathcal{S}_K}$ be the uniform distribution over all $S \in \mathcal{S}_K$. Let $X \in \mathcal{S}_K$ be sampled according to $\mu_{\mathcal{S}_K}$, and define $\bar{X} = \omega \setminus X$. Fix a set $Q = \{x_{1},\dots, x_{q}\}$ containing $q \leq K$ distinct elements from $\omega$. Then, conditioned on $Q \subseteq X$, the distribution of $\bar{X}$ is uniform over $\mathcal{S}_{N-K}^Q$, the family of all subsets of size $N-K$ that do not contain any element from $Q$.
    \label{lem:bayes}
\end{lemma}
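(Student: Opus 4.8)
The plan is to reduce everything to the basic symmetry of the uniform distribution over subsets, via a pair of explicit bijections, rather than manipulating Bayes' rule and binomial coefficients directly. First I would decompose the conditioning. Since $\mu_{\mathcal{S}_K}$ places equal weight on every $K$-subset, conditioning on the event $Q \subseteq X$ simply restricts the support to the sub-family $\{S \in \mathcal{S}_K : Q \subseteq S\}$ while keeping the distribution uniform over it. The map $X \mapsto X \setminus Q$ is a bijection from this sub-family onto the family of $(K-q)$-subsets of $\omega \setminus Q$, so conditionally we may write $X = Q \sqcup X'$, where $X'$ is uniformly distributed over the $(K-q)$-subsets of the $(N-q)$-element ground set $\omega \setminus Q$.

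Next I would pass to the complement. Because $Q \subseteq X$, no element of $Q$ lies in $\bar{X}$, so $\bar{X} \subseteq \omega \setminus Q$ and in fact $\bar{X} = (\omega \setminus Q) \setminus X'$. Taking complements within the \emph{fixed} ground set $\omega \setminus Q$ is itself a bijection, and a bijection carries a uniform distribution to a uniform distribution; it sends the uniform law on $(K-q)$-subsets to the uniform law on their complements, which are $(N-q)-(K-q) = N-K$ element subsets of $\omega \setminus Q$. Hence $\bar{X}$ is uniform over the $(N-K)$-subsets of $\omega \setminus Q$, which is exactly the family $\mathcal{S}_{N-K}^Q$ of $(N-K)$-subsets avoiding $Q$, as claimed.

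The only real subtlety, and the point I would state explicitly, is that the complement is taken relative to $\omega \setminus Q$ rather than relative to all of $\omega$; this is legitimate precisely because the conditioning guarantees $\bar{X} \cap Q = \emptyset$, which collapses the two notions of complement on the relevant outcomes. Everything else is routine bookkeeping in $N$, $K$, and $q$: both maps are explicit bijections between finite sets, so no cardinality computation beyond tracking the index sizes is needed. I would therefore present the argument at this level of generality, since the symmetry viewpoint makes the preservation of uniformity transparent and sidesteps any explicit binomial-coefficient cancellation.
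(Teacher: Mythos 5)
Your proof is correct, but it takes a genuinely different route from the paper. The paper argues by direct computation: it applies Bayes' rule, evaluates $\Pr[x_1,\dots,x_q \in X] = \binom{N-q}{K-q}/\binom{N}{K}$ and the joint probability $\binom{N}{N-K}^{-1}$ (on the compatible outcomes), and divides to find that the conditional probability of each admissible $\bar{S}$ equals $\binom{N-q}{N-K}^{-1}$, which is constant and hence uniform over $\mathcal{S}_{N-K}^Q$. You instead avoid all binomial-coefficient arithmetic by chaining two measure-preserving bijections: conditioning a uniform law on an event yields the uniform law on that event's support, the map $X \mapsto X \setminus Q$ carries this to the uniform law on $(K-q)$-subsets of $\omega \setminus Q$, and complementation \emph{within the reduced ground set} $\omega \setminus Q$ carries it to the uniform law on $(N-K)$-subsets of $\omega \setminus Q$, i.e., on $\mathcal{S}_{N-K}^Q$. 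Your explicit flagging of the subtlety that $\bar{X} \cap Q = \emptyset$ collapses the two notions of complement is exactly the right point to make, and each of your steps (conditioning preserves uniformity; bijections push uniform to uniform) is sound. What each approach buys: the paper's computation additionally hands you the explicit value $\binom{N-q}{N-K}^{-1}$ of the conditional probability (trivially recoverable from your argument by counting $|\mathcal{S}_{N-K}^Q|$, but worth noting since the cardinality $\binom{N-q}{N-K}$ is reused in the proof of the query lower bound), while your symmetry argument is more transparent about \emph{why} uniformity survives and would generalize with no extra work. One further simplification available to you: since $X \mapsto \bar{X}$ is itself a single bijection from $\{S \in \mathcal{S}_K : Q \subseteq S\}$ onto $\mathcal{S}_{N-K}^Q$, your intermediate variable $X'$ can be dispensed with entirely.
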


\begin{proof}
    By Bayes' rule,
    \begin{align*}
        \Pr[\bar{X} = \bar{S} \mid x_{1},\dots, x_{q} \in X] &= \frac{\Pr[\bar{X} = \bar{S} \land x_{1},\dots, x_{q} \in X]}{\Pr[x_{1},\dots, x_{q} \in X]} .
    \end{align*}
    Using that
    \begin{align*}
        \Pr[x_{1},\dots, x_{q} \in X] = \frac{\binom{N-q}{K-q}}{\binom{N}{K}},
    \end{align*}
    and the fact that
    \begin{align*}
       \Pr[\bar{X} = \bar{S} \land x_{1},\dots, x_{q} \in X]
        &= \begin{cases}
            \binom{N}{N-K}^{-1} & \text{if } x_{1},\dots, x_{q} \notin \bar{S},\\
            0 & \text{otherwise},
        \end{cases}
    \end{align*}
    we obtain
    \begin{align*}
        \Pr[\bar{X} = \bar{S} \mid x_{1},\dots, x_{q} \in X] &=  \begin{cases}
            \binom{N-q}{N-K}^{-1} & \text{if } x_{1},\dots, x_{q} \notin \bar{S},\\
            0 & \text{otherwise},
        \end{cases}
    \end{align*}
    which is uniform over all $\bar{S} \in \mathcal{S}^Q_{N-K}$.
\end{proof}

We can now use Yao's principle to derive our lower bound.

\begin{theorem} 
Let $\delta \in [0,\frac{1}{2}]$ and $1 \leq K \leq N-1$. Then we have that any randomized algorithm that produces a sample from $\bar{S}$ with probability $\geq 1/2 + \delta$ on all inputs $S$ needs to make at least
\begin{align*}
         N-\frac{2 (N-K)}{2 \delta + 1}.
\end{align*}
queries to the index oracle for $S$. 
\label{thm:LB_queries}
\end{theorem}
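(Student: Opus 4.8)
The plan is to apply Yao's minimax principle in the direction $R_\epsilon(F) \ge D_\epsilon^\mu(F)$, taking the hard distribution $\mu$ to be the uniform distribution $\mu_{\mathcal{S}_K}$ over size-$K$ subsets and setting $\epsilon = \tfrac{1}{2} - \delta$, so that ``success probability $\ge \tfrac{1}{2} + \delta$ on all inputs'' becomes a statement about succeeding on a $(1-\epsilon)$-fraction of inputs weighted by $\mu$. It then suffices to upper bound the average success probability of an \emph{arbitrary deterministic} $T$-query algorithm under $\mu_{\mathcal{S}_K}$, and to show this quantity drops below $\tfrac{1}{2}+\delta$ unless $T \ge N - \tfrac{2(N-K)}{2\delta+1}$.

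First I would exploit the structure of the index oracle: every answer $O_{\textup{index}}(i)$ is an element of $S$, so after $T$ (possibly adaptive) queries the algorithm has collected a set $Q$ of at most $T$ distinct elements, all lying in $S$. The key reduction is that the algorithm's entire posterior on $\bar{S}$ is determined by $Q$ alone. Indeed, conditioned on $Q \subseteq X$ for $X \sim \mu_{\mathcal{S}_K}$, \cref{lem:bayes} states that $\bar{X}$ is uniform over $\mathcal{S}^Q_{N-K}$, the size-$(N-K)$ subsets avoiding $Q$. Hence, for any candidate output $y$, the conditional probability that $y \in \bar{S}$ equals $0$ if $y \in Q$ and, by symmetry over the $N-|Q|$ unqueried elements, exactly $\tfrac{N-K}{N-|Q|}$ if $y \notin Q$. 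Since $\tfrac{N-K}{N-|Q|}$ is increasing in $|Q|$ and $|Q| \le T$, the best deterministic strategy spends all $T$ queries on distinct indices and outputs any unqueried element, yielding average success probability at most $\tfrac{N-K}{N-T}$.

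The remaining step is algebraic. Demanding $\tfrac{N-K}{N-T} \ge \tfrac{1}{2}+\delta$ rearranges to $N-T \le \tfrac{2(N-K)}{2\delta+1}$, i.e.\ $T \ge N - \tfrac{2(N-K)}{2\delta+1}$; combining this threshold with Yao's inequality $R_{1/2-\delta}(F) \ge D_{1/2-\delta}^{\mu_{\mathcal{S}_K}}(F)$ gives the claimed bound. As a sanity check, $\delta = \tfrac{1}{2}$ forces $T \ge K$, matching the intuition that perfect complement sampling requires learning all of $S$, while $\tfrac{N-K}{N-T}\to 1$ precisely as $T\to K$.

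I expect the main obstacle to be rigorously justifying that the posterior depends only on the \emph{set} $Q$ of revealed elements, and not on the adaptive order of queries nor on the (unknown, worst-case) ordering the index oracle uses to enumerate $S$. The clean way around this is to observe that the transcript is just an injection from queried indices to elements of $S$, so the only information it carries about the identity of $\bar{S}$ is the unordered set $Q$; \cref{lem:bayes} then applies verbatim to each realized $Q$, and the law of total probability bounds the overall average success by $\tfrac{N-K}{N-T}$ no matter how adaptively---or how cleverly---the queries were chosen.
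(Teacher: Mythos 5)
Your proposal is correct and follows essentially the same route as the paper's proof: Yao's minimax principle applied to the uniform distribution over $\mathcal{S}_K$, the posterior uniformity of \cref{lem:bayes} conditioned on the observed set $Q$, the conditional success bound $\tfrac{N-K}{N-q}$, and the same closing algebra. If anything, you are slightly more explicit than the paper in arguing that adaptivity and the oracle's unknown ordering convey no information beyond the unordered set $Q$, and in separating the cases $y \in Q$ versus $y \notin Q$.
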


\begin{proof}
    Consider $\mu_{\mathcal{S}_K}$ to be the uniform distribution over all possible inputs $S$ coming from the family $\mathcal{S}_K$. Let $X \in \mathcal{S}_K$ be sampled according to $\mu_{\mathcal{S}_K}$. Let $A$ be a deterministic algorithm with access to the index oracle $O_\text{index}$.  We can assume that no element of $X$ is queried twice by $A$, as there would be no benefit in doing so. Then after $q$ queries, the algorithm has seen $q$ elements of $X$. Let $Q = \{x_1,\dots,x_q\}$ be the set of all strings $x_j \in \{0,1\}^n$ that were observed, and write $\mathcal{S}^Q_{N-K}$ for the family of all sets $\bar{S}$ that do not contain the strings $x_j \in Q$. By~\cref{lem:bayes}, we have that the conditional distribution $\mu_{\mathcal{S}^{Q}_{N-K}}$ over $\mathcal{S}^{Q}_{N-K}$ is still uniform over these elements. The cardinality of the set $\mathcal{S}^{Q}_{N-K}$ is then given by
    \begin{align*}
        \left|\mathcal{S}^{Q}_{N-K}\right| = \binom{N-q}{N-K}.
    \end{align*}
    So for any output $y$, the probability that it is part of the $\bar{S}'$ for some $\bar{S}' \in \mathcal{S}^{Q}_{N-K}$ is then given by 
    \begin{align}
        \frac{\binom{N-q-1}{N-K-1}}{\binom{N-q}{N-K}} = \frac{N-K}{N-q},
        \label{eq:y_correct_prob}
    \end{align}
    which holds independently of the strings that are in $Q$ (except for how many of them there are). Hence, $A$ can only be correct on any such fraction of the inputs. By Yao's principle, this is also the highest success probability a $q$-query randomized algorithm can achieve on the worst-case input. Then, to be correct with probability at least $\frac{1}{2} + \delta$ we require
    $
        (N-K)/(N-q) \geq \frac{1}{2} + \delta$,
    which is satisfied when 
    \begin{align*}
        q \geq N-\frac{2 (N-K)}{2 \delta + 1}.
    \end{align*}
\end{proof}

We proceed by giving a matching upper bound, showing that~\cref{thm:LB_queries} is tight.

\begin{proposition} 
    For any $\delta \in [0, \frac{1}{2}]$, for any $S$ there exists a randomized algorithm which makes $N-\frac{2 (N-K)}{2 \delta + 1}$ queries to $O_{\textup{index}}$ and outputs an element $y \in \bar{S}$ with probability $\frac{1}{2} + \delta$.
\end{proposition}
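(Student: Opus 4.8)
The plan is to exhibit the guessing-after-collecting algorithm alluded to in the main text and verify that its success probability matches the lower bound of~\cref{thm:LB_queries} exactly. First I would set $q := N-\frac{2(N-K)}{2\delta+1}$ and have the algorithm query the index oracle on the $q$ distinct indices $i=1,\dots,q$, collecting the set $Q = \{O_{\textup{index}}(1),\dots,O_{\textup{index}}(q)\}$ of the $q$ distinct elements of $S$ it observes. (I would assume $q$ is an integer; otherwise one rounds up, which costs at most one extra query and only increases the success probability.) The algorithm then outputs an element $y$ drawn uniformly at random from $\omega \setminus Q$.

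The key observation, which makes the analysis uniform over all inputs, is that $Q \subseteq S$, so none of the $N-K$ elements of $\bar{S}$ is ever queried; all of them survive in $\omega \setminus Q$, a set of cardinality $N-q$. Since $y$ is uniform over $\omega \setminus Q$, the success probability is exactly
\[
    \Pr[y \in \bar{S}] = \frac{|\bar{S}|}{|\omega \setminus Q|} = \frac{N-K}{N-q},
\]
a quantity that depends on $Q$ only through its size $q$ and not on the particular set $S$ — precisely the same ratio that appears in~\cref{eq:y_correct_prob} in the proof of the lower bound.

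It then remains to substitute the chosen value of $q$. From $q = N-\frac{2(N-K)}{2\delta+1}$ one gets $N-q = \frac{2(N-K)}{2\delta+1}$, whence the success probability equals $\frac{(N-K)(2\delta+1)}{2(N-K)} = \frac{1}{2}+\delta$, as claimed. Because this holds for every $S$, the algorithm attains worst-case success probability $\tfrac{1}{2}+\delta$ using exactly $q$ queries, matching~\cref{thm:LB_queries} and establishing that the lower bound is tight.

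I do not expect a genuine obstacle here: the argument is a direct computation once the algorithm is written down, and the lower-bound proof has already done the conceptual work via~\cref{lem:bayes} and~\cref{eq:y_correct_prob}. The only points worth a word of care are the integrality of $q$ (handled by rounding) and the remark that the success probability is invariant under the structure of $S$, so the same analysis simultaneously certifies the claimed probability on every input.
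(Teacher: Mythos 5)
Your proposal is correct and follows essentially the same route as the paper's own proof: query $q = N-\frac{2(N-K)}{2\delta+1}$ distinct indices, output a uniform element of $\omega \setminus Q$, and compute $\Pr[y \in \bar{S}] = \frac{N-K}{N-q} = \frac{1}{2}+\delta$. Your added remarks on integrality of $q$ and on the probability being independent of the particular $S$ are sensible elaborations of points the paper leaves implicit, but they do not change the argument.
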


\begin{proof}
    Again, let $Q = \{x_1,\dots,x_{q}\}$ be the set of strings $x_j \in S$ that are observed after $q$ queries. The algorithm will now simply sample a uniformly random $y$ from the set $\omega \setminus Q$. Given any $S$, the probability that this $y$ is from $\bar{S}$ is then given by
    \begin{align*}
        \Pr \left[y\in \bar{S}\right] = \frac{N-K}{N-q} =  \frac{N-K}{N-\left(N-\frac{2 (N-K)}{2 \delta + 1}\right)} = \frac{1}{2} + \delta.
    \end{align*}
\end{proof}

Note that if $K = \Omega(N)$ and $\delta = \Omega(1)$, then we have $q \geq \Omega(N)$. 

\subsection{Exact sample complexity bounds}
\label{sec:exact_sampl_complexity}

In the above, we used an index query model because it greatly simplifies the analysis and provides lower bounds for the sample complexity setting. However, it is possible to translate these bounds to the sample complexity setting by computing the probability of observing $q$ unique samples after drawing $d$ samples. Since the results of~\cref{sec:worst_case_query} hold in terms of \emph{unique} observations of elements from  $S$, we can exactly compute matching upper and lower bounds on the sample complexity.

Sampling from a subset of size $K$, the probability of getting $q \leq d$ unique samples from $d$ draws (with replacement, each with equal probability) is given by (see for example~\cite{mendelson2016distribution} for a derivation)
\begin{align*}
    \Pr[X = q] = \frac{K \stirling{d}{q} }{(K-q)! K^d} ,
\end{align*}
where $\stirling{d}{q}$ is the Stirling number of the second kind, representing the number of ways to partition $d$ objects into $q$ non-empty subsets. Combining this with~\cref{eq:y_correct_prob}, which gives the probability of outputting a successful sample conditioned on observing $q$ unique elements, gives a lower bound on the overall success probability of 
\begin{align*}
    \sum_{q=0}^{q = d} \frac{N \stirling{d}{q} }{(K-q)! K^d} \frac{N-K}{N-q}.
\end{align*}
This in principle gives the exact expression and has a matching upper bound by the same argument as the previous subsection. However, since the expression is cumbersome to work with and we are interested in proving asymptotic lower bounds, we will continue with the query model in the following sections to consider more practically relevant notions of hardness of the problem.

\subsection{Average-case lower bounds}
A consequence of~\cref{thm:LB_queries} is that the problem is hard on average (with respect to the uniform distribution), as a worst-to-average case reduction with respect to the uniform distribution can easily be constructed. We will use that applying a random permutation on $N$ elements as an operation to any fixed subset of $K$ out of the $N$ elements will result in a uniformly random subset of $K$ elements.

\begin{lemma}
Suppose $\sigma$ is drawn uniformly at random from the set of all permutations on $\{0,1\}^n$. Then, for any fixed subset $S \in \mathcal{S}_K$, the associated subset 
\[
S' = \{ x \in \{0,1\}^n : \sigma^{-1}(x) \in S \} = \sigma(S).
\]
is uniformly drawn from the family $\mathcal{S}_K$.
\label{lem:random_subset_random_P}
\end{lemma}
\begin{proof}
We have that $S'$ is the image of $S$ under the permutation $\sigma$. We want to understand the distribution of $S'$ when $\sigma$ is chosen uniformly at random. Fix any subset $T \in \mathcal{S}_K$. We count how many permutations $\sigma$ satisfy $\sigma(S) = T$. To construct such a permutation, we must:
\begin{itemize}
    \item define a bijection from $S$ to $T$, of which there are $K!$ choices;
    \item define a bijection from $\overline{S}$ to $\overline{T}$, of which there are $(N - K)!$ choices.
\end{itemize}
So the total number of permutations $P$ such that $\sigma(S) = T$ is
\[
|\{\sigma : \sigma(S) = T\}| = K! \cdot (N - K)!,
\]
which is the same for all $T \in \mathcal{S}_K$. Since $\sigma$ is chosen uniformly at random, this means that $S' = \sigma(S)$ is uniformly distributed over $\mathcal{S}_K$. In particular,
\[
\Pr[S' = T] = \frac{K! \cdot (N - K)!}{N!} = \frac{1}{\binom{N}{K}} = \frac{1}{|\mathcal{S}_K|},
\]
as required.
\end{proof}
We now show that the existence of a $q$-query randomized algorithm $A$ that achieves a certain guaranteed expected success probability averaged over its internal randomness and the input distribution, implies the existence of a randomized $q$-query algorithm $B$ that achieves the same guaranteed expected success probability on every input. This yields the desired conclusion that the complement sampling problem is average-case just as hard as worst-case.

\begin{corollary} 
    Let $K = N/2$, $N= 2^n$ for $n \in \mathbb{N}$, and $\delta$ be any function such that $\delta(n) \in [0,\frac{1}{2}]$ for all $n \in \mathbb{N}$. Suppose we sample some $S \in \mathcal{S}_{K}$ uniformly at random. Then any classical randomized algorithm $A$ which makes $q$ queries to $O_{\textup{index}}$ and satisfies
    \begin{align}
        \Pr\left[A \textup{ outputs } y \in \bar{S}\right] \geq \frac{1}{2} + \delta
        \label{eq:cor_eq}
    \end{align}
    has
    \begin{align*}
        q \geq N \left( 1 - \frac{1}{2\delta + 1} \right),
    \end{align*}
    where the probability in~\cref{eq:cor_eq} is taken over $S$ and the randomness of $A$.
\label{cor:av_case}
\end{corollary}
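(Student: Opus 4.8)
The plan is to reduce the average-case statement directly to the worst-case lower bound from \cref{thm:LB_queries} using the permutation symmetry established in \cref{lem:random_subset_random_P}. First I would observe that \cref{thm:LB_queries} gives, for the specific case $K=N/2$, a worst-case lower bound: any randomized algorithm succeeding with probability $\geq \frac{1}{2}+\delta$ on \emph{every} input must make at least $N - \frac{2(N-K)}{2\delta+1} = N - \frac{N}{2\delta+1} = N\left(1 - \frac{1}{2\delta+1}\right)$ queries. So the target query count is exactly the worst-case bound specialized to $K=N/2$, and the real content of the corollary is that the \emph{average-case} success probability (over a uniformly random $S$ and the algorithm's coins) is governed by the same bound. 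This is a worst-to-average-case reduction, and the direction I need is: an average-case algorithm cannot do better than the worst-case bound allows.

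The key step is the reduction via random permutations. Suppose $A$ is a $q$-query randomized algorithm satisfying the average-case guarantee of \cref{eq:cor_eq} with the probability taken over a uniformly random $S \in \mathcal{S}_K$ and $A$'s internal randomness. I would construct a new algorithm $B$ that works on an \emph{arbitrary} fixed input $S$ as follows: $B$ samples a uniformly random permutation $\sigma$ on $\{0,1\}^n$, and simulates $A$ on the permuted set $\sigma(S)$. Concretely, whenever $A$ issues an index query and receives an element $y$ of $S$, $B$ relays $\sigma(y)$ to $A$; when $A$ outputs a candidate $\hat{y} \in \overline{\sigma(S)}$, $B$ outputs $\sigma^{-1}(\hat{y})$. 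Since $\hat{y} \in \overline{\sigma(S)}$ if and only if $\sigma^{-1}(\hat{y}) \in \overline{S}$, the success of $B$ on $S$ equals the success of $A$ on $\sigma(S)$. By \cref{lem:random_subset_random_P}, as $\sigma$ ranges uniformly over all permutations, $\sigma(S)$ is uniformly distributed over $\mathcal{S}_K$ for \emph{any} fixed $S$. Therefore the success probability of $B$ on every fixed input $S$, averaged over $\sigma$ and $A$'s coins, is exactly the average-case success probability of $A$, namely $\geq \frac{1}{2}+\delta$. Crucially, $B$ makes the same number $q$ of index queries as $A$, since relabeling queries and outputs through the fixed (for that run) permutation adds no queries.

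Thus $B$ is a randomized $q$-query algorithm that succeeds with probability $\geq \frac{1}{2}+\delta$ on \emph{every} input $S \in \mathcal{S}_K$, which is precisely the worst-case hypothesis of \cref{thm:LB_queries}. Applying that theorem with $K=N/2$ forces $q \geq N\left(1 - \frac{1}{2\delta+1}\right)$, completing the proof. I anticipate the main subtlety is bookkeeping on the query model: I must verify that $B$ can faithfully relabel $A$'s interaction with the index oracle without access to structural information about $S$ it should not have. This works because $B$ knows $\sigma$ (it chose it) and the index oracle for $\sigma(S)$ is perfectly simulated by querying the index oracle for $S$ and applying $\sigma$ to the returned element, so no extra queries or hidden assumptions are needed. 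One should also note the averaging over $\sigma$ is internal to $B$'s randomness, so the resulting guarantee is a genuine worst-case (per-input) bound of the form required by \cref{thm:LB_queries}, not merely an average-case one.
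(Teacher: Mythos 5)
Your proposal is correct and follows essentially the same route as the paper: both construct a worst-case algorithm $B$ by conjugating $A$ with a uniformly random permutation $\sigma$ (simulating the index oracle for $\sigma(S)$ via $\sigma \circ O_{\textup{index}}$ and inverting the output), invoke \cref{lem:random_subset_random_P} to see that $\sigma(S)$ is uniform over $\mathcal{S}_K$ for any fixed $S$, and then apply \cref{thm:LB_queries} with $K = N/2$. Your additional bookkeeping remarks (that $B$ makes exactly $q$ queries and that averaging over $\sigma$ is internal randomness yielding a genuine per-input guarantee) match the paper's reasoning.
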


\begin{proof}
Let $\sigma$ be a permutation on bit strings of length $n$ chosen uniformly at random, and write $\sigma^{-1}$ for its inverse. By~\cref{lem:random_subset_random_P}, starting from any $S \in \mathcal{S}_{N/2}$, $\bar{S} = \omega \setminus S$, we have that the subset $S'$ and its complement $\bar{S}' = \omega\setminus S'$ given by
\begin{align*}
    S' = \{ x'\mid x' = \sigma(x), x\in S \}, \qquad \bar{S}' = \{ y '\mid y' = \sigma(y), y \in \bar{S} \},
\end{align*}
are uniformly at random from all possible $S \in \mathcal{S}_{N/2}$. Since we are only interested in query (or sample) complexity, we do not care that the time complexity of implementing this permutation $\sigma$, nor its inverse $\sigma^{-1}$, generally scales exponentially in $n$ (recall $n = \log N)$. Now suppose there exists a $q$-query classical algorithm $A$ that succeeds with expected probability at least $\frac{1}{2} + \delta$ over this distribution of inputs. We can then use $A$ to create a $q$-query algorithm $B$ that works with a certain success probability on all inputs in the following way:
\begin{enumerate}
    \item Pick a permutation $\sigma$ uniformly at random and let $\sigma^{-1}$ be its inverse.
    \item Let $O_{\textup{index}}'(i) = \sigma(O_{\textup{index}}(i))$. Run algorithm $A$ with $O_{\textup{index}}'(i)$, resulting in some $y'$.
    \item Return $y = \sigma^{-1} (y')$.
\end{enumerate}
We then have that $B$ succeeds on any input $S$ with probability at least $\frac{1}{2} + \delta$  as well, since
\begin{align*}
    \Pr\left[B \text{ outputs } y \in \bar{S}\right] &= \mathbb{E}_{S'} \left[ \Pr\left[A \text{ outputs } y' \in \bar{S}'\right] \right] \geq \frac{1}{2} + \delta,
\end{align*}
using that the expectation of a probability is again a probability. By~\cref{thm:LB_queries}, we then need at least
\begin{align*}
    q &\geq N \left( 1 - \frac{1}{2\delta + 1} \right)
\end{align*}
queries to $O_{\textup{index}}$, completing the proof.
\end{proof}

\subsection{Hardness for strong pseudorandom subsets}

We will now prove our strongest hardness result: we will show that under the existence of one-way functions, there should be families of subsets that are easy to generate and verify, but still hard to perform complement sampling on classically. To achieve this, we will use strong pseudorandom permutations.  Following the convention in cryptography, we say that a function $f$ is negligible ($f(n) = \textup{negl}(n) $), if for every constant $c$, we have $f(n) = o\left( \frac{1}{n^c} \right) $. 

\begin{definition}[Strong pseudorandom permutations]
    For some $n \in \mathbb{N}$, let $P : \{0,1\}^n \cross \{0,1\}^{\lambda} \rightarrow \{0,1\}^n$ be an efficient keyed permutation. We say that $P$ is a strong pseudorandom permutation if for all probabilistic polynomial-time distinguishers $D$ there exists a negligible function $\textup{negl}(n)$ such that
    \begin{align*}
        \abs{\Pr\left[ D^{P_k (\cdot),P^{-1}_k (\cdot)}(1^n)\right]-\Pr\left[ D^{\sigma (\cdot),\sigma^{-1}(\cdot)}(1^n)\right]} \leq \textup{negl}(\lambda),
    \end{align*}
    where $k \leftarrow \{0,1\}^\lambda$ is chosen uniformly at random and $\sigma$ is chosen uniformly at random from the set of permutations on $n$-bit strings.
\end{definition}

It is known that strong pseudorandom permutations can be constructed from pseudorandom permutations~\cite{luby1988how}, which exist if and only if one-way functions exist~\cite{katz2020introduction}.

We will show the hardness result for cardinality $K = N/2$, as it forms the basis of our proposed advantage experiment. However, the argument can be easily extended to any $K$ superpolynomial in $N$.

\begin{theorem} 
    For some $n \in \mathbb{N}$, let $P : \{0,1\}^n \times \{0,1\}^{\lambda} \rightarrow \{0,1\}^n$ be a strong pseudorandom permutation with security parameter $\lambda = n$, and $P^{-1}$ its inverse. Given a key $k \in \{0,1\}^{\lambda}$, let $S$ be the image of $\{0ik : i \in \{0,1\}^{n-1} \}$ under $P$, and $\bar{S}$ be the image of $\{1ik : i\in \{0,1\}^{n-1} \}$ under $P$. Let $O_{\textup{index}} : [2^{n-1}] \rightarrow \{0,1\}^{n}$ be the oracle that on input $i$ returns the $i$-th element in $S$.  Picking a key uniformly at random, for all polynomial-time algorithms $A$ that make a polynomial number of queries to $O_{\textup{index}}$ it must hold that
    \begin{align*}
        \Pr \left[A^{O_{\textup{index}}} \textup{ outputs a } y \in \bar{S}\right] \leq \frac{1}{2} + \textup{negl}(n).
    \end{align*}
\label{thm:hardness_PRP_S}
\end{theorem}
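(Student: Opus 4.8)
The plan is to reduce the pseudorandomness assumption to the average-case hardness result already established in \cref{cor:av_case}. The core idea is that a strong pseudorandom permutation is computationally indistinguishable from a truly random permutation, and under a truly random permutation the set $S$ defined via $P$ is distributed \emph{exactly} like a uniformly random subset of cardinality $N/2$. Thus any efficient classical algorithm that solved complement sampling on the pseudorandom instances with nonnegligible advantage over $1/2$ would either (a) contradict the query lower bound if the permutation were truly random, or (b) serve as a distinguisher breaking the pseudorandomness of $P$.

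I would carry this out by contradiction. Suppose $A$ is a polynomial-time algorithm making $\poly(n)$ queries to $O_{\textup{index}}$ that outputs $y \in \bar{S}$ with probability at least $\tfrac{1}{2} + \tfrac{1}{\poly(n)}$, where the probability is over the uniformly random key $k$ and $A$'s internal randomness. First I would build a distinguisher $D$ that, given oracle access to an unknown permutation pair $(\pi(\cdot), \pi^{-1}(\cdot))$ (which is either $(P_k, P_k^{-1})$ for random $k$, or $(\sigma, \sigma^{-1})$ for a truly random permutation $\sigma$), simulates the index oracle by setting $O_{\textup{index}}(i) = \pi(0\,\mathrm{bin}(i))$ where $\mathrm{bin}(i)$ enumerates the $(n-1)$-bit strings, so that $S = \pi(\{0i : i\})$. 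The distinguisher runs $A$ on this simulated oracle to obtain a candidate $y$, then uses the inverse oracle to compute $\pi^{-1}(y)$ and checks whether its first bit equals $1$, i.e., whether $y \in \bar{S}$. The distinguisher outputs $1$ exactly when the check passes. Each of these steps—$A$'s queries, the single inverse-oracle verification—costs polynomial time, so $D$ is a valid PPT distinguisher.

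The analysis then splits into the two worlds. In the random-permutation world, \cref{lem:random_subset_random_P} (applied to the fixed subset $\{0i : i \in \{0,1\}^{n-1}\}$) guarantees that $S = \sigma(\{0i\})$ is uniformly distributed over $\mathcal{S}_{N/2}$; hence \cref{cor:av_case} forces any $\poly(n)$-query algorithm to succeed with probability at most $\tfrac{1}{2} + \tfrac{1}{\poly(n)}$, and more precisely the success probability is bounded away from $\tfrac{1}{2}$ by at most $o(1)$ since $q = \poly(n) \ll N$. In the pseudorandom world, $\Pr[D = 1]$ is exactly $A$'s success probability on the $P_k$-instances, which we assumed is $\geq \tfrac{1}{2} + \tfrac{1}{\poly(n)}$. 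Subtracting the two acceptance probabilities yields a nonnegligible gap, contradicting the defining security property of the strong pseudorandom permutation. This establishes the claimed bound $\Pr[A^{O_{\textup{index}}}\text{ outputs } y \in \bar{S}] \leq \tfrac{1}{2} + \textup{negl}(n)$.

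The main obstacle, and the point requiring the most care, is the bookkeeping of advantages: I must verify that the residual success probability in the truly-random world is genuinely at most $\tfrac{1}{2} + \textup{negl}(n)$ rather than merely $\tfrac{1}{2} + \tfrac{1}{\poly(n)}$. Here \cref{cor:av_case} only gives an inverse-polynomial bound for a polynomial number of queries, so the contradiction is with \emph{inverse-polynomial} advantage, which is exactly what is needed: any efficient $A$ achieving advantage $\tfrac{1}{\poly(n)}$ already separates the two worlds by $\tfrac{1}{\poly(n)} - o(1) = \tfrac{1}{\poly(n)}$, which is nonnegligible and breaks the strong pseudorandomness. I would also need to confirm that the strong (two-sided) pseudorandom permutation is essential: the distinguisher's verification step queries $\pi^{-1}$, so ordinary one-way pseudorandom permutations do not suffice, which is precisely why the theorem invokes \emph{strong} pseudorandom permutations—guaranteed to exist under one-way functions via the Luby–Rackoff construction~\cite{luby1988how, katz2020introduction}.
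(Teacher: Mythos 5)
Your proposal is correct and follows essentially the same route as the paper: a proof by contradiction in which a PPT distinguisher simulates $O_{\textup{index}}$ from the permutation oracle via $i \mapsto \pi(0\,\mathrm{bin}(i))$, verifies the candidate with a single call to $\pi^{-1}$, and plays \cref{lem:random_subset_random_P} together with \cref{cor:av_case} in the truly-random world against the assumed $\tfrac{1}{2} + \tfrac{1}{\poly(n)}$ success in the pseudorandom world, noting (as the paper does) that the inverse oracle is exactly why \emph{strong} PRPs are required. One correction to your final bookkeeping paragraph: with $q = \poly(n)$ queries and $N = 2^n$, inverting \cref{cor:av_case} bounds the truly-random-world advantage by $\delta \leq q/(2(N-q)) = \textup{negl}(n)$ --- the paper instantiates this with $\delta = 1/\sqrt{N}$ --- not merely by $o(1)$ or an inverse-polynomial as you state, so the distinguishing gap is cleanly $\tfrac{1}{\poly(n)} - \textup{negl}(n)$ and your problematic step ``$\tfrac{1}{\poly(n)} - o(1) = \tfrac{1}{\poly(n)}$'' (false in general, e.g.\ against an $o(1)$ term like $1/\log n$) is never needed.
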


\begin{proof}
We will use a proof by contradiction. Suppose that there exists a polynomial-time algorithm $A$ with query access to some $O_{\textup{index}}$, which, when picking a key uniformly at random to construct $S$, outputs a $y \in \bar{S}$ with probability $\geq \frac{1}{2}+1/\poly(n)$. We show that if such an $A$ existed it would imply a distinguisher between a strong pseudorandom permutation and truly random one, violating the assumption that $S$ was generated by a strong pseudorandom permutation. First, note that defining $S$ to be the set of all outputs of a uniformly random permutation for which the preimage has `$0$' as the first bit ensures that $S$ is chosen uniformly at random from $\mathcal{S}_{N/2}$ (\cref{lem:random_subset_random_P}). Second, note that any given $P$ already acts as $O_{\textup{index}} $ when applied to strings from $\{0ik : i \in \{0,1\}^{n-1} \}$. Let $F$ be a samplable family of either (i) permutations or (ii) strong pseudorandom permutations.
Then, we can design a distinguisher using $A^{O_{\textup{index}}}$ as a subroutine in the following way:
\begin{enumerate}
    \item We sample a permutation $f \in F$, which is either strong pseudorandom or truly random, and construct $O_{\textup{index}}$ from $f$. 
    \item We run $A^{O_{\textup{index}}}$ to obtain a candidate string $\hat{y}$, supposedly from the complement.
    \item We check whether $f^{-1} (\hat{y})$ has the first bit equal to `$1$'. If this is the case, we return \checkmark. If not, we return \xmark.
\end{enumerate}
Let $\delta = 1/\sqrt{N}$, $N = 2^n$, such that $\delta \in \textup{negl}(n)$. Then,~\Cref{cor:av_case} readily implies that whenever $q = \poly(n)$ and $S$ is uniformly random (i.e., $f$ comes from a family of truly random permutation), then any $q$-query classical algorithm (even without the requirement that it runs in polynomial time) succeeds with probability smaller than $\frac{1}{2} + \delta = \frac{1}{2}+\textup{negl}(n)$. However, if $f$ comes from a family of strong pseudorandom permutations then by assumption it must hold that we observe a \checkmark with probability $\frac{1}{2} + 1/\poly(n)$. Hence, under this assumption, we can distinguish strong pseudorandom permutations from truly random ones, by repeating the above distinguisher a polynomial number of times to detect the small polynomial bias towards returning \checkmark in the pseudorandom case. Since this is not possible by the definition of strong pseudorandomness, such an $A$ cannot exist. Thus the statement must be true.
\end{proof}

The above argument does not extend to pseudorandom permutations that are not strong, as we need to use $P^{-1}$ to check whether the generated string is indeed from $\bar{S}$.

\subsection{Circuit lower bounds for uniform samplers}
\label{sec:classical_circuit_lower_bound}

Finally, we conclude our classical hardness results by showing that we can even prove an exponential \emph{circuit lower bound} on any sampler (quantum or classical) that uses only a polynomial number of classical samples and produces a uniformly random sample from $\bar{S}$ with probability $1$. Unlike the previous lower bounds, which hold even when the uniformly random criterion is removed, this bound relies crucially on the requirement that every element from $\bar{S}$ has a non-zero probability to be produced by the sampler. The key idea is to use Kolmogorov complexity of binary strings to arrive at a circuit lower bound.

\begin{definition} 
    The Kolmogorov complexity $\mathcal{K}(x)$ of any binary string $x \in \{0,1\}^n$ is the length of the shortest computer program $x^*$ that can produce this string on the Universal Turing Machine and then halts.
    \label{def:K_complexity}
\end{definition}

It is easy to show that there must exist incompressible strings (so with $\mathcal{K}(x) \geq |x|$) for every input length $n$: there are $2^n$ binary strings of length $n$ but only $2^n-1$ binary strings of length strictly less than $n$ to describe the program. 

We will formulate our bound in terms of the circuit complexity of a \emph{Boolean circuit}. A Boolean circuit is a directed acyclic graph consisting of logic gates such as AND, OR, and NOT, wired together to compute a Boolean function. The size~$s$ of the circuit is defined as the total number of gates (the total number of vertices in the graph). Any Boolean circuit of size~$s$ can be described using at most $\mathcal{O}(s \log s)$ bits: each gate can be encoded by its type and the indices of its input wires, each requiring $\mathcal{O}(\log s)$ bits. Consequently, every such circuit can be converted into a binary string~$x^*$ of length $|x^*| = \mathcal{O}(s \log s)$, which serves as a program for the universal Turing machine that simulates the circuit. Hence, if a string~$x$ has Kolmogorov complexity $\mathcal{K}(x)$ as defined in~\cref{def:K_complexity}, then any circuit computing~$x$ must have size at least $
s = \Omega\left(\mathcal{K}(x)/\log \mathcal{K}(x)\right)$.

We will use this to show a circuit lower bound on any classical sampler that uses only a polynomial number of samples and is able to sample uniformly at random from the complementary subset $\bar{S}$.

\begin{theorem}[Circuit complexity lower bound with classical samples] 
    There exists an $S \in \mathcal{S}_K$, with $K=N/2$, such that any sampler, which given as an input $l = \poly(n)$ samples from $S$ produces samples $y$ from $\bar{S}$ uniformly at random, has circuit complexity $\tilde{\Omega}\left(N\right)$. 
    \label{thm:classical_circuit_lb}
\end{theorem}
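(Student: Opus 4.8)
The plan is to show that any sampler meeting the hypothesis must implicitly encode the \emph{entire} complement $\bar{S}$, and then to select an $S$ whose characteristic string is Kolmogorov-random so that no short encoding can exist. Represent a subset $S \in \mathcal{S}_K$ with $K = N/2$ by its characteristic string $\chi_S \in \{0,1\}^N$. Since there are exactly $\binom{N}{N/2}$ such strings, the standard counting bound for Kolmogorov complexity (there are fewer programs of short length than objects to describe) guarantees that at least one, indeed almost all, satisfies
\begin{align*}
    \mathcal{K}(\chi_S) \;\geq\; \log_2 \binom{N}{N/2} - O(1) \;=\; N - O(\log N),
\end{align*}
where the last step is Stirling's approximation. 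I fix such an incompressible $S$ for the remainder of the argument.

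Next I would exploit the uniform-sampling promise. Let $C$ be the (classical or quantum) sampler, of size $s$, and fix an input $\vec{s} = (s_1,\dots,s_l)$ of $l = \poly(n)$ samples from $S$ on which the promise holds. Because the output distribution of $C(\vec{s};\cdot)$ is \emph{uniform} over $\bar{S}$, every element of $\bar{S}$ occurs with nonzero probability and no element outside $\bar{S}$ ever occurs; hence the \emph{support} of the output is exactly $\bar{S}$. This is the heart of the matter, and it is precisely where the full-support hypothesis is indispensable: a sampler that merely returned one fixed element of $\bar{S}$ would be describable in $O(n)$ bits, so without uniformity no lower bound is possible. Given $C$ together with $\vec{s}$, one recovers $\bar{S}$ by enumerating all random seeds and collecting every output (or, in the quantum case, by classically simulating $C$ and reading off the computational-basis outcomes of nonzero amplitude), after which $\chi_S = \chi_{\omega \setminus \bar{S}}$ is immediate. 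Running time is irrelevant, since Kolmogorov complexity imposes no bound on computation.

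This produces a short program for $\chi_S$. Using the fact (established in the excerpt) that a size-$s$ circuit can be written down in $O(s\log s)$ bits, and that the $l$ input samples cost $l n$ bits, the total description length of this program is $O(s\log s) + l n + O(1)$. Combining with the incompressibility bound gives
\begin{align*}
    N - O(\log N) \;\leq\; \mathcal{K}(\chi_S) \;\leq\; O(s\log s) + l n + O(1).
\end{align*}
Since $l = \poly(n)$ and $n = \log N$, the term $l n = \poly(\log N) = o(N)$ is negligible, so $O(s\log s) \geq N - o(N) = \Omega(N)$, which forces $s = \Omega(N/\log N) = \tilde{\Omega}(N)$, as claimed.

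I expect the main obstacle to be the careful justification that the support of the sampler equals $\bar{S}$ exactly and can be extracted from the circuit description alone; in particular, handling the quantum case by classical simulation to obtain the support, and verifying that the $\poly(n)$ input samples, which must be written explicitly into the program since $S$ is not yet known, do not overwhelm the $\Theta(N)$ lower bound. The counting and Stirling steps are routine by comparison.
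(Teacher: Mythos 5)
Your proposal is correct and follows essentially the same compression argument as the paper's proof: the uniform-support promise lets the circuit together with its $l = \poly(n)$ hardcoded samples serve as a description of $\bar{S}$, which contradicts incompressibility unless $O(s\log s) + l n \geq N - O(\log N)$, forcing $s = \tilde{\Omega}(N)$. The only cosmetic differences are that you obtain the hard instance by counting over all $\binom{N}{N/2}$ balanced characteristic strings (giving $\mathcal{K}(\chi_S) \geq N - O(\log N)$), whereas the paper embeds a Kolmogorov-random string $z \in \{0,1\}^{N/2}$ into $\bar{S}$ via an indexing trick (giving roughly $N/2$ bits, asymptotically equivalent), and you are somewhat more explicit than the paper about deterministically extracting the support by enumerating random seeds (or simulating the quantum circuit).
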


\begin{proof} 
We will simply assume that all $l$ samples are distinct, since this not being true only increases the lower bound that follows from our argument. Take a subset $\bar{S}$ with Kolmogorov complexity $\mathcal{K}(\bar{S}) = N/2$. Such a set can be constructed by taking a Kolmogorov random string $z \in \{0,1\}^{N/2}$ of length $N/2$, with each element $z_i$ indexed by a string $i \in \{0,1\}^{n-1}$. We define  $y_i = (z_i,i)$ as the concatenation of bit $z_i$ and the string $i$. For each $i \in \{0,1\}^{n-1}$, we add the string $y_i$ to the set $\bar{S}$. This way, one can fully reconstruct $z$ if one knows all the strings in $\bar{S}$. Then, any sampler $M$ that produces samples from $\bar{S}$ uniformly at random, given the description of any $l$ elements from $S$, is a description of the set $\bar{S}$ (and thus of the string $z$), since one can run the sampler until all $N/2$ distinct labels from $\bar{S}$ are observed. Since $l$ elements from $S$ can be described with $l \log N$ bits, the description length of the program for $M$ given $l$ such elements has to be at least $ N/2-l \log N-\mathcal{O}(1)$. Since any circuit description (of the elementary gates) of size $s$ can represent a program for $M$ in $\mathcal{O}(s \log s)$ bits, we must have that the circuit complexity of any such circuit which implements $M$ is lower bounded by 
\begin{align*}
    \Omega\left(\frac{N/2-l \log N-\mathcal{O}(1)}{\log \left(N/2-l \log N-\mathcal{O}(1) \right)}\right) = \tilde{\Omega}\left(N\right),
\end{align*}
when $l=\poly(n)$.
\end{proof}

\section{Circuit complexity and distinguishability of subset states}
\label{sec:quantum_cc_and_d}

In the final section we will show that even though we have proven in the above sections that a quantum algorithm can perfectly swap between any $\ket{S}$ and $\ket{\bar{S}}$, there exist pairs of $\ket{S}$, $\ket{\bar{S}}$ so that it is computationally intractable to find out which one was given as an input. As an immediate corollary, this result implies that generally states of the form of $\ket{S}$ must have exponential circuit complexity. This means that in order to turn complement sampling into a suitable quantum advantage experiment, the hardness result for strong PRPs is indeed necessary. We will first prove the easier case of the exact setting and then move on the general case. We will utilize the complexity definitions given in~\cref{ss:AAS}. Again, we fix the subset cardinality to $K=N/2$ where $N=2^n$ and $n$ is the number of qubits. 

\subsection{Warm-up: exact case}

As a warm-up, we first consider the simple example where we care about being able to distinguish $\ket{S}$ and $\ket{\bar{S}}$ perfectly. Consider again the phase states
\begin{align}
    \ket{y_{f_\mathrm{con}}} = \frac{1}{\sqrt{N}} \sum_{x \in \omega}\ket{x}, \qquad \ket{y_{f_\mathrm{bal}}} = \frac{1}{\sqrt{N}}\sum_{x \in \omega} (-1)^{f_{\textup{bal}}(x)} \ket{x},
    \label{eq:x_and_yf}
\end{align}
for some Boolean function $f_{\textup{bal}}: \omega \rightarrow \{0,1\}$. Recall that 
\begin{align*}
    &\ket{S} = \frac{1}{\sqrt{2}}(\ket{y_{f_\mathrm{con}}} + \ket{y_{f_\mathrm{bal}}}), \qquad \ket{\bar{S}} = \frac{1}{\sqrt{2}}(\ket{y_{f_\mathrm{con}}} - \ket{y_{f_\mathrm{bal}}}),
\end{align*}
where
\begin{align*}
    S = \{x : f_{\textup{bal}}(x) = 0\}, \qquad \bar{S} = \{x : f_{\textup{bal}}(x) = 1\}.
\end{align*}
The total number of balanced functions is lower bounded by 
\begin{align*}
    \binom{N}{N/2} \geq 2^{\Omega(N)}.
\end{align*}
For every choice of $f_{\textup{bal}}$, a circuit can only \emph{exactly} swap $\ket{y_{f_\mathrm{con}}}$ to a single $\ket{y_{f_\mathrm{bal}}}$. Let $\mathcal{G}$ be any gate set with $g = n^{\mathcal{O}(1)}$ possible choices of gates and qubit indices on which a single (multi-qubit) gate acts. Starting from $\ket{y_{f_\mathrm{con}}}$, the total number of states that can be constructed by circuits using $M$ gates from $\mathcal{G}$ is at most $g^{M}$. Thus, for some balanced function, the corresponding circuit must have $M = \tilde{\Omega}(N)$ gates, which is the relative complexity $\mathcal{C}(\ket{y_{f_\mathrm{con}}}, \ket{y_{f_\mathrm{bal}}})$.
Since the swap complexity is lower bounded by the relative complexity~\cite{aaronson2020hardness}, there must exist a $\ket{y_{f_\mathrm{bal}}}$ such that
\begin{align}
    \mathcal{S}(\ket{y_{f_\mathrm{con}}},\ket{y_{f_\mathrm{bal}}}) \geq \tilde{\Omega}(N).
    \label{eq:exact_case_omega_N}
\end{align}
Note that any balanced phase state is orthogonal to the constant phase state
\begin{align*}
    \bra{y_{f_\mathrm{con}}}\ket{y_{f_\mathrm{bal}}} = \frac{1}{N} \sum_{x \in \omega} (-1)^{f_{\textup{bal}}(x)}= 0,
\end{align*}
so we can use~\cref{lem:S_versus_D} with $\ket{a} := \ket{y_{f_\mathrm{con}}}$ and $\ket{b}:=\ket{y_{f_\mathrm{bal}}}$. We have that if the exact swap complexity $\mathcal{S}(\ket{y_{f_\mathrm{con}}},\ket{y_{f_\mathrm{bal}}}) = T$, then the exact distinguishability complexity $\mathcal{D}(\ket{S},\ket{\bar{S}}) = \mathcal{O}(T)$. Hence, from~\cref{eq:exact_case_omega_N} it follows that
\begin{align*}
    \mathcal{D}(\ket{S},\ket{\bar{S}}) \geq \tilde{\Omega}(N).
\end{align*}

\subsection{Approximate case}

We consider the same $\ket{y_{f_\mathrm{con}}}$ and $\ket{y_{f_\mathrm{bal}}}$ as in~\cref{eq:x_and_yf}, but will now investigate the relative circuit complexity when an error $\epsilon$ is allowed. Let us consider two balanced functions $f_\mathrm{bal},g_\mathrm{bal} : \omega \rightarrow \{0,1\}$.  From now on, we omit the ``bal''-subscript when possible, and write $f = f_\mathrm{bal}$ and $g = g_\mathrm{bal}$ to ease the presentation. We define $I_{f,g} = | \{x : f(x) = g(x) \}| $ and $N_{f,g} = | \{x : f(x) \neq g(x) \}| $.  The overlap between $\ket{y_{f}}$ and $\ket{y_{g}}$ is given by
\begin{align*}
    \abs{\bra{y_{f}}\ket{y_{g}}} = \frac{1}{N}\abs{I_{f,g}-N_{f,g}} = \frac{1}{N}\abs{2I_{f,g}-N} ,
\end{align*}
using the relation $I_{f,g} + N_{f,g} = N$. If we want $\abs{\bra{y_{f}}\ket{y_{g}}}  \leq \epsilon$, then we must have that
\begin{align*}
   \frac{1}{N}\abs{2I_{f,g}-N} \leq \epsilon.
\end{align*}
We consider two cases:
\begin{enumerate}[label=(\roman*)]
    \item $I_{f,g}\geq \frac{N}{2}$. We find
    \begin{align*}
        \frac{1}{N}({2I_{f,g}-N}) \leq \epsilon \quad \text{if and only if} \quad I_{f,g} \leq \frac{(1+\epsilon)N }{2}.
    \end{align*}
    \item $I_{f,g}\leq \frac{N}{2}$. Likewise, we have
    \begin{align*}
        \frac{1}{N}({N-2I_{f,g}}) \leq \epsilon \quad \text{if and only if} \quad I_{f,g} \geq  \frac{(1-\epsilon)N}{2}.
    \end{align*}
\end{enumerate}
Combining cases (i) and (ii), we see that a necessary and sufficient condition for $\abs{\bra{y_{f}}\ket{y_{g}}}  \leq \epsilon$ is 
\begin{align}
   \frac{(1-\epsilon)N}{2} \leq I_{f,g} \leq \frac{(1+\epsilon)N}{2} .
   \label{eq:condition_fg}
\end{align}
The question now is how large the set of functions $\{f\}$ is such that for any pair $f \neq g$ we have that~\cref{eq:condition_fg} holds. We can formulate this as a problem in extremal set theory in the following way. Let $\mathcal{F} = \{F\}$ be a family containing $N/2$-subsets from $\omega$. We say that if $x \in F$, then a corresponding function $f$ satisfies $f(x)=1$. For example, if for $n=2$ we have $f(00) = 1$, $f(01) = 0$, $f(10) = 0$ and $f(11) = 1$, then $F = \{00,11\}$. Hence, if $|F \cap G| = l$ for some $F,G \in \mathcal{F}$ with associated functions $f$, $g$, respectively, we must have that by definition
\begin{align*}
    |\{x : f(x) = g(x) = 1 \}| = l.
\end{align*}
However, we also know that for both $F$ and $G$ the remaining $N/2-l$ elements in each set must all be distinct: for those we have either $f(x)=1$ and $g(x) = 0$ or $f(x)=0$ and $g(x) = 1$. For all remaining $x$ which have $f(x) = g(x) = 0$, i.e., $x \notin F$ and $x \notin G$,  we know that there are a total of
\begin{align*}
    |\{x : f(x) = g(x) = 0 \}| = N -l - 2(N/2-l) = l.
\end{align*}
Combining both, we have 
\begin{align*}
    I_{f,g} = |\{x : f(x) = g(x) = 0 \}| + |\{x : f(x) = g(x) = 1 \}| =  2l.
\end{align*}
Therefore, the intersection between two sets $F$ and $G$ completely determines the possible overlaps between the corresponding quantum states $\ket{y_{f}}$ and $\ket{y_{g}}$: we have by the condition in~\cref{eq:condition_fg} that if for all $F,G \in \mathcal{F}$
\begin{align}
   \frac{(1-\epsilon)N}{4}  \leq |F \cap G| \leq \frac{(1+\epsilon)N}{4},
   \label{eq:crit_F_and_G}
\end{align}
then for all corresponding $f,g$ we have that $\abs{\bra{y_{f}}\ket{y_{g}}} \leq \epsilon$.

We will now use results from extremal set theory to prove our circuit lower bound. Let $\mathcal{H}$ be a family of subsets of $N$ elements, and $L$ be a set of non-negative integers. We say that $\mathcal{H}$ is $k$-uniform if $|H|=k$ for each $H \in \mathcal{H}$. We have that $\mathcal{H}$ is $L$-intersecting if $|H \cap G| \in L$ for each pair of distinct $G,H \in \mathcal{H}$. The following lemma was proven by Ray-Chaudhuri and Wilson and provides an upper bound on the size of $\mathcal{H}$ knowing the cardinality of $L$.

\begin{lemma}[Ray-Chaudhuri--Wilson~\cite{ray1975t}] 
    Let $\mathcal{H}$ be a $L$-intersecting $k$-uniform family of subsets of a set of $N$ elements, where $s = |L| \leq k$. Then
    \begin{align*}
        |\mathcal{H}| \leq \binom{N}{s} .
    \end{align*}
\label[lemma]{lem:RW_thm}
\end{lemma}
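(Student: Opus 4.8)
The plan is to run the standard linear-algebra (polynomial) method. First I would encode each $H \in \mathcal{H}$ by its characteristic vector $v_H \in \{0,1\}^N \subseteq \mathbb{R}^N$, so that $\langle v_H, v_{H'}\rangle = |H \cap H'|$; in particular $\langle v_H, v_H\rangle = k$ by $k$-uniformity. Writing $L = \{l_1,\dots,l_s\}$, I would attach to each $H$ the polynomial $f_H(x) = \prod_{l \in L}(\langle v_H, x\rangle - l)$ on $\mathbb{R}^N$. Since distinct $k$-sets meet in fewer than $k$ elements, no pairwise intersection ever equals $k$, so I may assume $k \notin L$ and hence that $c := \prod_{l\in L}(k-l)$ is nonzero.

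The first substantive step is linear independence of the $f_H$. Evaluating $f_H$ at $v_{H'}$ gives $\prod_{l\in L}(|H\cap H'| - l)$, which vanishes when $H' \neq H$ because then $|H\cap H'| \in L$, and equals $c \neq 0$ when $H' = H$. Thus $f_H(v_{H'}) = c\,\delta_{H,H'}$, so any dependence $\sum_H a_H f_H = 0$, evaluated at each $v_{H'}$, forces $a_{H'} = 0$. Consequently $|\mathcal{H}|$ is at most the dimension of the space spanned by the $f_H$. Since every $f_H$ has degree $s$ and we only evaluate on $\{0,1\}^N$, I would multilinearize using $x_t^2 = x_t$; this preserves all values $f_H(v_{H'})$, hence the independence, and places the $f_H$ inside the multilinear polynomials of degree at most $s$.

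The crux, and the step I expect to cause the most trouble, is improving the resulting crude bound $\sum_{i=0}^{s}\binom{N}{i}$ to the sharp $\binom{N}{s}$; this is exactly where $k$-uniformity must be used. Every point $v_H$ lies on the slice $\{x \in \{0,1\}^N : \sum_t x_t = k\}$, so on this domain I may substitute using $\sum_t x_t = k$. Concretely, for a multilinear monomial $m_T = \prod_{t\in T} x_t$ with $|T| = d$ I would use the identity $(k-d)\,m_T = \sum_{u \notin T} m_{T\cup\{u\}}$, valid on the slice, which expresses every degree-$d$ monomial with $d < s \le k$ as a combination of degree-$(d+1)$ monomials. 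Iterating this rewrites each $f_H$, as a function on the slice, inside the span of the $\binom{N}{s}$ multilinear monomials of degree exactly $s$; the evaluations $f_H(v_{H'}) = c\,\delta_{H,H'}$ are unchanged, so the $f_H$ remain linearly independent. Therefore $|\mathcal{H}|$ is bounded by the dimension $\binom{N}{s}$ of that span, as claimed. The only point requiring care is that $k - d \neq 0$ throughout the reduction, which holds because $d \le s-1 \le k-1$.
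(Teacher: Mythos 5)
The paper offers no proof of this lemma at all---it is imported as a black box from Ray-Chaudhuri and Wilson~\cite{ray1975t}---so there is nothing internal to compare your argument against; it has to be judged as a proof of the theorem itself. On that score it is correct, and it is one of the two standard linear-algebra proofs of the sharp bound. You take the ``push to the top degree on the slice'' route: after the usual diagonal-evaluation independence argument ($f_H(v_{H'}) = c\,\delta_{H,H'}$ with $c = \prod_{l \in L}(k-l) \neq 0$) and multilinearization, you use the identity $(k-d)\,m_T = \sum_{u \notin T} m_{T \cup \{u\}}$, valid on the slice $\sum_t x_t = k$ where all evaluation points live, to place every $f_H$ in the span of the $\binom{N}{s}$ multilinear monomials of degree exactly $s$; the degree bookkeeping is right ($k - d \geq k - s + 1 \geq 1$ throughout the iteration), and restricting to the slice preserves independence precisely because every $v_{H'}$ lies on it. The more commonly quoted alternative (Alon--Babai--Suzuki) instead adjoins to the $f_H$ the multilinearized polynomials $m_T(x)\left(\sum_t x_t - k\right)$ for all $|T| \leq s-1$ and proves the whole collection jointly independent, which yields $|\mathcal{H}| \leq \sum_{i \leq s}\binom{N}{i} - \sum_{i \leq s-1}\binom{N}{i} = \binom{N}{s}$; the two arguments are equivalent in substance, and yours is arguably cleaner. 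One pedantic point deserves a line: your ``I may assume $k \notin L$'' silently replaces $s$ by $s-1$ when $k \in L$, and $\binom{N}{s-1} \leq \binom{N}{s}$ only holds for $s \leq (N+1)/2$; in the complementary regime one has $N/2 < s \leq k$, where the bound is trivial anyway since $|\mathcal{H}| \leq \binom{N}{k} \leq \binom{N}{s}$, so the reduction is harmless but not free. (For the paper's application in the surrounding section this is moot, since there $k = N/2$ and $s < N/2$.)
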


In terms of $N$ and $s$, the lemma is easily shown to be tight by considering the family of all $s$-subsets of any set of size $N$ and $L=\{0,\dots,s-1\}$.

Using~\cref{lem:RW_thm}, we can lower bound the size of our family $\mathcal{F}$. At first glance, the bound of~\cref{lem:RW_thm} provides the wrong inequality: we are interested in a \emph{lower} bound on the size of $\mathcal{F}$, whilst~\cref{lem:RW_thm} provides an \emph{upper} bound. The key idea is that we can consider a family $\mathcal{H}$ for all subsets that do \emph{not} meet the condition of~\cref{eq:crit_F_and_G}, and upper bound the size of that family to lower bound the size of the remaining family of subsets. After all, for a fixed $k$ the union of $L_1$-intersecting and $L_2$-intersecting $k$-uniform families of subsets of $N$ elements results in the family of all $k$-uniform subsets if the union of $L_1$ and $L_2$ forms the set of all possible intersection sizes. In the following statements we will also use asymptotic $\omega(\cdot)$-notation, which is not to be confused with the universe $\omega$ (which does not have an argument).

\begin{lemma} 
    For $\epsilon = \omega\left(1/N\right) >0$, let $\mathcal{F}$ be an $L$-intersecting set of $N/2$-subsets of a set of $N$ elements, where 
    \begin{align*}
        L = \{ \lceil (1-\epsilon)N/4 \rceil, \lceil (1-\epsilon)N/4 \rceil  + 1,\dots, \lfloor  (1+\epsilon)N/4 \rfloor \}. 
    \end{align*}
    Then we have that
    \begin{align*}
        |\mathcal{F}| \geq \Omega(2^{N/2}) .
    \end{align*}
\label[lemma]{lem:number_int_subsets}
\end{lemma}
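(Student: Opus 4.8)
The plan is to translate the statement fully into extremal set theory and then obtain the lower bound on $|\mathcal{F}|$ by \emph{complementary counting}, since the Ray-Chaudhuri--Wilson inequality (\cref{lem:RW_thm}) is natively an \emph{upper} bound on intersecting families. Recall that we have already reduced the quantum-state condition $\abs{\bra{y_f}\ket{y_g}} \le \epsilon$ to the purely combinatorial requirement \cref{eq:crit_F_and_G}, namely that the associated $N/2$-subsets $F,G$ satisfy $|F\cap G|\in L$. So the goal is: among the $\binom{N}{N/2}=\Theta(2^N/\sqrt{N})$ subsets of size $N/2$, exhibit a subfamily $\mathcal{F}$, of size $\Omega(2^{N/2})$, in which \emph{every} pair of members has intersection landing in the middle band $L$. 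Rather than building $\mathcal{F}$ directly, I would bound from above the ``bad'' family $\mathcal{H}$ of $N/2$-subsets whose pairwise intersections \emph{avoid} $L$ (i.e.\ lie in the complementary set $L^c$), and then argue that the total count minus this bad count still leaves exponentially many sets.

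The next step is to set up $L^c$ so that \cref{lem:RW_thm} can be applied with a controllably small parameter $s=|L^c|$. The attainable intersection sizes of two distinct $N/2$-subsets fill $\{0,1,\dots,N/2-1\}$, and $L^c$ splits into a ``low'' band $|F\cap G|<\tfrac{(1-\epsilon)N}{4}$ and a ``high'' band $|F\cap G|>\tfrac{(1+\epsilon)N}{4}$. Here I would exploit the complementation symmetries of equal-size subsets, $|\bar F\cap\bar G|=|F\cap G|$ and $|F\cap\bar G|=N/2-|F\cap G|$, to map the high band onto the low band, so that it suffices to control a single sub-band; applying \cref{lem:RW_thm} to an $L^c$-intersecting family then yields $|\mathcal{H}|\le\binom{N}{s}$. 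Finally, a routine binary-entropy estimate for $\binom{N}{N/2}$ and $\binom{N}{s}$ gives $|\mathcal{F}|\ge\binom{N}{N/2}-|\mathcal{H}|=2^{\Omega(N)}=\Omega(2^{N/2})$, where the hypothesis $\epsilon=\omega(1/N)$ is what guarantees that the band $L$ (after taking floors and ceilings) is nonempty and that $L^c$ is genuinely separated from the full set of intersection sizes.

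The step I expect to be the main obstacle is precisely this reversal of the inequality direction: \cref{lem:RW_thm} only ever bounds an intersecting family from above, so the entire argument hinges on arranging the complementary family $\mathcal{H}$ to be a legitimate $L^c$-intersecting family and on verifying that the resulting upper bound $\binom{N}{s}$ sits \emph{strictly below} $\binom{N}{N/2}$ by an exponential factor, so that the remainder is still $\Omega(2^{N/2})$. This is where the concentration of a random intersection size around its mean $N/4$ must be combined with the extremal bound, and where the floor/ceiling bookkeeping in the definition of $L$ and $L^c$ has to be handled carefully. A cleaner fallback, should the direct subtraction prove too lossy, is a Tur\'an-type independent-set bound on the ``conflict graph'' whose edges are the bad pairs: the maximum degree there is a sum of Vandermonde terms concentrated away from the band, giving an independent set (hence an $L$-intersecting family) of size $\binom{N}{N/2}/(D+1)=2^{\Omega(N)}$, which reaches the same conclusion without relying on the exact reversal of \cref{lem:RW_thm}.
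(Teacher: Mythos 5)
Your main route is, in fact, exactly the paper's proof: complement the band by setting $L' = \{0,1,\dots,N/2\} \setminus L$, note $s = |L'| \le 3 + \frac{N}{2}(1-\epsilon) \le N/2$ once $\epsilon \ge 6/N$, apply \cref{lem:RW_thm} to an $L'$-intersecting family $\mathcal{H}$ to get $|\mathcal{H}| \le \binom{N}{s}$, and subtract from $\binom{N}{N/2}$. (Your symmetry step merging the low and high bands of $L^c$ is unnecessary: $|L'| \le k$ already holds, so Ray-Chaudhuri--Wilson applies to the whole complement in one shot.) However, the obstacle you flagged is a genuine gap, and neither your sketch nor the paper closes it: being $L'$-intersecting is a property of a \emph{family}, not of individual sets, so ``the bad family $\mathcal{H}$ of subsets whose pairwise intersections avoid $L$'' is not a well-defined object one can subtract. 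In the conflict graph on all $\binom{N}{N/2}$ vertices, with edges on pairs whose intersection lies outside $L$, an $L$-intersecting family is an independent set while an $L'$-intersecting family is a clique; \cref{lem:RW_thm} bounds the clique number, and removing a maximum clique from the vertex set does not leave an independent set --- there is no general inequality of the form $\alpha \ge |V| - \omega$. So the step $|\mathcal{F}| \ge \binom{N}{N/2} - \binom{N}{s}$ is unjustified both in your proposal and in the paper's own proof; you were right to identify it as the crux, but you leave it open rather than resolving it.

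Worse, the step cannot be repaired in the full stated parameter range, because the existential content of the lemma fails for subconstant $\epsilon$: applying \cref{lem:RW_thm} directly to $L$ itself, with $|L| \le \epsilon N/2 + 1 \le N/2$, caps \emph{every} $L$-intersecting family at $\binom{N}{\epsilon N/2 + 1}$, which for, e.g., $\epsilon = 1/\log_2 N$ (certainly $\omega(1/N)$) is $2^{O(N \log\log N/\log N)} = o(2^{N/2})$. Your Turán fallback runs into the same wall: for small $\epsilon$ the conflict graph is dense, not sparse. For a fixed $F$, the number of $G$ with $|F\cap G| \in L$ is $\sum_{l \in L} \binom{N/2}{l}^2 = O(\epsilon\sqrt{N}) \binom{N}{N/2}$, so for $\epsilon = o(1/\sqrt{N})$ the bad degree is $D = (1-o(1))\binom{N}{N/2}$ and $\binom{N}{N/2}/(D+1) = O(1)$; for larger $\epsilon$ a hypergeometric tail bound gives $D \approx e^{-\Theta(\epsilon^2 N)}\binom{N}{N/2}$, so the independent set has size $e^{\Theta(\epsilon^2 N)}$, which reaches $\Omega(2^{N/2})$ only when $\epsilon$ is a sufficiently large constant. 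In other words, your fallback (like a random-family union bound with Chernoff) \emph{does} yield a correct proof for constant $\epsilon$ --- which is all the downstream circuit lower bound genuinely needs --- but nothing proves the statement for general $\epsilon = \omega(1/N)$, and your claim that the conflict-graph degree is ``concentrated away from the band'' has the density backwards precisely in the subconstant-$\epsilon$ regime where the subtraction trick was supposed to do the work.
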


\begin{proof}
Let $\mathcal{H}$ be an $L'$-intersecting $N/2$-uniform family of subsets of a set of $N$ elements with $L' = \{0,1,\dots,N/2\} \setminus L$. We have that the cardinality of $L'$, i.e., $s = |L'|$, can be upper bounded as
\begin{align*}
    s &\leq 1+N/2 - ((1+\epsilon)N/4  - (1-\epsilon)N/4-2) \\
    &= N(1-\epsilon)/2+2\\
    &= 3 + \frac{N}{2} \left(1-\epsilon\right).
\end{align*}
For $\epsilon \geq \frac{6}{N}$ we have $s \leq N/2$ and, using~\cref{lem:RW_thm}, we get
\begin{align*}
    |\mathcal{H}| \leq \binom{N}{s} \leq \binom{N}{3 + \frac{N}{2} \left(1-\epsilon\right)} .
\end{align*}
This means that
\begin{align*}
    |\mathcal{F}| = \binom{N}{N/2}- |\mathcal{H}|
    \geq \binom{N}{N/2} - \binom{N}{3 + \frac{N}{2} \left(1-\epsilon\right)} 
    = \Omega(2^{N/2}) ,
\end{align*}
if $N(1-\epsilon)/2+3 < N/2$, which holds asymptotically if $\epsilon = \omega\left(1/N\right)$.
\end{proof}

This implies that for $\epsilon = \omega\left(1/N\right)  >0$ there are a total of $ \Omega(2^{N/2})$ functions $f = f_\mathrm{bal}$ such that the corresponding  balanced phase states from the set $\Psi_\mathrm{bal}^{\epsilon} = \{\ket{y_{f_\mathrm{bal}}}\}$ all have at most $\epsilon$ pairwise overlap.

\begin{theorem} 
    For any $\epsilon = \omega\left(1/N\right)$, there exists a subset $S \in \mathcal{S}_K$, $K=N/2$, such that
    \begin{align*}
        \mathcal{D}_\epsilon(\ket{S},\ket{\bar{S}}) = \tilde{\Omega}(N).
    \end{align*}
\end{theorem}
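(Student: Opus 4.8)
The plan is to lift the warm-up argument to the approximate regime: instead of ``each circuit exactly prepares one $\ket{y_{f_\mathrm{bal}}}$'', I will show that each circuit's output can be $(1-\epsilon)$-close to at most one member of a well-separated family, turn the resulting relative-complexity lower bound into a swap-complexity lower bound, and finally convert it into a distinguishability lower bound via~\cref{lem:S_versus_D}. To begin, I fix the separated family: by~\cref{lem:number_int_subsets}, for $\epsilon = \omega(1/N)$ there are $\Omega(2^{N/2})$ balanced functions $f$ whose phase states $\ket{y_f}$ satisfy $\abs{\bra{y_f}\ket{y_g}} \leq \epsilon$ pairwise. Each such $f$ yields a subset $S = \{x : f(x) = 0\}$ of cardinality $K=N/2$, and the identities $\ket{S} = \tfrac{1}{\sqrt2}(\ket{y_{f_\mathrm{con}}} + \ket{y_f})$ and $\ket{\bar S} = \tfrac{1}{\sqrt2}(\ket{y_{f_\mathrm{con}}} - \ket{y_f})$ exhibit $\ket{S},\ket{\bar S}$ as the conjugate pair of the orthogonal states $\ket{y_{f_\mathrm{con}}},\ket{y_f}$.

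Next I run the counting argument on the relative complexity $\mathcal C_\epsilon(\ket{y_{f_\mathrm{con}}},\ket{y_f})$. Let $\mathcal G$ be a gate set with $g = n^{\mathcal O(1)}$ choices of gate and placement, so there are $\mathcal O(g^M)$ circuits of size at most $M$, each producing a single output state. The crucial claim is that any one circuit $C$ can satisfy $\abs{\bra{y_f}\bra{0 \dots 0} C \ket{y_{f_\mathrm{con}}}\ket{0 \dots 0}} \geq 1-\epsilon$ for at most one $f$ from the separated family: if a common output were $(1-\epsilon)$-close to two distinct $\ket{y_f},\ket{y_g}$, the triangle inequality for the Fubini--Study angle would force $\abs{\bra{y_f}\ket{y_g}} \geq \cos(2\arccos(1-\epsilon)) = 1 - 4\epsilon + 2\epsilon^2$, contradicting $\abs{\bra{y_f}\ket{y_g}} \leq \epsilon$ as long as $\epsilon$ is below a small constant (concretely $1-4\epsilon > \epsilon$, i.e.\ $\epsilon < 1/5$). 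Hence each circuit covers at most one function, so if all $\Omega(2^{N/2})$ functions had relative complexity at most $M$ we would need $\mathcal O(g^M) \geq \Omega(2^{N/2})$, forcing $M \geq \Omega(N/2)/\log g = \tilde\Omega(N)$. Therefore some $f^\star$ has $\mathcal C_\epsilon(\ket{y_{f_\mathrm{con}}},\ket{y_{f^\star}}) = \tilde\Omega(N)$.

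Finally I lift the bound. Using $\mathcal S_\epsilon \geq \mathcal C_\epsilon$, I get $\mathcal S_\epsilon(\ket{y_{f_\mathrm{con}}},\ket{y_{f^\star}}) = \tilde\Omega(N)$. Applying the second direction of~\cref{lem:S_versus_D} with $\ket{a} = \ket{y_{f_\mathrm{con}}}$ and $\ket{b} = \ket{y_{f^\star}}$---whose conjugate states are precisely $\ket{S}$ and $\ket{\bar S}$ for $S = \{x : f^\star(x)=0\}$---yields $\mathcal S_\epsilon = \mathcal O(\mathcal D_\epsilon(\ket{S},\ket{\bar S}))$, and hence $\mathcal D_\epsilon(\ket{S},\ket{\bar S}) = \Omega(\mathcal S_\epsilon) = \tilde\Omega(N)$, which is the claim.

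The step I expect to be the main obstacle is the covering claim together with the bookkeeping of the three error parameters: the overlap threshold of~\cref{lem:number_int_subsets}, the approximation error in $\mathcal C_\epsilon$, and the error propagated by~\cref{lem:S_versus_D} must all be identified with the same $\epsilon$, and the Fubini--Study triangle inequality only precludes double-covering when $\epsilon$ stays below a constant. For larger $\epsilon$ one would replace ``at most one'' by a packing bound on the number of mutually $(1-\epsilon)$-close yet pairwise $\epsilon$-separated states; since such a bound is at most $\poly(N)$, it costs only a subleading factor and the conclusion $\tilde\Omega(N)$ survives. I would present the clean small-$\epsilon$ case in full and remark on this extension.
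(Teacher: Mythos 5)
Your proposal is correct and takes essentially the same route as the paper's proof: the $\epsilon$-separated family of $\Omega(2^{N/2})$ balanced phase states from \cref{lem:number_int_subsets}, a gate-counting argument showing some balanced function forces swap complexity $\tilde{\Omega}(N)$ (via $\mathcal{S}_\epsilon \geq \mathcal{C}_\epsilon$), and the second direction of \cref{lem:S_versus_D} to transfer this to $\mathcal{D}_\epsilon(\ket{S},\ket{\bar{S}})$. Your explicit Fubini--Study triangle-inequality argument that one circuit output can be $(1-\epsilon)$-close to at most one family member (for $\epsilon$ below a small constant) simply makes rigorous a step the paper asserts without proof, so the two arguments coincide in substance.
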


\begin{proof}
By~\cref{lem:number_int_subsets}, for any $\epsilon = \omega\left(1/N\right)$, there are $\Omega(2^{N/2})$ possible balanced functions $f_\mathrm{bal}$ such that all states $\ket{y_{f_\mathrm{bal}}}$ have pairwise overlap $\leq \epsilon$. Since any circuit can only $\epsilon$-approximately swap $\ket{y_{f_\mathrm{con}}}$ to a single state $\ket{y_{f_\mathrm{bal}}} \in \Psi_\mathrm{bal}^{\epsilon}$, we need to consider $ \Omega(2^{N/2})$ different possible circuits. Again, let $\mathcal{G}$ be any gate set with $g = n^{\mathcal{O}(1)}$ possible choices of gates and qubit indices on which a single (multi-qubit) gate acts. Starting from $\ket{y_{f_\mathrm{con}}}$, the total number of states that can be constructed by circuits using $M$ gates from $\mathcal{G}$ is at most $g^{M}$. Thus, for some balanced function, the swapper circuit must have $M = \tilde{\Omega}(N)$ gates, which implies $ \mathcal{S}_\epsilon \geq \Omega(N)$. By~\cref{lem:S_versus_D}, the result immediately follows.
\end{proof}

From~\cref{prop:circuit_complexity_from_D}, the following corollary immediately follows.

\begin{corollary}
   For any $\epsilon = \omega\left(1/N\right)$, there exists a subset $S \in \mathcal{S}_K$, $K=N/2$, such that
    \[
        \mathcal{C}_\epsilon(\ket{S}) = \tilde{\Omega}(N).
    \]
\label[corollary]{cor:circuit_lower_bound_S}
\end{corollary}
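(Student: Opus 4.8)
The plan is to obtain the circuit-complexity lower bound as an immediate consequence of the distinguishability lower bound established in the preceding theorem, using the conversion supplied by \cref{prop:circuit_complexity_from_D}. The first structural fact I would record is that $\ket{S}$ and $\ket{\bar{S}}$ are orthogonal: they are supported on the disjoint sets $S$ and $\bar{S}$, so $\bra{S}\ket{\bar{S}} = 0$. This orthogonality is precisely the hypothesis needed to invoke \cref{prop:circuit_complexity_from_D} with $\ket{a} := \ket{S}$ and $\ket{b} := \ket{\bar{S}}$. I would also note the mild standing assumption required there, namely that the gate set is finite and self-inverse, matching the gate set $\mathcal{G}$ used throughout this section.

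Next I would fix a target error $\epsilon = \omega(1/N)$ for the circuit complexity and observe that \cref{prop:circuit_complexity_from_D} reduces the claim to a distinguishability bound at error $4\epsilon$: it states that $\mathcal{D}_{4\epsilon}(\ket{S},\ket{\bar{S}}) = T$ implies $\mathcal{C}_\epsilon(\ket{S}) \geq T$. Hence it suffices to show $\mathcal{D}_{4\epsilon}(\ket{S},\ket{\bar{S}}) = \tilde{\Omega}(N)$. The one point to verify is that substituting $4\epsilon$ for $\epsilon$ stays inside the admissible regime: since multiplying by a constant preserves the class $\omega(1/N)$, we have $4\epsilon = \omega(1/N)$ whenever $\epsilon = \omega(1/N)$, so the preceding theorem applies verbatim with $4\epsilon$ in place of its error parameter and produces a subset $S \in \mathcal{S}_{N/2}$ with $\mathcal{D}_{4\epsilon}(\ket{S},\ket{\bar{S}}) = \tilde{\Omega}(N)$.

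Finally I would chain the two bounds: for this same $S$, \cref{prop:circuit_complexity_from_D} gives $\mathcal{C}_\epsilon(\ket{S}) \geq \mathcal{D}_{4\epsilon}(\ket{S},\ket{\bar{S}}) = \tilde{\Omega}(N)$, which is the statement. Since each step is essentially a substitution, I do not anticipate a genuine technical obstacle; the only care required is bookkeeping—tracking the factor of $4$ between the distinguishability error and the circuit error, and confirming this factor does not push the parameter out of the $\omega(1/N)$ window in which the distinguishability bound holds. All the substance lives in the preceding theorem (through the Ray-Chaudhuri--Wilson counting of \cref{lem:number_int_subsets}) and in \cref{prop:circuit_complexity_from_D}, so the corollary itself is a one-line deduction.
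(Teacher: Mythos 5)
Your proposal is correct and follows exactly the route the paper intends: the paper derives \cref{cor:circuit_lower_bound_S} as an immediate consequence of the preceding distinguishability theorem combined with \cref{prop:circuit_complexity_from_D}, and your only additions---noting the orthogonality $\bra{S}\ket{\bar{S}}=0$ and checking that the factor of $4$ keeps the error parameter inside the $\omega(1/N)$ regime---are precisely the bookkeeping the paper leaves implicit in its one-line deduction. No gaps; the argument stands as written.
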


As a final remark,~\cref{cor:circuit_lower_bound_S} can also qualitatively be obtained as a corollary of~\cref{thm:classical_circuit_lb}, as any provided samples can be hardcoded into a circuit.

\end{document}